\theoremstyle{plain}
\newtheorem{thm}{Theorem}
\newtheorem{cor}[thm]{Corollary}
\newtheorem{lem}[thm]{Lemma}
\newtheorem{pro}[thm]{Proposition}
\theoremstyle{definition}
\newtheorem{defn}[thm]{Definition}
\newtheorem{remark}[thm]{Remark}
\newcommand{\nc}{\newcommand}
\nc{\R}{R} % generalized robustness
\nc{\Rl}{\underline{R}} % l.s.c. robustness
\nc{\Rs}{R^{s}} % standard robustness
\nc{\Rsl}{\underline{R}^{s}} % l.s.c. standard robustness
\nc{\eq}[1]{(\hyperref[eq:#1]{\ref*{eq:#1}})}
\nc{\eqrange}[2]{Eqs. (\eq{#1}--\eq{#2})}
\renewcommand{\sec}[1]{\hyperref[sec:#1]{Section~\ref*{sec:#1}}}
\nc{\thrm}[1]{\hyperref[thrm:#1]{Theorem~\ref*{thrm:#1}}}
\nc{\lemm}[1]{\hyperref[lemm:#1]{Lemma~\ref*{lemm:#1}}}
\nc{\prop}[1]{\hyperref[prop:#1]{Proposition~\ref*{prop:#1}}}
\nc{\corr}[1]{\hyperref[corr:#1]{Corollary~\ref*{corr:#1}}}
\nc{\fig}[1]{\hyperref[fig:#1]{~\ref*{fig:#1}}}
\nc{\deff}[1]{\hyperref[deff:#1]{~\ref*{deff:#1}}}
\nc{\A}{\mathcal{A}}
\nc{\C}{\mathcal{C}}
\nc{\E}{\mathcal{E}}
\nc{\N}{\mathcal{N}}
\nc{\U}{\mathcal{U}}
\nc{\T}{\mathcal{T}}
\nc{\Q}{\mathcal{Q}}
\nc{\G}{\mathcal{G}}
\nc{\D}{\mathcal{D}}
\nc{\I}{\mathcal{I}}
\nc{\F}{\mathcal{F}}
\nc{\f}{\mathscr{f}}
\renewcommand{\H}{\mathcal{H}}
\nc{\mL}{\mathcal{L}}
\nc{\rl}{\rangle\langle}
\nc{\mg}{\mathcal{G}}
\nc{\M}{\mathcal{M}}
\renewcommand{\O}{\mathcal{O}}
\nc{\B}{\mathcal{B}}
\nc{\K}{\mathcal{K}}
\renewcommand{\S}{\mathcal{S}}
\nc{\X}{\mathcal{X}}
\nc{\Y}{\mathcal{Y}}
\nc{\Z}{\mathcal{Z}}
\nc{\mbF}{\mathbb{F}}
\nc{\mbI}{\mathbb{I}}
\nc{\mbM}{\mathbb{M}}
\nc{\mbN}{\mathbb{N}}
\nc{\hn}[1]{\|#1\|^H_{1\rightarrow 1}}
\nc{\ro}[1]{\langle\!\langle#1|}
\nc{\V}{\mathcal{V}}
\renewcommand{\*}{\textup{*}}
\DeclareMathOperator{\cone}{cone}
\DeclareMathOperator{\conv}{conv}
\DeclareMathOperator{\sint}{int}
\DeclareMathOperator{\cl}{cl}
\DeclareMathOperator{\ran}{ran}
\nc{\cleq}{\preceq}
\nc{\cgeq}{\succeq}
\nc{\cle}{\prec}
\nc{\tth}[0]{\textsuperscript{th}}
\nc{\st}[0]{\textsuperscript{st}}
\nc{\nd}[0]{\textsuperscript{nd}}
\nc{\rd}[0]{\textsuperscript{rd}}
\nc{\RR}{\mathbb{R}}
\nc{\CC}{\mathbb{C}}
\DeclareMathOperator{\Tr}{Tr}
\nc{\Favg}{\overline{F}}
\nc{\Paulis}{{\matheu P}}
\nc{\Clifs}{{\matheu C}}
\nc{\Hilb}{{\matheu H}}
\nc{\id}{\mathbbm{1}}
\let\mathscr\relax
\DeclareFontFamily{U}{mathc}{}
\DeclareFontShape{U}{mathc}{m}{it}%
{<->s*[1.03] mathc10}{}
\DeclareMathAlphabet{\mathscr}{U}{mathc}{m}{it}
\nc{\sop}[1]{{\mathcal #1}}
\nc{\PL}[1]{{#1}^{P\!L}}
\nc{\BHn}{{{\mathcal B}({\mathcal H}^{\otimes n})}}
\nc{\ketbra}[2]{|{#1}\rangle\!\langle{#2}|}
\nc{\kket}[1]{|{#1}\rangle\!\rangle}
\nc{\bbra}[1]{\langle\!\langle{#1}|}
\nc{\bbrakket}[2]{\langle\!\langle{#1}|{#2}\rangle\!\rangle}
\nc{\kketbbra}[2]{|{#1}\rangle\!\rangle\!\langle\!\langle{#2}|}
\nc{\no}{\nonumber\\}
\nc{\ba}{\begin{eqnarray}}
\nc{\ea}{\end{eqnarray}}
\nc{\bann}{\begin{eqnarray*}}
\nc{\eann}{\end{eqnarray*}}
\nc{\bal}{\begin{equation}\begin{aligned}}
\nc{\eal}{\end{aligned}\end{equation}}
\nc{\dm}[1]{\ketbra{#1}{#1}}
\nc{\bs}[1]{\boldsymbol{#1}}
\nc{\cbraket}[1]{\abs{\braket{#1}}}
\nc{\txr}[1]{{\color{red!85!black} #1}}
\nc{\txb}[1]{{\color{blue!65!white} #1}}
\nc{\note}[1]{{\color{blue!65!red} #1}}
\newcolumntype{L}[1]{>{\raggedright}p{#1}}
\newcolumntype{C}[1]{>{\centering}p{#1}}
\newcolumntype{R}[1]{>{\raggedleft}p{#1}}
\newcolumntype{D}{>{\centering\arraybackslash}X}
\nc{\TT}{\mathcal{L}}
\nc{\tf}{\tilde{F}}
\renewcommand{\*}{\textup{*}}
\renewcommand{\>}{\right\rangle}
\DeclarePairedDelimiter\abs{\lvert}{\rvert}%
\nc{\norm}[1]{\left\|#1\right\|}
\let\V\V
\let\C\C
\nc{\NN}{\mathbb{N}}
\nc{\proj}[1]{\ket{#1}\!\bra{#1}}
\nc{\sbar}{\;\rule{0pt}{9.5pt}\right|\;}
\nc{\lset}{\left\{\left.}
\nc{\rset}{\right\}}
\nc{\ve}{\varepsilon}
\nc{\lsetr}{\left\{\,}
\nc{\rsetr}{\right.\right\}}
\nc{\sbarr}{\,\rule{0pt}{9.5pt}\left|\;}
\let\txb\relax
\let\note\relax
\begin{document}

\title{Framework for resource quantification in infinite-dimensional general probabilistic theories}

%%%

\author{Ludovico Lami}
\thanks{These authors contributed equally to this work.}
\affiliation{Institut f\"ur Theoretische Physik und IQST, Universit\"at Ulm, Albert-Einstein-Allee 11, D-89069 Ulm, Germany}
\email{ludovico.lami@gmail.com}
%%%
\author{Bartosz Regula}
\thanks{These authors contributed equally to this work.}
\affiliation{School of Physical and Mathematical Sciences, Nanyang Technological University, 637371, Singapore}
\email{bartosz.regula@gmail.com}
%%%
\author{Ryuji Takagi}
\affiliation{Center for Theoretical Physics and Department of Physics, Massachusetts Institute of Technology, Cambridge, Massachusetts 02139, USA}
\affiliation{School of Physical and Mathematical Sciences, Nanyang Technological University, 637371, Singapore}
%%%
\author{Giovanni Ferrari}
\affiliation{Dipartimento di Fisica e Astronomia Galileo Galilei,
Universit\`a degli studi di Padova, via Marzolo 8, 35131 Padova, Italy}
\affiliation{Institut f\"ur Theoretische Physik und IQST, Universit\"at Ulm, Albert-Einstein-Allee 11, D-89069 Ulm, Germany}

%%%

\begin{abstract}%
Resource theories provide a general framework for the characterization of properties of physical systems in quantum mechanics and beyond. Here, we introduce methods for the quantification of resources in general probabilistic theories (GPTs), focusing in particular on the technical issues associated with infinite-dimensional state spaces. 
We define a universal resource quantifier based on the robustness measure, and show it to admit a direct operational meaning: in any GPT, it quantifies the advantage that a given resource state enables in channel discrimination tasks over all resourceless states. We show that the robustness acts as a faithful and strongly monotonic measure in any resource theory described by a convex and closed set of free states, and can be computed through a convex conic optimization problem. 

Specializing to continuous-variable quantum mechanics, we obtain additional bounds and relations, allowing an efficient computation of the measure and comparison with other monotones. We demonstrate applications of the robustness to several resources of physical relevance: optical nonclassicality, entanglement, genuine non-Gaussianity, and coherence. In particular, we establish exact expressions for various classes of states, including Fock states and squeezed states in the resource theory of nonclassicality and general pure states in the resource theory of entanglement, as well as tight bounds applicable in general cases. %We compare our quantifier to a related monotone known as the standard robustness, and show that the latter often fails to be a well-behaved resource measure in infinite dimensions.
\end{abstract}

\maketitle

%%%%%%%%%%%%%%%%%%%%%%%%%%%%%%%%%%%%%%%%%%%%%%%%%%%%%%%%%%%%%%%%%%%%%%%%%%%%%%%%%%%%%%%%%%%%%%%%%
%%%%%%%%%%%%%%%%%%%%%%%%%%%%%%%%%%%%%%%%%%%%%%%%%%%%%%%%%%%%%%%%%%%%%%%%%%%%%%%%%%%%%%%%%%%%%%%%%
%%%%%%%%%%%%%%%%%%%%%%%%%%%%%%%%%%%%%%%%%%%%%%%%%%%%%%%%%%%%%%%%%%%%%%%%%%%%%%%%%%%%%%%%%%%%%%%%%

\section{Introduction}

The success of quantum mechanics in fields such as computation, communication, and information processing owes to the fact that certain properties of quantum systems, dubbed \emph{resources}, can be exploited to enable significant advantages over purely classical methods in practical tasks. A central aim in the investigation of such properties is their \emph{quantification}, that is, the development of ways to measure and compare the resources contained in quantum systems~\cite{chitambar_2019}. This can be done in two seemingly different ways. On the one hand, we can identify tasks of particular importance and ask: how useful is the given state in performing this task?~\cite{horodecki_2012,brandao_2015,chitambar_2019,takagi_2019-2} On the other hand, we can approach resource quantification from a very general, abstract perspective, and establish resource measures which apply to broad classes of resources~\cite{delrio_2015,coecke_2016,regula_2018,kostecki_2019,takagi_2019,chitambar_2019,gonda_2019}. The latter approach is attractive as it can reveal common features connecting very different physical phenomena and is immediately applicable in a variety of settings, but it often lacks a direct relation with the operational significance of resources. The recent years have seen remarkable progress in connecting the two approaches, showing that resource measures which directly quantify practical advantages can be defined in broad classes of resource theories~\cite{brandao_2015,anshu_2018-1,takagi_2019-2,uola_2019-1,takagi_2019,sparaciari_2020,liu_2019,regula_2020,kuroiwa_2020,seddon_2020,ducuara_2019,uola_2019-2}.

In the search for a universal axiomatic characterization of resources, it is then appealing to ask whether these general quantitative methods can apply even in physical theories beyond quantum mechanics~\cite{delrio_2015,coecke_2016}. The framework of \textit{general probabilistic theories} (GPTs)~\cite{ludwig_1985,hartkamper_1974,davies_1970,lami_2018-1} provides a formalism which encompasses both quantum and classical probability theory among a myriad of more general theories. \txb{Its generality makes it an ideal candidate to study, for instance, hypothetical non-orthodox modifications of quantum mechanics~\cite{barnum_2014, vandam_2013}, or extensions of it that include novel physical phenomena --- the archetypal example being gravity~\cite{Mielnik-general-quantum, galley_2020}. 
Fortunately, resources can be studied also at this level of generality}~\cite{Chiribella_2015entanglement, Chiribella_2017microcanonical, kostecki_2019, takagi_2019, lami_2017, aubrun_2019, aubrun_2019b}. In particular, Ref.~\cite{takagi_2019} extended the connection between resource quantification and advantages in operational tasks also to resource theories in GPTs.

A major issue with the aforementioned approaches is that they typically apply only to finite-dimensional state spaces, or they place significant restrictions on the resources in consideration. For instance, previous works on resources in infinite-dimensional quantum mechanics are limited to the restricted Gaussian framework~\cite{lami_2018,lami_2020} or make strong technical assumptions such as compactness of the relevant sets~\cite{kuroiwa_2020}. This prevents us from being able to use them in the description of some of the most fundamental physical settings, such as quantum optical systems or GPTs in general Banach spaces. An extension of the known results is non-trivial: the previously studied methods explicitly make use of the simplified topological structure of finite-dimensional spaces, and the optimization problems which define the relevant resource measures have not been characterized in the considerably more complex infinite-dimensional setting.

In this work, we develop a general method of quantifying resources in infinite dimensions, applicable both to quantum mechanics and broader GPTs, and directly connected with the performance advantage in practical tasks. The framework is based on a measure called the \emph{robustness}, which found use in a variety of finite-dimensional settings in quantum mechanics~\cite{vidal_1999, harrow_2003, brandao_2008-1, brandao_2010, napoli_2016, regula_2018, anshu_2018-1, takagi_2019-2, takagi_2019, liu_2019, regula_2020, seddon_2020} and in GPTs~\cite{takagi_2019}, but hitherto has not been considered in infinite-dimensional spaces. In particular, we introduce a variant of the robustness applicable to infinite-dimensional theories and show that it directly quantifies the maximal advantage that a given state provides in a family of channel discrimination tasks. This extends from the finite-dimensional cases a deep connection between resource quantification and the fundamental operational tasks of discrimination. The results apply to any resource theory described by a convex and closed set of states in any GPT, ensuring immediate applicability to the vast majority of physical settings of interest. Our methods rely on a novel extension of the optimization methods which underlie the robustness measure to arbitrary Banach spaces, explicitly considering in detail the issues which arise in such a generalization, and obtaining a characterization of properties such as strong duality.  We further use our results to show that the robustness satisfies all of the requirements commonly desired from a valid resource measure~\cite{vidal_2000,chitambar_2019}, including faithfulness and strong monotonicity.

Specializing to continuous-variable quantum mechanics, we introduce methods to simplify the evaluation of the robustness in several cases, and show that the measure can be computed analytically for some of the most relevant examples of states in resource theories of nonclassicality, entanglement, coherence, and genuine non-Gaussianity. We in particular establish exact expressions or {tight} bounds for: Fock states, squeezed states, and cat states in the resource theory of nonclassicality, as well as all pure states and an important mixed state based on the Hilbert operator in the resource theories of entanglement and coherence. We compare the robustness to a related resource measure often employed in finite dimensions, the standard robustness~\cite{vidal_1999,takagi_2019}, and show that the latter is not a well-behaved monotone in important continuous-variable resource theories such as entanglement, nonclassicality, and coherence. Specifically, we provide examples of states where our robustness measure is finite and well-behaved but the standard robustness diverges to infinity, and show in particular that this behavior affects most physically accessible states in the resource theory of nonclassicality.

All of our findings provide evidence for the suitability of our infinite-dimensional robustness measure as a universal resource quantifier in general resource theories, satisfying desirable properties and computable in many cases.

This work also serves as the companion to the paper~\cite{our_main}, which deals with resource quantification in continuous-variable quantum mechanics. Here we provide a derivation and extended discussion of the results stated in~\cite{our_main}, along with several additional developments specific to quantum theory. The general framework in GPTs can be thought of as a generalization of the concepts introduced in~\cite{our_main}.

The organization of the paper is as follows. Section~\ref{sec:prelim} contains a comprehensive introduction to all of the relevant concepts, including GPTs, resource theories, and resource quantification. In Section~\ref{sec:defining_robustness}, we discuss the problems concerning defining a robustness measure in infinite dimensions, and establish a characterization of the relevant conic optimization problems. We show in Section~\ref{sec:robustness_properties_monotone} that our proposed variant of the robustness indeed satisfies all of the requirements for a valid resource measure in any convex resource theory. Section~\ref{sec:discrimination} establishes the robustness as the {figure of merit in} channel discrimination tasks. In Section~\ref{sec:quantum} we discuss simplifications which occur when the GPT is chosen to be infinite-dimensional quantum mechanics, establishing several bounds and expression for the robustness, as well as characterizing strong duality. Finally, in Section~\ref{sec:examples} we explicitly apply the robustness to the resource theories of nonclassicality (Sec.~\ref{sec:nonclassicality}), entanglement (Sec.~\ref{sec:entanglement}), genuine non-Gaussianity (Sec.~\ref{sec:nongaussianity}), and coherence (Sec.~\ref{sec:coherence}), obtaining a multitude of resource-specific results.

%%%%%%%%%%%%%%%%%%%%%%%%%%%%%%%%%%%%%%%%%%%%%%%%%%%%%%%%%%%%%%%%%%%%%%%%%%%%%%%%%%%%%%%%%%%%%%%%%
%%%%%%%%%%%%%%%%%%%%%%%%%%%%%%%%%%%%%%%%%%%%%%%%%%%%%%%%%%%%%%%%%%%%%%%%%%%%%%%%%%%%%%%%%%%%%%%%%
%%%%%%%%%%%%%%%%%%%%%%%%%%%%%%%%%%%%%%%%%%%%%%%%%%%%%%%%%%%%%%%%%%%%%%%%%%%%%%%%%%%%%%%%%%%%%%%%%

\section{Preliminaries}\label{sec:prelim}

We first introduce some basic notation that will be used throughout the paper.

Given a real topological vector space $\V$, we will use $\V\*$ to denote its continuous dual space, that is, the space of continuous linear functionals $F\colon \V \to \RR$. We write $\< F, x \> = F(x)$ for any $x \in \V, F \in \V\*$. 

A set $\S$ is called convex if $t x + (1-t) y \in \S$ for any $x,y \in \S$ and $t\in[0,1]$, and it is called a cone if $x \in \S \Rightarrow \lambda x \in \S$ for all $\lambda \in \RR_+$. We will use $\conv(\S) = \lset \sum_{i=1}^n c_i x_i \sbar x_i \in \S,\; c_i \in \RR_+,\; \sum_i c_i = 1 \rset$ to denote the convex hull of $\S$, that is, the smallest convex set which contains $\S$; analogously, $\cone(\S) = \lset \lambda x \sbar x \in \S,\; \lambda \in \RR_+ \rset$ will denote the conic hull of $\S$, and $\cl(\S)$ the closure of $\S$, that is, the smallest closed set which contains $\S$. For two sets $\S, \mathcal{R}$, we use $\S + \mathcal{R} = \lset s + r \sbar s \in \S,\; r \in \mathcal{R} \rset$ to denote their Minkowski sum. 
Given a set $\S \subseteq \V$, its dual cone $\S\* \subseteq \V\*$ is given by $\lset Y \in \V\* \sbar \< Y, x \> \geq 0 \; \forall x \in \S \rset$.

As is standard in optimization theory, we will take $\inf \emptyset = \infty$ and $\sup \emptyset = -\infty$. We will allow algebraic operations involving infinity whenever it leads to no ambiguities, with $c + \infty = \infty$ and $c \infty = \operatorname{sgn}(c) \infty$ for any  $c \in \RR$. We use $\log$ to denote the logarithm to the base 2 and $\ln$ for the natural logarithm.

%%%%%%%%%%%%%%%%%%%%%%%%%%%%%%%%%%%%%%%%

\subsection{General probabilistic theories}

We provide a brief introduction to the formalism of GPTs based on standard references~\cite{ludwig_1983,ludwig_1985,davies_1970}; see also \cite[Ch.\,1-2]{lami_2018-1} for a modern rigorous introduction to the concepts. GPTs can be thought of a generalization of the formalism of quantum mechanics, aiming to recover a characterization of the probabilistic and statistical aspects of the manipulation of physical systems in an axiomatic manner. 
While some literature considers only the technically more elementary finite-dimensional case, here we set out to describe the GPT machinery in full generality. The universality of the formalism described below rests upon a result by Ludwig~\cite[Ch.~IV, Thm.~3.7]{ludwig_1985}, who deduced it from first principles --- we expand on that below.

Our setting will be a \textit{base norm Banach space}, namely, a type of ordered Banach space in which the order and the norm structures are intimately related with each other. To understand this notion better, let us discuss how a base norm space can be constructed. The basic object of interest is an ordered vector space $\V$, equipped with the \textit{positive cone} $\C$, which induces a partial order on $\V$ according to $x \cleq_\C y \iff y - x \in \C$. 
For $(\V,\cleq_\C)$ to be an ordered vector space, we need $\C$ to be a convex cone (i.e., $\lambda \C + \mu\C \subseteq \C$ for all $\lambda,\mu\geq 0$) that is also \textit{pointed} (i.e., $\C \cup (-\C) = \{0\}$). In what follows, we will also assume that $\C$ is spanning (i.e., $\C - \C = \V$).
%\textcolor{violet}{[?]}.

The (algebraic) dual $\V\*$ of an ordered vector space $\V$ can itself be turned into an ordered vector space by the choice of the dual cone $\C\*$ as the positive cone of $\V\*$. To define a GPT we also need to single out a strictly positive functional $U\in \C\*$ called the \textit{unit effect}. Here, strict positivity means that $\<U,x\>\geq 0$ for all $x\in \C$, with equality if and only if $x=0$. The order unit can be used to construct the \textit{state space} $\Omega\coloneqq \{ x\in \C\,|\, \braket{U,x}=1\}$ as well as the function $\norm{\cdot}_\Omega:\V \to \RR$ defined by
\begin{equation} \label{base norm} \begin{aligned}
  \norm{x}_\Omega &\coloneqq \inf \lset \lambda_+\! +\! \lambda_- \sbar x = \lambda_+ \omega_+\! -\! \lambda_- \omega_-,\; \lambda_{\pm} \in \RR_+,\; \omega_{\pm} \in \Omega \rset\\
  &= \inf\lset \< U,x_++x_- \>\sbar x=x_+-x_-,\; x_\pm \in \C \rset .
\end{aligned}\end{equation}
In the above setting, it is not difficult to verify that $\norm{\cdot}_\Omega$ is always at least a seminorm. For $\V$ to be a Banach space, $\norm{\cdot}_\Omega$ needs to be a norm which makes $\V$ complete\footnote{Namely, such that all Cauchy sequences converge. Here, a sequence $(x_n)_n$ of elements of $\V$ is said to be \textit{Cauchy} if for all $\epsilon>0$ one can find an integer $N$ such that $\norm{x_n-x_m}_\Omega \leq \epsilon$ for all $n,m\geq N$, and to converge if there exists $x\in \V$ such that $\lim_{n\to\infty} \norm{x_n-x}_\Omega = 0$.}, in which case we refer to $\V$ as a base norm Banach space with base $\Omega$. We can then without loss of generality assume that $\C$ and $\Omega$ are both closed sets with respect to the topology induced by the base norm~\cite{ellis_1966}.

Hereafter, we will always assume that $\V$ is a base norm Banach space, and we will use the topology induced by the norm $\norm{\cdot}_\Omega$ unless stated otherwise.

In order to axiomatically define the concept of a measurement, we need to consider linear functionals $E: \Omega \to \RR$ whose values correspond to the probability of obtaining a certain measurement outcome. 
Under the so-called no restriction hypothesis~\cite{ludwig_1985,barrett_2007,chiribella_2010}, which we will take as an axiom, the set of physically implementable measurement functionals is then formed by all possible continuous linear functionals $E: \Omega \to [0,1]$, which is precisely the set of functionals $E$ such that $0 \cleq E \cleq U$. Any such functional is called an \textit{effect}, and a collection of effects $\{E_i\}$ such that $\sum_i E_i = U$ will be called a \textit{measurement}, since its outcomes sum up to a valid probability distribution.

With the above notation, the base norm can be recast into the following dual form:
\begin{equation} \label{base norm dual} \begin{aligned}
  \norm{x}_\Omega  &= \sup \lset \< E, x \> \sbar -U \cleq_\C E \cleq_\C U \rset.
\end{aligned}\end{equation}
The dual norm on $\V\*$ is known as the \textit{order unit norm}
\begin{equation}\begin{aligned}
  \norm{Y}^\circ_\Omega &= \sup \lset \< Y, x \> \sbar \norm{x}_{\Omega} \leq 1 \rset\\
  &= \sup \lset \abs{\< Y, \omega \>} \sbar \omega \in \Omega \rset. 
\end{aligned}\end{equation}
Accordingly, we will refer to $\V\*$ as an order unit Banach space.

Before we move on, let us devote a moment to look at quantum theory from this new perspective. Given a Hilbert space $\mathcal{H}$, we can consider the base norm Banach space of all trace class operators on $\mathcal{H}$, ordered by the cone of positive semidefinite operators. Choosing the unit effect to be simply the trace, we see that the corresponding state space is formed by all density operators, i.e., positive semidefinite operators with trace $1$. The base norm turns out to coincide with the trace norm $\norm{\cdot}_1$, and the dual space with the order unit space of all bounded operators on $\mathcal{H}$, equipped with the operator norm $\norm{\cdot}_\infty$. It can be noticed that measurements in the GPT sense generalize the concept of positive operator-valued measure (POVM) elements from quantum mechanics, and that a quantum measurement is simply identified with any valid POVM.

The final component of any GPT are the physical transformations, or channels, between states. This issue is typically more difficult to approach in an axiomatic way~\cite{edwards_1971,edwards_1972}, making it dependent on the particular setting {under} consideration. However, we will not need to assume anything about the set of physical transformations save for two of the weakest assumptions: first, that the identity transformation is physical (i.e., doing nothing is allowed), and second, that any physical transformation $\Lambda : \V \to \V'$ always maps a state $\omega \in \Omega$ into a valid state $\Lambda(\omega) \in \Omega'$ in the output state $\V'$. These general assumptions guarantee that our results apply to any physical GPT in consideration.

To consider probabilistic state transformations, we will employ the general notion of an \textit{instrument}~\cite{davies_1970}. This corresponds to a collection of transformations $\{\Lambda_i\}$ which are unnormalized channels, in the sense that each $\Lambda_i$ maps a state $\omega$ to another state $\omega_i'$ with some probability $p_i$, and thus the post-selected output of the transformation can be written as $\Lambda_i(\omega) = p_i \omega_i'$. To ensure that the overall transformation is physical, we will consider an instrument to be any set $\{\Lambda_i\}_i : \V \to \V'$ such that $\Lambda_i(\omega) \in \C'$ for all $\omega \in \Omega$, and $\sum_i \Lambda_i$ is a normalization-preserving transformation, i.e., $\<U', \sum_i \Lambda_i(\omega) \> = \<U, \omega\>$. This allows us to treat $\<U', \Lambda_i(\omega) \>$ as the probability of the $i^\text{th}$ transformation occurring.

As already anticipated, the GPT formalism, which may at first glance look as an ad-hoc generalization of quantum mechanics to a broader setting, can be actually deduced from first principles. Namely, the so-called Ludwig embedding theorem~\cite[Ch.~IV, Thm.~3.7]{ludwig_1985} guarantees that any physical theory that obeys some basic requirements is equivalent to a GPT. Here, a physical theory is intended as a map that associates with every preparation procedure $\omega$ of a fixed physical system and every measurement setting and outcome, together described as $E$, a number $\mu(E,\omega)\in [0,1]$ that represents the probability of obtaining the outcome $E$ when measuring the system prepared according to the procedure $\omega$. Among the axioms required for the Ludwig embedding theorem to apply, the only less intuitive one is the no-restriction hypothesis~\cite[Ch.~1]{lami_2018-1}.

\subsection{Resource theories}\label{sec:resource_theories_intro}

A resource theory is concerned with the description of the manipulation of physical systems under {a set of constraints that} single out a specific property of states as a ``resource'' which --- within the given setting --- is expensive to generate or preserve. One then defines the set of free states $\F \subseteq \Omega$ as the states which do not possess a given resource, and the free operations $\O$ as the channels which can be implemented without using any resource. Any state $\omega \notin \F$ and any channel $\Lambda \notin \O$ is considered resourceful, and is not allowed to be used for free within the physical constraints of the theory. In its general formulation, the formalism can describe virtually any aspect of manipulating physical states in any desired setting. However, making meaningful statements about general resources and their practical applications can be difficult, since different resources are built upon very different physical constraints, which means that they can have very different sets of free states and admit a variety of different classes of {free} operations. Our aim will therefore be to remain as general as possible in our investigation, only making very weak assumptions about resources in consideration, so that the results immediately apply to broad classes of resources.

We thus begin with two basic assumptions about the set $\F$: that it is closed with respect to the topology induced by the base norm $\norm{\cdot}_\Omega$, and that it is convex.

The assumption of closedness is a very natural one in any GPT. It can be understood as a consequence of the fact that the base norm distance $\norm{\omega - \sigma}_\Omega$ quantifies the distinguishability of two different states~\cite{ludwig_1983,kimura_2010,HOLEVO1973337,Helstrom}. That is, if a sequence of states $\{\omega_n\}_n$ satisfies $\norm{\omega_n - \omega}_\Omega \to 0$, the states in the sequence become indistinguishable from $\omega$ by any measurement. As a consequence, performing any experiment on states in the sequence should yield results consistent with the results one would obtain by performing the experiment on $\omega$ directly. In this sense, we can assume that a resource cannot be generated by simply taking a sequence of resourceless states, as this would contradict their indistinguishability.

Convexity, as we have mentioned, is a ubiquitous feature of GPTs. It relies only on the fundamental assumption that, having access to a preparation procedure which outputs a state $\sigma_1$ with probability $p$ and another state $\sigma_2$ with probability $1-p$, we are allowed to erase the information about which of the states was prepared, resulting in the probabilistic mixture $p \sigma_1 + (1-p) \sigma_2$. In most physical contexts, it is then intuitive to expect that simply mixing two states without any resource cannot result in a resourceful state.

While the vast majority of {physically relevant} resources are convex by definition, in some contexts the free states can form a non-convex set --- the most common example being the resource theory based on the set of Gaussian states~\cite{genoni_2010}. However, when considered in an operational setting, it is often easy to prepare mixtures of such states~\cite{takagi_2018,albarelli_2018}, meaning that such convex combinations can also be considered free as they are not able to provide practical advantages when manipulated by free operations. Motivated by this, we will thus take convexity as a basic axiom of the considered resources, and we will call any theory satisfying this property a \textit{convex resource theory} for clarity.

As for the free operations, we only make the weakest possible assumption: that no free operation can generate any resource from a resourceless state. That is, any free channel $\Lambda$ must satisfy $\sigma \in \F \Rightarrow \Lambda(\sigma) \in \F$. This class of maps is known as the maximal free operations~\cite{chitambar_2019}, as any physical class of free operations should be a subset of $\O$. In a very similar way, we deem a probabilistic instrument $\{\Lambda_i\}$ to be free whenever $\sigma \in \F \Rightarrow \Lambda_i(\sigma) \in \cone(\F)$ for all $i$, that is, each $\Lambda_i$ preserves the set of free states up to normalization.

%%%%%%%%%%%%%%%%%%%%%%%%%%%%%%%%%%%%%%%%%%%%%%%%%%%%%%%%%%%%%%%%%%%%%%%%%%%%%%%%%%%%%%%%%%%%%%%%%

\subsection{Resource monotones}\label{sec:measures_axioms}

There are many, in general inequivalent, ways to quantify and compare the resource content of a state~\cite{chitambar_2019,regula_2018}. An axiomatic characterization of the conditions necessary for a function $M: \Omega \to \RR_+ \cup \{\infty\}$ to constitute a valid resource measure has been considered in various works~\cite{vidal_2000,horodecki_2001,chitambar_2019}. For simplicity, we will assume that the measure is suitably normalized and shifted so that its minimal value is 1; another common option is to consider 0 as the baseline value. The minimal requirements can then be formulated as follows.
\begin{enumerate}[(1{a})]
\item Minimality on the set of free states. That is, $\omega \in \F \Rightarrow M(\omega) = 1$.
\end{enumerate}
This is a natural requirement, since free states have less resources than resourceful ones by definition. However, a problem with this definition is that it does not ensure that \textit{only} free states minimize the given measure. Therefore, a stronger constraint is often imposed:

\begin{enumerate}[(1{b})]
\item Faithfulness. That is, $M(\omega) = 1 \iff \omega \in \F$.
\end{enumerate}
The condition (1b) is sometimes eschewed in favor of ease of evaluation, as non-faithful monotones can be easier to compute than faithful ones. However, it can be seen that faithfulness is required for any monotone which precisely characterizes a given resource, since any such measure needs to be able to distinguish a free state from a non-free state.

The next condition concerns the behavior of the measure under free operations.

\begin{enumerate}[(1{a})]
\setcounter{enumi}{1}
\item Monotonicity under free operations. That is, $M(\Lambda(\omega)) \leq M(\omega)$ for any $\Lambda \in \O$.
\end{enumerate}
This is another natural requirement, since a free operation cannot generate any resources by definition. However, this condition does not capture the fact that state transformations can be performed probabilistically, where --- with some nonzero probability --- it might actually increase the resource~\cite{vidal_2000}. To account for this, a stronger requirement is often imposed.
\begin{enumerate}[(1{b})]
\setcounter{enumi}{1}
\item Strong monotonicity under free operations. That is, $\sum_i p_i M\left(\frac{\Lambda_i(\omega)}{p_i}\right) \leq M(\omega)$ for any free probabilistic protocol $\{\Lambda_i\}$ where $p_i = \< U, \Lambda_i(\omega) \>$.
\end{enumerate}
We remark that, within quantum mechanics, a weaker notion of strong monotonicity based on the Kraus operators of a channel is often employed~\cite{vidal_2000,chitambar_2019}.

Additionally, in all convex resource theories, a natural property that we previously discussed is that probabilistically mixing states should not increase the resource. This leads to the next condition.
\begin{enumerate}[(1)]
\setcounter{enumi}{2}
\item Convexity. That is, $M\left(\sum_i c_i \omega_i\right) \leq \sum_i c_i M(\omega_i)$ for any convex combination $\sum_i c_i \omega_i$.
\end{enumerate}

Our final requirement concerns the continuity properties of the monotones. Although various notions of continuity are sometimes used in the description of finite-dimensional resource measures~\cite{horodecki_2001,chitambar_2019}, such requirements are often too strong in infinite dimensions~\cite{eisert_2002}, and even in finite dimensions many useful resource monotones are not continuous~\cite{brandao_2010}. However, a fundamental property of a resource that we discussed is its closedness. It is then natural to require that a resource cannot be increased by simply taking a limit of less resourceful states. This property is reflected by the lower semicontinuity of the function.
\begin{enumerate}[(1)]
\setcounter{enumi}{3}
\item Lower semicontinuity. That is, $\displaystyle M(\omega) \leq \liminf_{n\to\infty} M(\omega_n)$ for any sequence such that $\{\omega_n\}_n \to \omega$.
\end{enumerate}
Having established these basic criteria, we can introduce a general definition of a bona fide resource measure.
\begin{defn}
A \textbf{resource measure} (or resource monotone) is a function $M: \Omega \to \RR_+ \cup \{\infty\}$ which satisfies the conditions of faithfulness (1b), strong monotonicity (2b), convexity (3), and lower semicontinuity (4).
\end{defn}

Although these axiomatic requirements are often employed in the characterization of resources, they can be insufficient to ensure that a given function is \textit{useful} as a quantifier. Take, for instance, the indicator function
\begin{equation}\begin{aligned}
	M(\omega) \coloneqq \begin{cases} 1 & \omega \in \F\\ \infty & \omega \notin \F.\end{cases}
\end{aligned}\end{equation}
Such a function satisfies all of the criteria (1)-(4), but clearly is useless in benchmarking and comparing the resource content of different states. This issue becomes even more troublesome in infinite dimensions, where the existence of infinitely resourceful states \txb{with respect to certain tasks} is a physical possibility~\cite{eisert_2002, keyl_2003}. \txb{We have to keep in mind that the amount of resource a given state contains depends on the operational task under examination. Therefore, it is perfectly possible that certain monotones yield a finite value when evaluated on a state, while others diverge.} %Precisely for this reason, it is often useful to have at hand a set of monotones that can together capture all these different aspects of resource quantification.}
\note{However, from a pragmatic standpoint, the primary function of a resource measure should be to enable a quantitative comparison of the resource contents of different states. Therefore, a monotone which assigns an infinite value to a large fraction of quantum states can often be regarded as less useful, and it is of interest to establish monotones which remain finite even when others might diverge. Such a requirement appears to be difficult to formalize in a fully general way, but we will consider it explicitly in several relevant cases.}

%%%%%%%%%%%%%%%%%%%%%%%%%%%%%%%%%%%%%%%%%%%%%%%%%%%%%%%%%%%%%%%%%%%%%%%%%%%%%%%%%%%%%%%%%%%%%%%%%

\subsection{Robustness measures}\label{sec:robustness_intro}

Having established the basic properties which can be expected from a measure of a resource, we now consider a general type of quantifier in any resource theory: the so-called robustness measures. Defined first for quantum entanglement~\cite{vidal_1999} and later generalized to other finite-dimensional resources in quantum theory~\cite{brandao_2015} and GPTs~\cite{takagi_2019}, the measures are defined through the basic concept of convex combinations of states. Intuitively, to quantify how robust the resources contained in a state $\omega$ are to noise, one can ask: how much noise in the form of statistical mixing with some suitable noisy state $\tau$, i.e.
\begin{equation}\begin{aligned}
	\frac{1}{1+\lambda} \,\omega + \frac{\lambda}{1+\lambda} \,\tau,
\end{aligned}\end{equation}
can $\omega$ withstand before it becomes a free state? Two common types of robustness are then defined by optimizing over the noise states $\tau$. The \textit{(generalized) robustness} can be defined for any state $\omega$ as
\begin{equation}\begin{aligned}\label{eq:rob_primal_nonlsc}
	\R_\F(\omega) \coloneqq \inf \lsetr 1+\lambda \sbarr \frac{\omega + \lambda \tau}{1+\lambda} \in \F, \; \tau \in \Omega \rsetr.
\end{aligned}\end{equation}
Note that our definition differs by a constant term 1 from the commonly used terminology~\cite{vidal_1999,takagi_2019} which defines robustness as $R_\F(\omega)-1$ (or, generally, $R_\F(x)-\<U, x\>$). This inconsequential change in notation is employed because it will make the measure more straightforward to apply also to unnormalized elements of $\V$.

The other common type of robustness is the \textit{standard robustness} (also known as \textit{free robustness}), for which the noise states are also free:
\begin{equation}\begin{aligned}\label{eq:robs_primal_nonlsc}
	\Rs_\F(\omega) \coloneqq \inf \lsetr 1+\lambda \sbarr \frac{\omega + \lambda \sigma}{1+\lambda} \in \F, \; \sigma \in \F \rsetr.
\end{aligned}\end{equation}
Both of the robustness measures find a plethora of applications in finite-dimensional resource theories, including quantum resource distillation and dilution~\cite{brandao_2008-1,brandao_2010,brandao_2011,liu_2019,regula_2020}, simulation of quantum circuits~\cite{pashayan_2015,howard_2017,seddon_2020}, as well as channel discrimination problems in quantum mechanics~\cite{piani_2016,takagi_2019-2} and broader GPTs~\cite{takagi_2019}. 

However, such results do not immediately generalize to infinite-dimensional resource theories, even within quantum mechanics. The first problem being that, in general, the definitions in Eq.~\eqref{eq:rob_primal_nonlsc}--\eqref{eq:robs_primal_nonlsc} do not even guarantee lower semicontinuity, so the measures defined in this way can fail to satisfy condition (4) that we considered in the previous section. Additionally, the methods required to characterize robustness monotones and their operational applications --- mathematical tools like convex duality, various bounds and relations with other monotones, resource-theoretic properties --- were explicitly developed only in finite-dimensional theories. Their extension to infinite-dimensional spaces is non-trivial due to the considerably richer and more complex structure of infinite-dimensional theories.

Crucially, we will see that the definition of the generalized robustness $\R_\F$ can indeed be suitably adapted to general resource theories in infinite-dimensional GPTs, extending most of the useful properties of the measure familiar from finite-dimensional spaces. The case of the standard robustness $\Rs_\F$ becomes more troublesome. Even in finite dimensions, it is known that there exist resource theories for which the standard robustness is not well-behaved and takes an infinite value for most resourceful states~\cite{napoli_2016, liu_2019, regula_2020}, which prevents it from being a useful quantifier. \txb{In the infinite-dimensional case, the measure suffers from this problem for the theory of nonclassicality, and diverges on several interesting states for entanglement theory as well. Since these are perhaps the two most important continuous-variable resources in quantum mechanics, this suggests that $\Rs_\F$ may not be well suited to be a universal resource quantifier.} Although we will also study the properties of $\Rs_\F$ when relevant, our focus from now on will be mainly on $\R_\F$.

%%%%%%%%%%%%%%%%%%%%%%%%%%%%%%%%%%%%%%%%%%%%%%%%%%%%%%%%%%%%%%%%%%%%%%%%%%%%%%%%%%%%%%%%%%%%%%%%%
%%%%%%%%%%%%%%%%%%%%%%%%%%%%%%%%%%%%%%%%%%%%%%%%%%%%%%%%%%%%%%%%%%%%%%%%%%%%%%%%%%%%%%%%%%%%%%%%%
%%%%%%%%%%%%%%%%%%%%%%%%%%%%%%%%%%%%%%%%%%%%%%%%%%%%%%%%%%%%%%%%%%%%%%%%%%%%%%%%%%%%%%%%%%%%%%%%%

\section{Defining robustness measures in infinite dimensions}\label{sec:defining_robustness}

In our investigation, it will be useful to consider the optimization problems $\R_\F$, $\Rs_\F$ in the form of inequalities with respect to convex cones. To this end, we notice that by defining $\tau' = \lambda \tau \in \C$ in Eq.~\eqref{eq:rob_primal_nonlsc}, we can write
\begin{equation}\begin{aligned}\label{eq:rob_primal_conic_expression}
	\R_\F(\omega) &= \inf \lset 1+\lambda \sbar \omega + \tau' = (1+\lambda)\sigma,\; \tau' \in \C,\; \sigma \in \F \rset\\
	&= \inf \lset \< U, \sigma' \> \sbar \omega \cleq_\C \sigma',\; \sigma' \in \cone(\F) \rset.
\end{aligned}\end{equation}
In a similar way, we write 
\begin{equation}\begin{aligned}
	\Rs_\F(\omega) = \inf \lset \< U, \sigma' \> \sbar \omega \cleq_\F \sigma',\; \sigma' \in \cone(\F) \rset
\end{aligned}\end{equation}
where we use $A \cleq_\F B \iff B-A\in\cone(\F)$ to denote inequality with respect to the conic hull of $\F$. We will take the above to be the definitions of $\R_\F(\tau)$ and $\Rs_\F(\tau)$ when $\tau \in \V$ is not a state. 

In this section, we will use the tools of convex optimization to investigate the properties of these quantities and to define variants of the measures which are more suited to infinite-dimensional spaces. Before progressing with the study of the robustnesses explicitly, we will consider general optimization problems of this type.

%%%%%%%%%%%%%%%%%%%%%%%%%%%%%%%%%%%%%%%%%%%%%%%%%%%%%%%%%%%%%%%%%%%%%%%%%%%%%%%%%%%%%%%%%%%%%%%%%

\subsection{Convex duality in Banach spaces}\label{sec:optimization_banach}

Let $\K_1 \subseteq \K_0 \subseteq \C$ be two closed convex cones. Define the \textit{primal problem} $P$ as the optimization problem whose optimal value is given by
\begin{equation}\begin{aligned}\label{eq:opt_primal}
  P(\omega) \coloneqq \inf_{\sigma} \lset \< U, \sigma \> \sbar \sigma - \omega \in \K_0,\, \sigma \in \K_1 \rset.
\end{aligned}\end{equation}
The generalized robustness $\R_\F$ is obtained when $\K_0 = \C$, $\K_1 = \cone(\F)$, and the standard robustness $\Rs_\F$ when $\K_0 = \K_1 = \cone(\F)$. We then define the corresponding \textit{dual problem} $D$ as
\begin{equation}\begin{aligned}\label{eq:opt_dual}
  D(\omega) \coloneqq \sup_{W} \lset \< W, \omega \> \sbar W \in \K_0\*,\, U - W \in \K_1\* \rset.
\end{aligned}\end{equation}
Notice that the optimal values are homogeneous functions of $\omega$ (i.e., $P(k \omega) = kP(\omega)$). The primal problem is called feasible if there exists a choice of $\sigma$ which satisfies the constraints in Eq.~\eqref{eq:opt_primal}, and infeasible otherwise; the dual is always feasible since one can choose $W = U$.

A commonly used property of such optimization problems, particularly in finite-dimensional spaces, is the strong duality: under certain conditions, it can be guaranteed that $P(\omega) = D(\omega)$~\cite{rockafellar_1970,boyd_2004}. However, this is generally much more difficult to ensure in infinite-dimensional spaces~\cite{ponstein_2004,borwein_2006}, motivating an alternative approach.

We will say that the primal problem $P$ is \textit{subfeasible}~\cite{duffin_1956} with subfeasible value $\lambda$ if there exists sequences $\{ \sigma_n \}_n \in \K_1, \{ \tau_n \}_n \in \K_0$ such that $\{ \sigma_n - \tau_n \}_n \to \omega$ in the base norm topology, and $\{ \< U, \sigma_n \> \}_n \to \lambda$ in the standard topology on $\RR$. Intuitively, this means that the choices of $\{ \sigma_n, \tau_n \}$ --- although not necessarily feasible for the primal problem $P$ --- approach feasibility. The \textit{optimal subfeasible value} is then defined as
$P'(\omega) \coloneqq \inf \lset \lambda \sbar \lambda \text{ subfeasible for } P(\omega) \rset$.

We now establish a general form of duality between the optimal primal subfeasible value and the optimal dual value. The result is based on general properties of linear and conic optimization, and many formulations of this principle can be found in the literature~\cite{duffin_1956,kretschmer_1961,yamasaki_1968,anderson_1983,shapiro_2001}. We include a self-contained proof in Appendix~\ref{app:duality}.

\begin{pro}\label{prop:banach_states}
For any $\omega \in \Omega$ it holds that
\begin{equation}\begin{aligned}
    P'(\omega) = D(\omega).
\end{aligned}\end{equation}
Furthermore, the primal problem is subfeasible if and only if there exists a dual solution $W$ which achieves the optimal value, with $\< W, \omega \> = D(\omega) = P'(\omega)$.
\end{pro}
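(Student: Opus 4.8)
The plan is to prove $P'(\omega)=D(\omega)$ by a two-sided inequality, establishing weak duality $P'(\omega)\geq D(\omega)$ directly and the reverse $P'(\omega)\leq D(\omega)$ via a separation argument in the Banach space, and then to read off the attainment statement as a by-product.

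First I would dispose of \emph{weak duality}. Take any subfeasible sequence $\{\sigma_n\}_n\in\K_1,\{\tau_n\}_n\in\K_0$ with $\sigma_n-\tau_n\to\omega$ and $\<U,\sigma_n\>\to\lambda$, and any dual-feasible $W$ (so $W\in\K_0\*$ and $U-W\in\K_1\*$). Then
\begin{equation}\begin{aligned}
\<W,\omega\>=\lim_n\<W,\sigma_n-\tau_n\>=\lim_n\big(\<W,\sigma_n\>-\<W,\tau_n\>\big)\leq\lim_n\<U,\sigma_n\>=\lambda,
\end{aligned}\end{equation}
using $\<U-W,\sigma_n\>\geq 0$ (since $\sigma_n\in\K_1$, $U-W\in\K_1\*$) and $\<W,\tau_n\>\geq 0$ (since $\tau_n\in\K_0$, $W\in\K_0\*$), together with continuity of $W$ to pass the limit through. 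Taking the infimum over subfeasible $\lambda$ and the supremum over feasible $W$ yields $P'(\omega)\geq D(\omega)$. This half is routine and only uses the dual-cone inequalities and norm-continuity of $W$.

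The hard direction is $P'(\omega)\leq D(\omega)$, and this is where I expect the main obstacle to lie. The natural approach is a Hahn--Banach separation argument in $\V\times\RR$. I would consider the set
\begin{equation}\begin{aligned}
\M\coloneqq\lset(\sigma-\tau,\,\<U,\sigma\>)\sbar\sigma\in\K_1,\;\tau\in\K_0\rset\subseteq\V\times\RR,
\end{aligned}\end{equation}
which is a convex cone, and observe that $P'(\omega)$ is precisely the infimum of $t$ over points $(\omega,t)$ in the \emph{closure} $\cl(\M)$ — this is exactly the point of passing to the subfeasible value rather than the feasible value, since taking the closure is what builds the approximating sequences into the problem. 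The key claim is that the point $(\omega,\,P'(\omega)-\epsilon)$ lies outside $\cl(\M)$ for any $\epsilon>0$, so by a separating-hyperplane theorem there is a continuous functional on $\V\times\RR$, i.e.\ a pair $(W,s)\in\V\*\times\RR$, separating it from the cone. The delicate points are: (i) since $\M$ is a cone, the separating functional must be nonnegative on all of $\M$, forcing the homogeneous inequality $\<W,\sigma-\tau\>+s\<U,\sigma\>\geq 0$ for all $\sigma\in\K_1,\tau\in\K_0$; (ii) specializing $\sigma=0$ shows $W\in\K_0\*$ after a sign normalization, and isolating the $\tau=0$, $\<U,\sigma\>$ terms shows $-W-sU$ behaves correctly so that (after rescaling $s$ to $-1$, which requires ruling out the degenerate $s=0$ case) $U-W\in\K_1\*$; and (iii) evaluating the separation at $(\omega,P'(\omega)-\epsilon)$ then gives $\<W,\omega\>\geq P'(\omega)-\epsilon$, whence $D(\omega)\geq P'(\omega)$ on letting $\epsilon\to 0$.

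The genuine technical difficulties are establishing that $\cl(\M)$ really does encode $P'$ faithfully (that the second coordinate of the closure is not artificially pushed down), handling the normalization of the separating functional — in particular excluding the degenerate vertical separator with $s=0$, which typically uses that $\omega\in\Omega$ is a genuine state so that the problem is ``strictly'' feasible in the relevant sense — and ensuring all limits are taken in the base-norm topology so that $W\in\V\*$ is genuinely continuous. For the final \emph{attainment} claim, I would argue that whenever $P$ is subfeasible (so $P'(\omega)$ is finite), the supremum defining $D(\omega)$ is actually achieved: the separation argument produces a concrete dual-feasible $W$ satisfying $\<W,\omega\>\geq P'(\omega)$, which combined with weak duality $\<W,\omega\>\leq D(\omega)=P'(\omega)$ forces $\<W,\omega\>=D(\omega)=P'(\omega)$, exhibiting $W$ as an optimal dual solution; conversely, if no dual optimizer exists one shows $P$ cannot be subfeasible by contraposition through the same chain of inequalities.
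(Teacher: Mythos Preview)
Your approach is essentially the same as the paper's: both set up the cone $\M=\Q=\{(\sigma-\tau,\<U,\sigma\>)\}$ in $\V\times\RR$, separate the point $(\omega,P'(\omega)-\ve)$ from $\cl\Q$ by Hahn--Banach, and extract a dual-feasible $W$ from the separating functional after ruling out the degenerate coefficient on the $\RR$-component. Your explicit weak-duality step replaces the paper's contradiction argument (where it supposes $\<W',\omega\>>\lambda$ and shows $(W',-1)$ would then separate $(\omega,\lambda)$ from $\cl\Q$), but the content is identical.

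There is one genuine gap: you do not handle the case $P'(\omega)=\infty$. Your separation is phrased for the point $(\omega,P'(\omega)-\ve)$, which is meaningless when $P'(\omega)=\infty$, and your weak-duality inequality $D(\omega)\leq P'(\omega)$ is vacuous in that case. The paper treats this separately: one separates $(\omega,\mu)$ from $\cl\Q$ for arbitrarily large $\mu$, and then must confront the possibility that the $\RR$-coefficient $q$ of the separating functional is zero (your ``degenerate vertical separator''). When $q=0$, the paper uses $\K_1\subseteq\K_0$ to show $Z\in\K_0^*\cap(-\K_1^*)$, hence $\<Z,\sigma\>=0$ on $\K_1$, and then argues that $\eta Z$ is dual-feasible for all $\eta>0$ with $\<Z,\omega\>>0$, forcing $D(\omega)=\infty$. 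Your sketch flags the $s=0$ case as something to exclude but does not recognize that in the infinite-$P'$ regime it cannot be excluded and must instead be exploited. A smaller point: your attainment argument asserts the separation yields $\<W,\omega\>\geq P'(\omega)$, but it only gives $\<W,\omega\>>P'(\omega)-\ve$; the paper's writeup is similarly brief here, so this is not a discrepancy with the paper so much as a place where both arguments would benefit from an extra line.
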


In addition to the above duality result, it is of course of interest to ask: under what conditions is full strong duality achieved, that is, when do the values of the primal and dual problems satisfy $P(\omega) = D(\omega)$? There are several ways to establish sufficient conditions for such a property based on so-called generalized interior conditions~\cite{borwein_2003,bot_2008,ponstein_2004}, which require the existence of points in the interior (or various generalized notions thereof) of the cones $\K_0$ or $\K_1$. Such properties are very non-trivial requirements in infinite-dimensional spaces, and they are out of scope of this work.

To derive a simpler condition, often possible to verify in practice, we establish several equivalent formulations of the optimization problems $P, P'$, clarifying the definitions and the convergence requirement of the subfeasible solutions. Since $\K_0, \K_1 \subseteq \C$, each of the cones admits a bounded, closed, convex base $\B_0, \B_1 \subseteq \Omega$ --- in the case of the robustness $\R_\F$, the cones are $\C$ with the base $\Omega$ and $\cone(\F)$ with the base $\F$. We then have the following.
\begin{lem}\label{lem:rob_based}
For any $\omega \in \Omega$ it holds that
\begin{align}
    P(\omega) &= \inf \lset \lambda \sbar \omega = \lambda \sigma - (\lambda - 1) \tau, \; \tau \in \B_0,\; \sigma \in  \B_1\rset \label{eq:rob_primal_base}\\
    P'(\omega) &= \inf \Big\{ \lambda \;\Big|\; \exists \{\xi_n\}_n \to \omega \colon \xi_n = \lambda \sigma_n - (\lambda - 1) \tau_n, \nonumber\\
     & \hphantom{\inf \Big\{ \lambda \;\Big|\;} \tau_n \in \B_0,\; \sigma_n \in \B_1\Big\}.
\end{align}
In particular, it suffices to {look at} sequences of normalized elements $\xi_n \in \V$ such that $\<U, \xi_n\> = 1$ when considering subfeasibility.

Alternatively, we can write
\begin{equation}\begin{aligned}
    P(\omega) &= \inf \lset \lambda \sbar \omega \in \lambda (\B_1 - \K_0) \rset\\
    P'(\omega) &= \inf \lset \lambda \sbar \omega \in \lambda \cl (\B_1- \K_0) \rset.
\end{aligned}\end{equation}
\end{lem}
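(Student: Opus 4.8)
The plan is to prove all five identities by reducing everything to homogeneity together with the single structural fact that $\|x\|_\Omega = \langle U, x\rangle$ for every $x \in \C$, which follows at once from the definition of the base norm (take $x_+ = x$, $x_- = 0$, and note any other splitting only increases $\langle U, x_+ + x_-\rangle$). First I would recall the concrete bases $\B_i = \{x \in \K_i \mid \langle U, x\rangle = 1\}$ for $i=0,1$: strict positivity of $U$ makes these genuine bases, continuity of $U$ makes them closed, and the identity $\|x\|_\Omega = \langle U, x\rangle$ on $\C$ shows $\B_i \subseteq \Omega$ lies on the unit sphere, hence is bounded. The backbone of the argument is the pair of set identities $P(\omega) = \inf\{\lambda \mid \omega \in \lambda(\B_1 - \K_0)\}$ and $P'(\omega) = \inf\{\lambda \mid \omega \in \lambda\,\cl(\B_1 - \K_0)\}$, which I would establish first because they are unambiguous and, for $\omega \in \Omega$, automatically force $\lambda \ge 1$: testing membership $\omega/\lambda = \sigma - k$ against $U$ gives $\langle U, k\rangle = 1 - 1/\lambda \ge 0$ for the subtracted cone element $k \in \K_0$, so $\lambda < 1$ is infeasible.

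For the exact problem $P$, the key is the bijection between feasible points and the base: any feasible $\sigma \in \K_1$ has $\langle U, \sigma\rangle \ge \langle U, \omega\rangle = 1$, so writing $\lambda = \langle U, \sigma\rangle$, $\sigma = \lambda \hat\sigma$ with $\hat\sigma \in \B_1$, and $\sigma - \omega = (\lambda-1)\hat\tau$ with $\hat\tau \in \B_0$ (using $\langle U, \sigma - \omega\rangle = \lambda - 1$) yields $\omega = \lambda\hat\sigma - (\lambda-1)\hat\tau$; conversely such a decomposition with $\lambda \ge 1$ produces the feasible point $\sigma = \lambda\hat\sigma$ with $\sigma - \omega = (\lambda-1)\hat\tau \in \K_0$. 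Dividing by $\lambda$ turns this into $\omega/\lambda = \hat\sigma - \tfrac{\lambda-1}{\lambda}\hat\tau \in \B_1 - \K_0$, giving \eqref{eq:rob_primal_base} and the set form simultaneously. Here I would stress that \eqref{eq:rob_primal_base} is read with $\lambda \ge 1$, the range inherited from $\R_\F = 1+\lambda$ in the conic form \eqref{eq:rob_primal_conic_expression}; the set form encodes this constraint automatically, whereas the literal relation $\omega = \lambda\sigma - (\lambda-1)\tau$ would spuriously admit $\lambda < 1$.

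The subfeasible identities are where the real work lies. Given a subfeasible sequence $\sigma_n \in \K_1,\ \tau_n \in \K_0$ with $\sigma_n - \tau_n \to \omega$ and $\langle U, \sigma_n\rangle \to \lambda$, I would note that $\langle U, \tau_n\rangle = \langle U, \sigma_n\rangle - \langle U, \sigma_n - \tau_n\rangle \to \lambda - 1$ by continuity of $U$ under base-norm convergence, so $\lambda \ge 1$, and that $\|\sigma_n\|_\Omega = \langle U, \sigma_n\rangle$ and $\|\tau_n\|_\Omega = \langle U, \tau_n\rangle$ are bounded. Boundedness is exactly what lets me rescale the individual weights $\langle U, \sigma_n\rangle = \lambda_n \to \lambda$ to the common value $\lambda$ without disturbing the limit, since replacing $\sigma_n$ by $\lambda\,\sigma_n/\lambda_n$ perturbs it by $(\lambda/\lambda_n - 1)\sigma_n \to 0$. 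After normalization the approximants take the base form $\xi_n = \lambda\hat\sigma_n - (\lambda-1)\hat\tau_n$ with $\hat\sigma_n \in \B_1,\ \hat\tau_n \in \B_0$, and since $\langle U, \xi_n\rangle = \lambda - (\lambda-1) = 1$ identically, they are automatically normalized — this is precisely the claim that it suffices to use sequences with $\langle U, \xi_n\rangle = 1$. Dividing by $\lambda$ places $\xi_n/\lambda \in \B_1 - \K_0$ with $\xi_n/\lambda \to \omega/\lambda$, so $\omega \in \lambda\,\cl(\B_1 - \K_0)$; the converse direction unwinds the same steps, reading a subfeasible sequence of value $\lambda$ off any approximating sequence in $\B_1 - \K_0$. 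The edge case $\lambda = 1$ needs a separate word: there $(\lambda-1) = 0$ and $\tau_n \to 0$, so one drops the $\tau$-term and takes $\xi_n = \hat\sigma_n = \sigma_n/\lambda_n \to \omega$, filling the zero-weighted $\tau$ slot with any fixed element of $\B_0$.

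The main obstacle is this subfeasible conversion: one must upgrade sequences whose scalar values $\langle U, \sigma_n\rangle$ merely converge to $\lambda$ into sequences carrying the fixed weight $\lambda$ demanded by the base form, and check the perturbation vanishes. The uniform boundedness of $\{\sigma_n\}$ and $\{\tau_n\}$ furnished by $\|x\|_\Omega = \langle U, x\rangle$ on $\C$ is the linchpin that legitimizes the rescaling, and it is also what guarantees that the closure $\cl(\B_1 - \K_0)$ is the only discrepancy between $P'$ and $P$.
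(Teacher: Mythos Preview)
Your proposal is correct and follows essentially the same route as the paper's proof: both arguments normalize a subfeasible sequence $\sigma_n - \tau_n \to \omega$ by replacing $\sigma_n$ with $\lambda\,\sigma_n/\langle U,\sigma_n\rangle$ and $\tau_n$ with $(\lambda-1)\,\tau_n/\langle U,\tau_n\rangle$, and use the identity $\|x\|_\Omega = \langle U,x\rangle$ on $\C$ to show the resulting perturbation vanishes. Your treatment is slightly more explicit about the range $\lambda\ge 1$ and the edge case $\lambda=1$ (where $\langle U,\tau_n\rangle\to 0$), while the paper only flags the related case $\tau_n=0$; otherwise the arguments coincide.
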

The proof follows straightforwardly by manipulating the definitions of the optimization problems, and we include it in Appendix~\ref{app:duality} for completeness.

Stemming from the above characterization, we obtain a sufficient condition for strong duality as follows.
\begin{pro}\label{prop:strong_duality_from_closedness}
If either $\conv \!\left( \B_1 \cup (-\B_0) \right)$ or $\B_1 - \K_0$ is a closed set, then $P(\omega) = P'(\omega) = D(\omega)$ holds for all $\omega \in \Omega$.
\end{pro}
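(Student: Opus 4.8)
The plan is to reduce the claim to the characterization already supplied by Lemma~\ref{lem:rob_based}, where we have
\[
P(\omega) = \inf\lset \lambda \sbar \omega \in \lambda(\B_1 - \K_0)\rset, \qquad
P'(\omega) = \inf\lset \lambda \sbar \omega \in \lambda\,\cl(\B_1 - \K_0)\rset.
\]
Since $\cl(\B_1 - \K_0) \supseteq \B_1 - \K_0$, we always have $P'(\omega) \le P(\omega)$, and Proposition~\ref{prop:banach_states} gives $P'(\omega) = D(\omega)$. Thus it suffices to prove the reverse inequality $P(\omega) \le P'(\omega)$, and this follows immediately once we know that $\B_1 - \K_0$ is already closed, because then $\cl(\B_1 - \K_0) = \B_1 - \K_0$ makes the two infima literally identical. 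So the entire content of the second hypothesis is just that closedness of $\B_1 - \K_0$ collapses the two optimization problems.

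The remaining work is to show that the first hypothesis implies the second: if $\conv\!\left(\B_1 \cup (-\B_0)\right)$ is closed, then $\B_1 - \K_0$ is closed. First I would verify the algebraic identity relating these two sets. Because $\B_0, \B_1 \subseteq \Omega$ are bases with $\<U, x\> = 1$ for $x$ in each base, and $\K_0 = \cone(\B_0)$, one can write any element of $\B_1 - \K_0$ as $\sigma - \mu\tau$ with $\sigma \in \B_1$, $\tau \in \B_0$, $\mu \ge 0$. I expect the clean statement to be that $\B_1 - \K_0$ equals the portion of $\cone\!\left(\conv(\B_1 \cup (-\B_0))\right)$ lying in the slab $\<U, \cdot\> \le 1$, or more precisely that $\B_1 - \K_0 = \lset x \sbar x \in \cone(\conv(\B_1 \cup (-\B_0))),\ \<U, x\> = 1 - (\text{something})\rset$; I would pin down the exact slicing by tracking the $U$-functional, using $\<U, \sigma - \mu\tau\> = 1 - \mu$. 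The key geometric fact I will lean on is that if a bounded base-like convex set $\mathcal{K} = \conv(\B_1 \cup (-\B_0))$ is closed, then its conic hull $\cone(\mathcal{K})$, intersected with any closed affine slab determined by a continuous functional, remains closed.

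The main obstacle will be this last step: in infinite dimensions the conic hull of a closed bounded convex set need not be closed, so I cannot simply invoke $\cone$ of a compact set. The decisive structural input is that $\B_0$ and $\B_1$ are bounded in the base norm and the unit effect $U$ is strictly positive and continuous, so $\<U, \cdot\>$ is bounded away from behaving badly on these bases. Concretely, I would argue that a sequence $\{\sigma_n - \mu_n \tau_n\}_n \to x$ in $\B_1 - \K_0$ forces the scalars $\mu_n$ to stay bounded (via $\mu_n = 1 - \<U, \sigma_n - \mu_n\tau_n\>$ together with convergence of $\<U, \cdot\>$ along the sequence), after which the point $x$ can be exhibited as a genuine limit inside $\cone$ of the closed set $\conv(\B_1 \cup (-\B_0))$, and then re-sliced back into $\B_1 - \K_0$ using the $U$-normalization. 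The strict positivity of $U$ and boundedness of the bases are exactly what prevent the scaling parameters from escaping to infinity, which is the usual failure mode for closedness of conic hulls in Banach spaces; making that boundedness argument airtight is where the real care is required.
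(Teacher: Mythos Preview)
Your approach is correct and is essentially the argument the paper intends: the proposition is stated in the paper as ``stemming from the above characterization'' (Lemma~\ref{lem:rob_based}) without an explicit proof, and your reduction via $P(\omega)=\inf\{\lambda:\omega\in\lambda(\B_1-\K_0)\}$ versus $P'(\omega)=\inf\{\lambda:\omega\in\lambda\,\cl(\B_1-\K_0)\}$ is exactly that route. Your sequence argument for the first hypothesis is also sound: from $\sigma_n-\mu_n\tau_n\to x$ one gets $\mu_n=1-\<U,\sigma_n-\mu_n\tau_n\>\to 1-\<U,x\>\eqqcolon\mu^\ast\ge 0$; if $\mu^\ast=0$ then $\sigma_n\to x\in\B_1$, while if $\mu^\ast>0$ then $(1+\mu_n)^{-1}(\sigma_n-\mu_n\tau_n)\to(1+\mu^\ast)^{-1}x\in\conv(\B_1\cup(-\B_0))$ by the assumed closedness, and the $U$-value forces the convex coefficient to be $t=(1+\mu^\ast)^{-1}$, whence $x=\sigma^\ast-\mu^\ast\tau^\ast\in\B_1-\K_0$.

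One minor streamlining: rather than proving that $\B_1-\K_0$ is closed as an intermediate step, you can work directly with the \emph{first} formulation in Lemma~\ref{lem:rob_based}, namely $P(\omega)=\inf\{\lambda:\omega=\lambda\sigma-(\lambda-1)\tau\}$ and its subfeasible analogue restricted to normalized $\xi_n$. Since $\<U,\omega\>=1$, each normalized $\xi_n=\lambda\sigma_n-(\lambda-1)\tau_n$ satisfies $(2\lambda-1)^{-1}\xi_n\in\conv(\B_1\cup(-\B_0))$, so closedness of the latter set immediately puts $(2\lambda-1)^{-1}\omega$ there and recovers a feasible decomposition for $P(\omega)$. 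This avoids the detour through $\B_1-\K_0$ and the worry you flag about conic hulls, but the two arguments are essentially the same computation.
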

\noindent In the case of the robustness $\R_\F$, the condition requires that $\conv(\F\cup(-\Omega))$ be closed.

We will later see that the condition can be used to show strong duality in several relevant cases. In particular, we will use it in Sec.~\ref{sec:strong_duality_resources} to establish the strong duality of the robustness $\R_\F$ in many continuous-variable quantum resources.

%%%%%%%%%%%%%%%%%%%%%%%%%%%%%%%%%%%%%%%%%%%%%%%%%%%%%%%%%%%%%%%%%%%%%%%%%%%%%%%%%%%%%%%%%%%%%%%%%

\subsection{Lower semicontinuous robustness}

As mentioned before, the definition of the robustness $\R_\F$ is not guaranteed to satisfy lower semicontinuity (l.s.c.) --- a property which reflects the fact that, if a state can be approximated by a sequence of states $\{\omega_n\}$ with $\R_\F(\omega_n) \leq \lambda$, then we should also have $\R_\F(\omega) \leq \lambda$. Mathematically, this corresponds to the fact that each set $\lset \omega \sbar \R_\F(\omega) \leq \lambda \rset$ should be closed.
We thus define the \emph{l.s.c.\ robustness} $\Rl_\F$ as the l.s.c.\ hull (also known as closure) of $\R_\F$, that is, the largest function which is lower semicontinuous and upper bounded by $\R_\F$ (see e.g.~\cite{ponstein_2004}). Specifically, we have the following.
\begin{defn}The \textbf{lower semicontinuous robustness} $\Rl_\F$ is defined as
\begin{equation}\begin{aligned}
    \Rl_\F(\omega) \coloneqq& \liminf_{\xi \to \omega}\, \R_\F(\xi)\\
    =& \lim_{\ve\to 0^+}\, \inf_{\norm{\xi - \omega}_\Omega \leq \ve} \R_\F(\xi).
\end{aligned}\end{equation}
\end{defn}
We will often refer to $\Rl_\F$ simply as the robustness, since we shall see that it is the most natural extension of $\R_\F$ applicable to general infinite-dimensional resource theories and satisfying all desirable properties.

Recall that the value of the robustness can be identified with the optimal value of the optimization problem
\begin{equation}\begin{aligned}\label{eq:robustness_repeat}
	\R_\F(\omega) = \inf \lset \< U, \sigma \> \sbar \omega \cleq_\C \sigma,\; \sigma \in \cone(\F) \rset.
\end{aligned}\end{equation}
Consider then the so-called epigraph of $\R_\F$, defined as the set
\begin{equation}\begin{aligned}
	\operatorname{epi} \R_\F \coloneqq& \lset (\omega, \lambda) \sbar \R_\F(\omega) \leq \lambda \rset \\
	=& \lset (\sigma - \tau,\, \<U, \sigma\>) \sbar \sigma \in \cone(\F),\;\tau \in \C \rset
\end{aligned}\end{equation}
in the space $\V \times \RR$. The l.s.c.\ robustness $\Rl_\F$ can be understood as the function whose epigraph corresponds to the closure of $\operatorname{epi}  \R_\F$~\cite[2.1.3]{barbu_2012}. 
It is then not difficult to show that $\Rl_\F$ is precisely the optimal subfeasible value of the optimization problem~\eqref{eq:robustness_repeat} as we have considered in the previous section. 
This argument allows us to use the duality relation in Prop.~\ref{prop:banach_states} to identify the value of the l.s.c.\ robustness with the optimal value of the dual problem. We thus obtain several equivalent definitions of this measure, which we summarize in the following.

\begin{cor}\label{cor:rob_lsc_equal}
Let $\F \subseteq \Omega$ be any closed and a convex set. For any state $\omega$ it holds that
\begin{equation}\begin{aligned}\label{eq:rob_lsc_definition}
    &\Rl_\F(\omega) = \lim_{\ve\to 0^+}\, \inf_{\norm{\xi - \omega}_\Omega \leq \ve} \R_\F(\xi)\\
    &= \inf_{\{\xi_n\}_{n} \in \V} \lset 1+\lambda \sbar \frac{\xi_n + \lambda \tau_n}{1 + \lambda} \in \F, \; \tau_n \in \Omega,\; \{\xi_n\}_n \to \omega \rset\\
    &= \inf \lset \lambda \sbar \omega \in \cl(\lambda \F - \C) \rset\\
    &= \sup \lset \<W, \omega\> \sbar W \in \C\*,\; \<W, \sigma\> \leq 1 \; \forall \sigma \in \F \rset
\end{aligned}\end{equation}
and the supremum in the last line is achieved whenever ${\Rl_\F(\omega) < \infty}$.
\end{cor}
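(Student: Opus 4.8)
The plan is to reduce the whole statement to the general duality machinery already in place, by first identifying $\Rl_\F$ with the optimal subfeasible value $P'$ of the robustness optimization problem, and then reading off each of the four expressions from Proposition~\ref{prop:banach_states} and Lemma~\ref{lem:rob_based} applied to the cones $\K_0 = \C$, $\K_1 = \cone(\F)$ (with bases $\B_0 = \Omega$, $\B_1 = \F$). The first displayed equality is just the defining formula for the l.s.c.\ hull, so nothing is required there.

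The central step is to prove $\Rl_\F(\omega) = P'(\omega)$, where $P'$ is the optimal subfeasible value of the primal problem with $\K_0 = \C$, $\K_1 = \cone(\F)$. I would argue at the level of epigraphs: since $\Rl_\F$ is the l.s.c.\ hull of $\R_\F$, its epigraph is the closure $\cl(\operatorname{epi}\R_\F)$, and using the recorded identity $\operatorname{epi}\R_\F = \lset (\sigma - \tau, \<U, \sigma\>) \sbar \sigma \in \cone(\F),\, \tau \in \C \rset$ one checks that a pair $(\omega, \lambda)$ lies in this closure precisely when there exist sequences $\sigma_n \in \cone(\F)$, $\tau_n \in \C$ with $\sigma_n - \tau_n \to \omega$ in base norm and $\<U, \sigma_n\> \to \lambda$. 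This is verbatim the definition of $\lambda$ being subfeasible for the primal problem, so taking the infimum over admissible $\lambda$ yields $\Rl_\F(\omega) = P'(\omega)$.

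With this identification the remaining three lines are bookkeeping. The sequence expression on the second line is exactly the based form of $P'$ in Lemma~\ref{lem:rob_based} after setting $\mu = 1+\lambda$ and rewriting $\tfrac{\xi_n + \lambda\tau_n}{1+\lambda}\in\F$, $\tau_n\in\Omega$ as $\xi_n = \mu\sigma_n - (\mu-1)\tau_n$ with $\sigma_n\in\F=\B_1$, $\tau_n\in\Omega=\B_0$. The third line follows from the alternative form $P'(\omega) = \inf\lset \lambda \sbar \omega \in \lambda\cl(\B_1 - \K_0)\rset$ of the same lemma, using $\lambda\cl(\F - \C) = \cl(\lambda\F - \C)$ for $\lambda>0$ (scaling is a homeomorphism and $\C$ is a cone), where one uses that states satisfy $\Rl_\F(\omega)\geq 1>0$. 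The fourth line is the dual value $D(\omega)$ of Proposition~\ref{prop:banach_states}, after computing $\K_1\* = \cone(\F)\* = \lset W \sbar \<W, \sigma\>\geq 0\ \forall \sigma\in\F\rset$ and translating $U - W \in \K_1\*$ into $\<W,\sigma\>\leq \<U,\sigma\> = 1$ for all $\sigma\in\F$, while $W\in\K_0\* = \C\*$. Finally, attainment of the supremum when $\Rl_\F(\omega)=P'(\omega)<\infty$ is immediate from the second clause of Proposition~\ref{prop:banach_states}: finiteness of $P'(\omega)$ means some finite value is subfeasible, hence the primal is subfeasible, hence a dual optimizer exists.

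The main obstacle is precisely this first step, since it is the only place where the topological definition of the l.s.c.\ hull must be reconciled with the sequential notion of subfeasibility. Two points need care: that the stated description of $\operatorname{epi}\R_\F$ is an exact equality --- which uses $\F\neq\emptyset$, so that a fixed $\phi\in\F$ may be added to $\sigma$ to raise the objective $\<U,\sigma\>$ to any prescribed level $\geq\R_\F(\omega)$ while keeping $\sigma\in\cone(\F)$ and $\sigma-\omega\in\C$ --- and that $\cone(\F)$ is closed, so that Proposition~\ref{prop:banach_states} applies; the latter follows since $\norm{\sigma}_\Omega = \<U,\sigma\>$ on $\C$ and $\F$ lies on the hyperplane $\<U,\cdot\>=1$, which prevents the cone from accumulating new points near the origin. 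Everything after the identification $\Rl_\F = P'$ is a mechanical transcription of Proposition~\ref{prop:banach_states} and Lemma~\ref{lem:rob_based} to the robustness cones.
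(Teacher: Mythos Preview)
Your proposal is correct and follows essentially the same route as the paper: identify $\Rl_\F$ with the optimal subfeasible value $P'$ via the closure-of-epigraph characterization, then read off the remaining expressions from Lemma~\ref{lem:rob_based} and Proposition~\ref{prop:banach_states} specialized to $\K_0=\C$, $\K_1=\cone(\F)$. You have in fact supplied a couple of details the paper leaves implicit, namely the use of $\F\neq\emptyset$ in verifying the epigraph description and the closedness of $\cone(\F)$ needed to invoke Proposition~\ref{prop:banach_states}.
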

In other words, the function $\Rl_\F(\omega)$ can be understood in three different ways: (1) as the tightest lower semicontinuous approximation to $\R_\F(\omega)$; (2) as the least value of $\R_\F$ achievable by sequences of points converging to $\omega$; (3) as the optimization over dual witnesses, which generalizes an equivalent definition of finite-dimensional robustness~\cite{brandao_2005,regula_2018,takagi_2019}.

We will show in the following section that $\Rl_\F$ satisfies requirements (1)-(4) from Sec.~\ref{sec:measures_axioms} to be considered a bona fide resource measure. We will thus regard $\Rl_\F(\omega)$ as the proper definition of the robustness in infinite-dimensional GPTs. Within quantum mechanics, we will later show that the functions $\Rl_\F$ and $\R_\F$ are actually equal in many of the practically relevant cases.

One might wonder why extra care with the definition of the robustness is required here, but the distinction between $\Rl_\F$ and $\R_\F$ is not necessary in finite-dimensional theories. This is a consequence of a very strong property of finite-dimensional spaces: all closed sets of states in such theories are compact. Compactness significantly simplifies the considered optimization problems, and in particular we have the following condition for equality of the two definitions.
\begin{pro}\label{prop:compact}
If $\F$ is compact, then $\R_\F(\omega) = \Rl_\F(\omega)$ for all $\omega$.
\end{pro}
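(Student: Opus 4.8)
The plan is to reduce the claim to the closedness of a single set and then invoke the strong-duality criterion already established. Since $\Rl_\F$ is defined as the lower semicontinuous hull of $\R_\F$, we always have $\Rl_\F \leq \R_\F$; the content of the proposition is the reverse inequality, which amounts to showing that $\R_\F$ is itself lower semicontinuous whenever $\F$ is compact. Rather than arguing this directly, I would appeal to \prop{strong_duality_from_closedness}, recalling that for the generalized robustness the relevant base is $\B_1 = \F$ and the relevant cone is $\K_0 = \C$, so that $P = \R_\F$ while $P'$ is exactly $\Rl_\F$ (the optimal subfeasible value). That proposition then guarantees $\R_\F(\omega) = \Rl_\F(\omega) = D(\omega)$ for all $\omega \in \Omega$ as soon as $\B_1 - \K_0 = \F - \C$ is a closed set. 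The whole argument thus collapses to verifying that $\F - \C$ is closed.

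To establish closedness of $\F - \C = \F + (-\C)$, I would use the standard fact that in a normed space the sum of a compact set and a closed set is closed. Concretely, take a norm-convergent sequence $z_n = f_n - c_n \to z$ with $f_n \in \F$ and $c_n \in \C$. By compactness of $\F$, pass to a subsequence along which $f_n \to f \in \F$; then $c_n = f_n - z_n \to f - z$, and since $\C$ is closed (as assumed throughout the GPT setup) we get $f - z \in \C$, whence $z = f - (f - z) \in \F - \C$. Testing closedness sequentially is legitimate because the base-norm topology is metric. The alternative hypothesis of \prop{strong_duality_from_closedness}, closedness of $\conv(\F \cup (-\Omega))$, could be verified by the same compactness argument, but the $\F - \C$ form is the most economical.

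Combining the two steps yields $\R_\F(\omega) = \Rl_\F(\omega)$ for every state $\omega$, and the identity extends to general $\omega \in \V$ by the positive homogeneity of $\R_\F$ and $\Rl_\F$ noted after \eqref{eq:opt_dual}. I expect the only genuinely delicate point to be the role of compactness as opposed to mere closedness: in infinite dimensions a closed and bounded $\F$ need not be compact, and it is precisely the extraction of a convergent subsequence $f_n \to f$ that fails without it, so the conclusion $f - z \in \C$ can break down and $\F - \C$ can fail to be closed. This is exactly the mechanism that separates the finite-dimensional case --- where every closed and bounded set of states is compact, so that $\R_\F$ and $\Rl_\F$ automatically coincide --- from the general infinite-dimensional setting that motivates introducing $\Rl_\F$ in the first place.
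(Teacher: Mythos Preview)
Your proof is correct. Both your argument and the paper's rest on the same compactness mechanism --- extract a convergent subsequence in $\F$, then use closedness of the cone to control the remaining term --- but you package it differently. You invoke \prop{strong_duality_from_closedness} as a black box and verify the single hypothesis that $\F - \C$ is closed via the standard ``compact plus closed is closed'' fact; the paper instead works directly with the level-set characterization from \lemm{rob_based} and shows that each set $\lambda \F - (\lambda - 1)\Omega$ is already closed. Your route is slightly more economical, since it checks closedness once and delegates the rest to the proposition already on the shelf; the paper's version is more self-contained but essentially re-derives the relevant piece of \prop{strong_duality_from_closedness} for these particular sets. The remark about extending to general $\omega \in \V$ via homogeneity is a minor overreach --- both \prop{strong_duality_from_closedness} and the level-set formulas in \lemm{rob_based} are stated for $\omega \in \Omega$ --- but this is a cosmetic point and the paper's own proof is implicitly restricted to states as well.
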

\begin{proof}
Recalling our characterization in Lemma~\ref{lem:rob_based} and Prop.~\ref{prop:strong_duality_from_closedness}, we have that
\begin{equation}\begin{aligned}
    \R_\F(\omega) &= \inf \lset \lambda \sbar \omega \in \lambda \F - (\lambda - 1) \Omega \rset,\\
    \Rl_\F(\omega) &= \inf \lset \lambda \sbar \omega \in \cl(\lambda \F - (\lambda - 1) \Omega) \rset.
\end{aligned}\end{equation}
We will then show that $\omega \in \lambda \F - (\lambda - 1) \Omega \iff \omega \in \cl(\lambda \F - (\lambda - 1) \Omega)$ for all $\lambda$. To this end, consider a convergent sequence $\{\lambda \sigma_n - (\lambda - 1) \tau_n\}_n \to \omega$ with $\sigma_n \in \F, \tau_n \in \Omega$. Compactness of $\F$ gives that there exists a subsequence $\{\sigma_{n_s}\}_{n_s}$ which converges to some $\sigma \in \F$. Noting that $\{ \lambda \sigma_{n_s} - (\lambda-1) \tau_{n_s} \}_{n_s}$ is a subsequence of a convergent sequence and therefore must converge to the same limit $\omega$, we have that the sequence
\begin{equation}\begin{aligned}
    \{ \lambda \sigma_{n_s} - (\lambda - 1) \tau_{n_s} \}_{n_s} - \{ \lambda \sigma_{n_s}\}_{n_s} &= -(\lambda -1) \{  \tau_{n_s} \}_{n_s}
\end{aligned}\end{equation}
is a sum of two convergent sequences and thus is convergent itself, hence $\{\tau_{n_s}\}_{n_s} \to \tau \in \Omega$ by the closedness of $\Omega$. Altogether, this gives
\begin{equation}\begin{aligned}
    \{ \lambda \sigma_{n_s} - (\lambda - 1) \tau_{n_s} \}_{n_s} \to \lambda \sigma - (\lambda - 1) \tau
\end{aligned}\end{equation}
and so $\omega = \lambda \sigma - (\lambda - 1) \tau \in \lambda \F - \C$.
\end{proof}

We remark that a lower semicontinuous variant of $\Rs$ can be defined in full analogy with the results above:
\begin{equation}\begin{aligned}
	&\Rsl_\F(\omega) = \lim_{\ve\to 0^+}\, \inf_{\norm{\xi - \omega}_\Omega \leq \ve} \Rs_\F(\xi)\\
    &= \inf_{\{\xi_n\}_n \in \V} \lset 1+\lambda \sbar \frac{\xi_n + \lambda \delta_n}{1 + \lambda} \in \F, \; \delta_n \in \F,\; \{\xi_n\}_n \to \omega \rset\\
    &= \sup \lset \<W, \omega\> \sbar 0 \leq \<W, \sigma\> \leq 1 \; \forall \sigma \in \F \rset.
\end{aligned}\end{equation}
An analogous compactness condition clearly suffices for the equality $\Rs_\F(\omega) = \Rsl_\F(\omega)$ as well.

%%%%%%%%%%%%%%%%%%%%%%%%%%%%%%%%%%%%%%%%%%%%%%%%%%%%%%%%%%%%%%%%%%%%%%%%%%%%%%%%%%%%%%%%%%%%%%%%%

\section{Robustness as a resource monotone}\label{sec:robustness_properties_monotone}

In order to fully motivate the use of $\Rl_\F$ as a resource quantifier, we now characterize its properties in the axiomatic framework of resource monotones discussed in Sec.~\ref{sec:measures_axioms}.

For completeness, we describe the properties of both the usual definition of the robustness $\R_\F$ as well as the lower semicontinuous robustness $\Rl_\F$.

\begin{lem}\label{thm:rob_measure_properties}
For any closed and convex set $\F \subseteq \Omega$, the robustness $\R_\F$ is:
\begin{enumerate}[(i)]
\item convex,
\item faithful, i.e., any $\omega \in \Omega$ satisfies $\R_\F(\omega) \geq 1$ and $\R_\F(\omega) = 1 \iff \omega \in \F$,
\item monotonic on average under free operations, i.e., for any probabilistic protocol $\{\Phi_i\}_i$ consisting of subchannels which map $\omega$ to $\Phi_i(\omega)$ with corresponding probability $p_i = \Tr \Phi_i(\omega)$ and each $\Phi_i$ satisfying $\Phi_i[\F] \subseteq \cone(\F)$, we have
\begin{equation}\begin{aligned}
    \R_\F(\omega) \geq \sum_i p_i \R_\F\left(\frac{\Phi_i(\omega)}{p_i}\right).
\end{aligned}\end{equation}
\end{enumerate}
\end{lem}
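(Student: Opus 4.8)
The plan is to verify the three claimed properties of $\R_\F$ directly from the conic-optimization characterization in Eq.~\eqref{eq:rob_primal_conic_expression}, namely $\R_\F(\omega) = \inf\lset \<U,\sigma\> \sbar \omega \cleq_\C \sigma,\; \sigma \in \cone(\F)\rset$, rather than from the original noise-mixing definition. This form is the most convenient because the constraint set is conic and the objective is linear, so convexity and the averaging inequality become almost immediate.

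For \textbf{convexity} (i), I would take two states $\omega_1,\omega_2$ with near-optimal feasible points $\sigma_1,\sigma_2 \in \cone(\F)$ satisfying $\omega_j \cleq_\C \sigma_j$ and $\<U,\sigma_j\> \leq \R_\F(\omega_j)+\ve$. For any convex weight $c\in[0,1]$, the element $c\sigma_1 + (1-c)\sigma_2$ lies in $\cone(\F)$ (since $\cone(\F)$ is a convex cone) and dominates $c\omega_1+(1-c)\omega_2$ in the order $\cleq_\C$ (since $\C$ is a convex cone), so it is feasible for the state $c\omega_1+(1-c)\omega_2$; evaluating the objective and letting $\ve\to 0$ gives the convexity inequality. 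For \textbf{faithfulness} (ii), the bound $\R_\F(\omega)\geq 1$ follows because any feasible $\sigma$ satisfies $\sigma - \omega \in \C$, whence $\<U,\sigma\> - \<U,\omega\> = \<U,\sigma-\omega\> \geq 0$ by positivity of $U$ on $\C$, and $\<U,\omega\>=1$ as $\omega\in\Omega$. The forward implication $\omega\in\F \Rightarrow \R_\F(\omega)=1$ is witnessed by $\sigma=\omega$. The reverse implication $\R_\F(\omega)=1 \Rightarrow \omega\in\F$ is the delicate direction: if $\R_\F(\omega)=1$, I would need a feasible $\sigma$ actually \emph{attaining} the value $1$, giving $\<U,\sigma-\omega\>=0$ with $\sigma-\omega\in\C$, so that strict positivity of $U$ forces $\sigma=\omega\in\cone(\F)\cap\Omega=\F$. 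The obstacle is that the infimum defining $\R_\F$ may not be attained; I expect this to be the \textbf{main difficulty}, and I would resolve it using closedness of $\F$ together with the fact that $\<U,\sigma\>\to 1$ forces a norm-bounded minimizing sequence, extracting a limit point in the closed set $\F$ — essentially the same compactness/closedness argument used in the proof of Proposition~\ref{prop:compact}.

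For \textbf{monotonicity on average} (iii), I would again use near-optimal feasible points $\sigma\in\cone(\F)$ for $\omega$ with $\omega\cleq_\C\sigma$. Since each subchannel $\Phi_i$ is positive (it maps $\C$ into $\C'$, being a piece of an instrument) and linear, applying $\Phi_i$ preserves the order: $\Phi_i(\omega)\cleq_\C \Phi_i(\sigma)$. Moreover $\Phi_i(\sigma)\in\cone(\F)$ by the hypothesis $\Phi_i[\F]\subseteq\cone(\F)$ extended by conic linearity. Thus $\Phi_i(\sigma)$ is a feasible point for $\Phi_i(\omega)$, giving $\R_\F(\Phi_i(\omega)/p_i) \leq \<U,\Phi_i(\sigma)\>/p_i$ after normalizing. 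Summing over $i$ with weights $p_i$ and using $\sum_i \Phi_i = \Lambda$ normalization-preserving so that $\sum_i \<U,\Phi_i(\sigma)\> = \<U,\sigma\>$ (since $\sigma\in\cone(\F)\subseteq\C$ and the total map preserves the functional $\<U,\cdot\>$ on the cone), the right-hand side collapses to $\<U,\sigma\> \leq \R_\F(\omega)+\ve$. Letting $\ve\to0$ yields the claim. I would take care that the normalization $p_i = \<U,\Phi_i(\omega)\>$ may vanish for some $i$, in which case the corresponding term is handled by the convention $c\infty = \operatorname{sgn}(c)\infty$ or simply omitted, and I would note that the homogeneity $\R_\F(k\omega)=k\R_\F(\omega)$ makes the normalized expression $p_i\R_\F(\Phi_i(\omega)/p_i) = \R_\F(\Phi_i(\omega))$ well-defined throughout.
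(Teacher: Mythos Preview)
Your approach is essentially the same as the paper's: parts (i) and (iii) match almost verbatim, and your direct primal argument for $\R_\F(\omega)\geq 1$ is equivalent to (and arguably cleaner than) the paper's use of the dual witness $U$.

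The one place where your sketch is imprecise is the reverse implication in (ii). You describe the argument as ``norm-bounded minimizing sequence, extracting a limit point'' and invoke Proposition~\ref{prop:compact}. But that proposition genuinely uses compactness of $\F$, which is not assumed here, and in an infinite-dimensional Banach space a bounded sequence need not have any convergent subsequence. The correct argument --- which the paper uses --- is stronger and simpler: if $\sigma_k\in\cone(\F)$ is feasible with $\<U,\sigma_k\>\to 1$, then $\sigma_k-\omega\in\C$ with $\<U,\sigma_k-\omega\>\to 0$, and the defining property of a base norm space gives $\norm{\sigma_k-\omega}_\Omega = \<U,\sigma_k-\omega\>$ for elements of $\C$. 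Thus $\sigma_k\to\omega$ in norm outright (no subsequence extraction needed), and closedness of $\F$ finishes the job after normalizing $\sigma_k/\<U,\sigma_k\>\in\F$. So the ingredient you were reaching for is not compactness but the identity $\norm{x}_\Omega=\<U,x\>$ on $\C$.
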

\begin{proof}
Point (i) is obvious from convexity of $\F$. To show faithfulness in point (ii), write
\begin{equation}\begin{aligned}
    \R_\F(\omega) = \inf \lset \< U, \sigma' \> \sbar \omega \cleq_\C \sigma', \; \sigma' \in \cone(\F) \rset.
\end{aligned}\end{equation}
Clearly, if $\omega \in \F$, then $\omega$ itself is a feasible solution and thus $\R_\F(\omega) \leq 1$; the fact that $\R_\F(\omega) \geq \Rl_\F(\omega) \geq 1$ follows by taking $U$ as a feasible solution in the dual~\eqref{eq:rob_lsc_definition}. Conversely, $\R_\F(\omega) = 1$ means that there exists a sequence $\{\sigma'_k\}_k \in \cone(\F)$ satisfying $\sigma'_k - \omega \cgeq_\C 0$ such that
\begin{equation}\begin{aligned}
    \lim_{k\to\infty} \< U, \sigma'_k \> = 1.
\end{aligned}\end{equation}
But since $\< U, \sigma'_k - \omega \> = \< U, \sigma'_k \> - 1$, we have that 
\begin{equation}\begin{aligned}
    0 = \lim_{k \to \infty} \< U, \sigma'_k - \omega \> = \lim_{k \to \infty} \norm{\sigma'_k - \omega}_{\Omega}
\end{aligned}\end{equation}
where the last equality follows since $\sigma'_k - \omega \in \C$ and $\Omega$ is a base for $\C$~\cite{hartkamper_1974} (cf.~\cite[1.37]{lami_2018-1}). We conclude that $\omega \in \cone(\F)$, which can only be the case when $\omega \in \F$.

For (iii), consider any $\sigma' \in \cone(\F)$ such that $\omega \cleq \sigma'$. For any $\Phi_i$ as in the Theorem, we have that
\begin{equation}\begin{aligned}
    \Phi_i(\omega) \cleq \Phi_i(\sigma') \in \cone(\F)
\end{aligned}\end{equation}
due to the fact that each $\Phi_i$ preserves the state cone $\C$. Then,
\begin{equation}\begin{aligned}
    \R_\F\left(\frac{\Phi_i(\omega)}{p_i}\right) \leq \frac{\< U, \Phi_i(\sigma') \>}{p_i}
\end{aligned}\end{equation}
which gives
\begin{equation}\begin{aligned}
    \sum_i p_i \R_\F\left(\frac{\Phi_i(\omega)}{p_i}\right) &\leq \sum_i p_i \left(\frac{\< U, \Phi_i(\sigma') \>}{p_i}\right)\\
    &= \sum_i \< U,  \Phi_i(\sigma') \>\\
    & = \< U, \sigma' \>,
\end{aligned}\end{equation}
where the last equality follows from the fact that $\sum_i \Phi_i$ must be normalization-preserving for the whole protocol to constitute a valid physical transformation. Since this lower bound on $\< U,  \sigma'\>$ holds for any feasible $\sigma'$, it must also hold for the the greatest lower bound $\R_\F(\omega)$, which concludes the proof.
\end{proof}

We proceed to show that the l.s.c.\ robustness $\Rl_\F$ also satisfies the resource-theoretic requirements, in addition to the desirable property of lower semicontinuity.

\begin{thm}\label{thm:rob_lsc_measure_properties}
For any closed and convex set $\F \subseteq \Omega$,  the l.s.c.\ robustness $\Rl_\F$ is:
\begin{enumerate}[(i)]
\item convex,
\item lower semicontinuous, i.e., $\displaystyle \Rl_\F(\omega) \leq \liminf_{n\to\infty} \Rl_\F(\omega_n)$ for any sequence $\{\omega_n\}_n \to \omega$,
\item faithful,
\item monotonic on average under free operations.
\end{enumerate}
\end{thm}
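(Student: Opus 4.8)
The plan is to establish the four properties largely by leveraging the structural results already obtained, rather than by direct computation. Property (ii), lower semicontinuity, I would dispatch first and essentially for free: by its very definition, $\Rl_\F$ is the lower semicontinuous hull of $\R_\F$, i.e.\ the largest l.s.c.\ minorant of $\R_\F$. Lower semicontinuity is thus built into the definition, so it suffices to invoke the standard fact (cited in the excerpt as \cite[2.1.3]{barbu_2012}) that the l.s.c.\ hull is itself lower semicontinuous, equivalently that its epigraph is the \emph{closed} epigraph of $\R_\F$ and closed epigraphs characterize l.s.c.\ functions.

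For faithfulness (iii), the key is to combine Lemma~\ref{thm:rob_measure_properties}(ii) with the sandwich $\R_\F(\omega) \geq \Rl_\F(\omega) \geq 1$ already noted in its proof. The bound $\Rl_\F(\omega)\geq 1$ comes from feasibility of $W=U$ in the dual characterization in Corollary~\ref{cor:rob_lsc_equal}, and $\Rl_\F(\omega)=1$ for $\omega\in\F$ follows since $\R_\F(\omega)=1$ there and $\Rl_\F\leq\R_\F$. The nontrivial direction is that $\Rl_\F(\omega)=1\Rightarrow\omega\in\F$. Here I would use the representation $\Rl_\F(\omega)=\inf\lset\lambda\sbar\omega\in\cl(\lambda\F-\C)\rset$ from Corollary~\ref{cor:rob_lsc_equal}: the value $1$ forces $\omega\in\cl(\F-\C)=\cl\lset\sigma-\tau\sbar\sigma\in\F,\tau\in\C\rset$, and a short argument using $\<U,\omega\>=1$ together with $\<U,\sigma\>\leq 1$ and $\<U,\tau\>\geq 0$ should squeeze the $\tau$ component to zero in norm, as in the proof of Lemma~\ref{thm:rob_measure_properties}(ii), yielding $\omega\in\cl(\F)=\F$.

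Convexity (i) and monotonicity (iv) are inherited from the corresponding properties of $\R_\F$ in Lemma~\ref{thm:rob_measure_properties}, passed through the l.s.c.\ closure. For convexity, I would argue that the closure (lower semicontinuous hull) of a convex function is again convex — which follows because the closed convex hull of the epigraph coincides with the closure of the epigraph when $\R_\F$ is already convex, so $\operatorname{epi}\Rl_\F=\cl(\operatorname{epi}\R_\F)$ is convex. For strong monotonicity on average, the cleanest route is the dual form in Corollary~\ref{cor:rob_lsc_equal}: given a feasible witness $W$ for each output $\Phi_i(\omega)/p_i$ and using that the adjoint $\Phi_i\*$ maps dual-feasible witnesses appropriately together with the normalization $\sum_i\Phi_i$ preserving $U$, one assembles a witness bounding the weighted sum. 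Alternatively, monotonicity should transfer from $\R_\F$ to $\Rl_\F$ via approximating sequences combined with lower semicontinuity.

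The main obstacle I anticipate is the faithfulness direction $\Rl_\F(\omega)=1\Rightarrow\omega\in\F$, since passing to the closure could in principle spoil the clean norm-collapse argument that works for $\R_\F$; one must ensure the limiting decomposition $\omega=\lim_n(\sigma_n-\tau_n)$ still forces $\tau_n\to 0$ and $\sigma_n\to\omega\in\F$ in base norm, using closedness of $\F$ essentially. A secondary subtlety is justifying that convexity and monotonicity genuinely survive the closure operation in infinite dimensions, where one cannot appeal to compactness and must instead work directly with the epigraph characterization and the duality of Proposition~\ref{prop:banach_states}.
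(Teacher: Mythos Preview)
Your proposal is correct and for (i)--(iii) follows essentially the paper's own argument: lower semicontinuity is by definition, convexity survives the l.s.c.\ closure (or comes from the dual), and faithfulness is the norm-collapse argument $\<U,\tau_n\>\to 0 \Rightarrow \|\tau_n\|_\Omega\to 0 \Rightarrow \sigma_n\to\omega\in\cl(\F)=\F$, exactly as in the paper. One small point to tighten in (iii): the infimum being $1$ gives a priori only $\omega\in\bigcap_{\epsilon>0}\cl\big((1+\epsilon)\F-\C\big)$, not immediately $\omega\in\cl(\F-\C)$; you need the observation that $(1+\epsilon)\sigma-\tau=\sigma-(\tau-\epsilon\sigma)$ does not quite work (the second term need not lie in $\C$), so instead pass to a diagonal sequence with $\sigma_n\in\F$ and $\<U,\tau_n\>\to 0$, which is precisely what the paper does via the subfeasibility characterization.

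For (iv) there is a genuine, if minor, difference. Your primary route---assembling $W=\sum_i\Phi_i^*(W_i)$ from optimal dual witnesses $W_i$ for each $\Phi_i(\omega)/p_i$, checking $W\in\C^*$ and $\<W,\sigma\>=\sum_i\<W_i,\Phi_i(\sigma)\>\leq\sum_i\<U',\Phi_i(\sigma)\>=1$ for $\sigma\in\F$---is valid and arguably cleaner, but relies on the existence of optimal witnesses (Proposition~\ref{prop:banach_states}), so the case where some $\Rl_\F(\Phi_i(\omega)/p_i)=\infty$ must be handled separately. The paper instead works on the primal side: take any $\lambda$ with $\omega\in\cl(\lambda\F-\C)$, pick a sequence $\lambda\sigma_k-\tau_k\to\omega$, use that each $\Phi_i$ is base-norm-nonincreasing to push the sequence through, and then invoke lower semicontinuity of $\Rl_\F$ at each output. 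This treats the finite and infinite cases uniformly. Your alternative suggestion (``approximating sequences combined with lower semicontinuity'') is exactly this primal route.
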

\begin{proof}
(i) Convexity follows from the convexity of $\R_\F$, or can be straightforwardly shown using the dual formulation of $\Rl_\F$. Lower semicontinuity (ii) follows by definition.

To see that faithfulness (iii) holds, notice that the existence of a sequence $\{\sigma_n - \tau_n\}_n \to \omega$ with $\sigma_n \in \F, \tau_n \in \C$ and $\<U, \sigma_n \> \to \<U, \omega \>$ means that $0 = \lim_{n \to \infty} \< U, \tau_n \> = \lim_{n \to \infty} \norm{\tau_n}_\Omega$, so $\{\tau_n\}_n \to 0$. This gives that, in fact, $\{\sigma_n\}_n = \{\sigma_n - \tau_n\} _n + \{\tau_n\}_n \to \omega$. We thus have that $\Rl_\F(\omega) = 1 \Rightarrow \omega \in \F$, and the other direction follows directly from the faithfulness of $\R_\F$.

To establish the strong monotonicity (iv) of $\Rl_\F$, take any feasible $\lambda$ such that $\omega \in \cl(\lambda \F-\C)$, which means that there exists a sequence $\{\lambda \sigma_k - \tau_k\}_k \to \omega$ with $\sigma_k \in \F, \tau_k \in \C$. Now, since $\Phi_i$ preserves the state cone and necessarily satisfies $\< U, \Phi_i(x) \> \leq \< U, x \>$ due to the normalization of $\sum_i \Phi_i$, each $\Phi_i$ cannot increase the base norm. This can be seen by taking any feasible $a_\pm$ in $\norm{x}_\Omega = \inf \lset \< U, a_+ \> + \<U, a_- \> \sbar x = a_+ - a_-,\, a_\pm \in \C \rset$ and noticing that
\begin{equation}\begin{aligned}
    \norm{\Phi_i(x)}_\Omega \leq \< U, \Phi_i (a_+) \> + \< U, \Phi_i (a_-) \> &\leq \< U, a_+ \> + \< U, a_i \>.
\end{aligned}\end{equation}
Then
\begin{equation}\begin{aligned}
    0 &= \lim_{k \to \infty} \norm{ \omega - \lambda \sigma_k + \tau_k }_\Omega\\
    &\geq \lim_{k \to \infty} \norm{ \Phi_i(\omega) - \lambda \Phi_i(\sigma_k) + \Phi_i(\tau_k)}_\Omega\\
    &= \lim_{k \to \infty} \norm{ \Phi_i(\omega) - \lambda \< U, \Phi_i(\sigma_k) \> \frac{\Phi_i(\sigma_k)}{\<U, \Phi_i(\sigma_k) \>} + \Phi_i(\tau_k)}_\Omega\\
    &\geq 0
\end{aligned}\end{equation}
which means that, for each $\Phi_i(\omega)$ there exists a sequence $\{ \lambda'_k \sigma'_k - \tau'_k \}_k$ converging to $\Phi_i(\omega)$ with $\lambda'_k = \lambda \< U, \Phi_i(\sigma_k) \>$.
We then get
\begin{equation}\begin{aligned}
    \sum_i p_i \Rl_\F\left(\frac{\Phi_i (\omega)}{p_i}\right) &= \sum_i \Rl_\F\left(\Phi_i(\omega)\right)\\
    &\leq \sum_i \liminf_{k \to \infty} \Rl_\F(\lambda'_k \sigma'_k - \tau'_k)\\
    &\leq \sum_i \liminf_{k \to \infty} \lambda \< U, \Phi_i(\sigma_k) \>\\
    &\leq \liminf_{k \to \infty} \sum_i \lambda \< U, \Phi_i(\sigma_k) \>\\
    & = \lambda
\end{aligned}\end{equation}
where the first inequality is due to the lower semicontinuity of $\Rl_\F$, the second inequality follows since $\Rl_\F(\lambda'_k \sigma'_k - \tau'_k) \leq \R_\F(\lambda'_k \sigma'_k - \tau'_k) \leq \lambda'_k$, the third inequality is a straightforward consequence of the definition of $\liminf$, and the last equality follows since each $\sigma_k$ is a normalized state and $\sum_i \Phi_i$ preserves normalization.
\end{proof}

We remark that the (strictly weaker) notion of strong monotonicity under selective measurements, often considered in quantum theory~\cite{vidal_2000,chitambar_2019}, can be recovered as a special case of our result by identifying each subchannel $\Phi_i$ with a single Kraus operator $K_i \cdot K_i^\dagger$.

%%%%%%%%%%%%%%%%%%%%%%%%%%%%%%%%%%%%%%%%%%%%%%%%%%%%%%%%%%%%%%%%%%%%%%%%%%%%%%%%%%%%%%%%%%%%%%%%%
%%%%%%%%%%%%%%%%%%%%%%%%%%%%%%%%%%%%%%%%%%%%%%%%%%%%%%%%%%%%%%%%%%%%%%%%%%%%%%%%%%%%%%%%%%%%%%%%%
%%%%%%%%%%%%%%%%%%%%%%%%%%%%%%%%%%%%%%%%%%%%%%%%%%%%%%%%%%%%%%%%%%%%%%%%%%%%%%%%%%%%%%%%%%%%%%%%%

\section{Robustness as the advantage in discrimination tasks}\label{sec:discrimination}

The task of state discrimination forms the foundation of the operational aspects of quantum theory \cite{HOLEVO1973337,Helstrom,kitaev_1997,chefles_2000-1,childs_2000,acin_2001-1,bae_2015,jencova_2014,watrous_2018} as well as GPTs~\cite{ludwig_1985,hartkamper_1974,kimura_2010,bae_2016-1,lami_2017}. The setup of the problem is as follows: after a state is randomly chosen from an ensemble $\{p_i, \omega_i\}$, where $\omega_i \in \Omega$ and each $p_i$ denotes the corresponding probability, the goal is to correctly guess which of the states has been selected by performing a measurement $\{M_i\}$ on the output state. As the figure of merit, we consider the expected probability of successfully distinguishing the states,
\begin{equation}\begin{aligned}
    p_{\mathrm{succ}} (\{p_i, \omega_i\}, \{M_i\}) = \sum_i p_i \< M_i, \omega_i \>.
\end{aligned}\end{equation}
To characterize the optimal guessing probability, the quantity $p_{\mathrm{succ}}$ can then be maximized over all choices of $\{M_i\} \in \M_n$, where $\M_n$ denotes the set of all $n$-effect measurements on the given space. Importantly, state discrimination reveals a fundamental relation between the base norm of the given GPT and state distinguishability ---  for a fixed ensemble of two states, the base norm distance quantifies the maximal probability of successfully discriminating the states~\cite{ludwig_1983,kimura_2010}:
\begin{equation}\begin{aligned}
    \sup_{\mbM \in \M_2} p_{\mathrm{succ}} \left( \{p_i, \omega_i\}_{i=1}^2, \mbM\right) = \frac{1}{2} \big( \!\norm{p_1 \omega_1 - p_2 \omega_2}_\Omega + 1 \big).
\end{aligned}\end{equation}
However, the discrimination of more than two states is much less well understood, and there are no general quantitative results which describe it as concisely.

Channel discrimination is a related task, where the goal is to distinguish between channels sampled from an ensemble $\{p_i, \Lambda_i\}$ with $\Lambda_i : \V \to \V'$. Here, one can use a fixed state $\omega$ as a probe system, which is sent through the randomly chosen channel and afterwards measured with a measurement $\{M_i\}$. Hence, the average probability of success for a given channel ensemble, input state, and output measurement is given by
\begin{equation}\begin{aligned}
	p_{\mathrm{succ}} \left(\{p_i, \Lambda_i\}, \{M_i\} , \omega\right) = \sum_i p_i \< M_i, \Lambda_i(\omega) \>.
\end{aligned}\end{equation}
We remark that this can be regarded as a state discrimination of the ensemble $\{p_i, \Lambda_i(\omega)\}$.
In addition to the fundamental applications of state discrimination tasks, infinite-dimensional channel discrimination finds use in applications such as quantum illumination~\cite{lloyd_2008,tan_2008} and sensing~\cite{pirandola_2018}.

It is therefore important to characterize the advantages achievable in channel discrimination tasks in various constrained settings motivated by different physical considerations. The resource-theoretic framework lends itself well to such investigations, and we will thus ask a general and broadly applicable question: \textit{how much} better can we discriminate channels by using a resourceful probe state $\omega \notin \F$, compared to a restricted setting in which we are limited to using only the free states $\sigma \in \F$?

We will consider the discrimination of channels between two GPTs in spaces $\V$ and $\V'$, each endowed with a positive cone $\C, \C'$, respectively, and a relevant closed and convex set of free states $\F \subseteq \Omega$, $\F' \subseteq \Omega'$. Let $\M_n$ denote the set of $n$-effect measurements $\{M_i\}_{i=1}^n \subset \V'\*$, and $\T_n$ denote the set of all $n$-element channel ensembles of the form $\{p_i, \Lambda_i\}_{i=1}^{n}$ with $\sum_i p_i = 1$. Given any choice of an ensemble $\A \in \T_n$ and an output measurement $\mbM \in \M_n$, the quantity that we focus on is the advantage provided by a given state over resourceless states. We quantify this through the ratio
\begin{equation}\begin{aligned}
	\frac{p_{\mathrm{succ}} (\A, \mbM, \omega)}{\sup_{\sigma \in \F}  p_{\mathrm{succ}} (\A, \mbM, \sigma)}.
\end{aligned}\end{equation}
Note here that, in order to single out the advantage provided by the state $\omega$, we require that the same measurement $\mbM$ be used in both cases. 
We then obtain a characterization of the l.s.c.\ robustness as follows.

\begin{thm}\label{thm:robustness_discrimination}
For any $\omega \in \Omega$ and any closed and convex set $\F$ it holds that
\begin{equation}\begin{aligned}
  \sup_{\substack{\mbM \in \M_n\\\A \in \T_n\\n \in \mathbb{N}}} \frac{p_{\mathrm{succ}} (\A, \mbM, \omega)}{\sup_{\sigma \in \F}  p_{\mathrm{succ}} (\A, \mbM, \sigma)} = \Rl_\F(\omega),
\end{aligned}
\label{eq:robustness_discrimination}
\end{equation}
where the maximization is over all possible channel discrimination tasks. If $\Rl_\F(\omega) < \infty$, there exists a choice of $\mbM \in \M_2$, $\A \in \T_2$ which achieves the outermost supremum on the left-hand side.
\end{thm}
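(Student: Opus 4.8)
The plan is to prove the identity by matching both sides against the dual (witness) formulation of the robustness, namely $\Rl_\F(\omega) = \sup\{\langle W,\omega\rangle \mid W\in\C^*,\ \langle W,\sigma\rangle\le 1\ \forall \sigma\in\F\}$ established in Corollary~\ref{cor:rob_lsc_equal}. The central observation is that for any task $\A=\{p_i,\Lambda_i\}_{i=1}^n$ and measurement $\mbM=\{M_i\}_{i=1}^n$ the success probability depends on the probe only through a single functional: writing $p_{\mathrm{succ}}(\A,\mbM,\omega)=\sum_i p_i\langle M_i,\Lambda_i(\omega)\rangle=\langle W_\A,\omega\rangle$ with $W_\A:=\sum_i p_i\Lambda_i^*(M_i)$, and noting that $W_\A$ is a convex combination of the effects $\Lambda_i^*(M_i)$ (each lying in $[0,U]$ because $\Lambda_i$ is positive and normalization-preserving, so $\Lambda_i^*(U')=U$), we see that $0\le W_\A\le U$ and in particular $W_\A\in\C^*$.

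For the inequality ``$\le$'', I would fix an arbitrary task and set $c:=\sup_{\sigma\in\F}p_{\mathrm{succ}}(\A,\mbM,\sigma)=\sup_{\sigma\in\F}\langle W_\A,\sigma\rangle$. If $c>0$ then $W_\A/c$ is dual-feasible, so $\langle W_\A/c,\omega\rangle\le\Rl_\F(\omega)$, i.e.\ the discrimination ratio is bounded by $\Rl_\F(\omega)$; taking the supremum over all tasks gives this direction. The degenerate case $c=0$ forces $\langle W_\A,\sigma\rangle=0$ on all of $\F$, in which case every positive multiple $tW_\A$ is dual-feasible and hence $\Rl_\F(\omega)=\infty$, so the bound holds trivially.

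For the reverse inequality and the achievability claim, I would realize any dual-feasible $W$ as the effective witness of an explicit two-outcome task. Since $W$ is continuous its order-unit norm $r:=\norm{W}^\circ_\Omega$ is finite, and $E:=W/r$ is a valid effect ($0\le E\le U$). Taking the output system to be a classical bit with pure states $e_0,e_1$ and complementary indicator effects $f_0,f_1$ (so $f_0+f_1=U'$), I define the two ``swapped'' measure-and-prepare channels $\Lambda_1(x)=\langle E,x\rangle e_0+\langle U-E,x\rangle e_1$ and $\Lambda_2(x)=\langle U-E,x\rangle e_0+\langle E,x\rangle e_1$, both positive and normalization-preserving. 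With prior $p_1=p_2=\tfrac12$ and measurement $\mbM=\{f_0,f_1\}$ one checks $\Lambda_1^*(f_0)=\Lambda_2^*(f_1)=E$, whence $W_\A=p_1E+p_2E=E$. The ratio of this task is therefore $\langle E,\omega\rangle/\sup_{\sigma\in\F}\langle E,\sigma\rangle=\langle W,\omega\rangle/\sup_{\sigma\in\F}\langle W,\sigma\rangle\ge\langle W,\omega\rangle$, using $\sup_{\sigma\in\F}\langle W,\sigma\rangle\le 1$. When $\Rl_\F(\omega)<\infty$ the dual supremum is attained (Corollary~\ref{cor:rob_lsc_equal}), so an optimal $W$ yields a single $n=2$ task whose ratio equals $\Rl_\F(\omega)$; when $\Rl_\F(\omega)=\infty$, feeding in witnesses with $\langle W,\omega\rangle\to\infty$ makes the ratio diverge, again matching the right-hand side.

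The hard part, and the only place requiring genuine care, is the explicit construction in the last paragraph: verifying that the two measure-and-prepare maps are legitimate GPT channels into the auxiliary output space (positivity of $E$ and $U-E$, normalization, and validity of $\{f_0,f_1\}$ as a measurement), and exploiting the scale-invariance of the ratio so that the factor $r$ cancels and realizing $W$ only up to a positive constant suffices. The upper bound, by contrast, is essentially immediate once $p_{\mathrm{succ}}$ is rewritten through the effective witness $W_\A$ and identified with a rescaled dual-feasible point.
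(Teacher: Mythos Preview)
Your proof is correct, and the overall strategy---matching both sides to the dual witness formula from Corollary~\ref{cor:rob_lsc_equal}---is the same as the paper's, but the execution differs in both halves. For the upper bound, the paper argues on the primal side: it fixes a subfeasible sequence $\{\lambda\sigma_k-\tau_k\}_k\to\omega$ and uses continuity of $p_{\mathrm{succ}}(\A,\mbM,\cdot)$ together with positivity of each $\langle M_i,\Lambda_i(\tau_k)\rangle$ to obtain $p_{\mathrm{succ}}(\A,\mbM,\omega)\le\lambda\sup_{\sigma'\in\F}p_{\mathrm{succ}}(\A,\mbM,\sigma')$. Your route via the effective witness $W_{\A}=\sum_i p_i\Lambda_i^*(M_i)$ is more transparent about \emph{why} the bound holds (every task is literally a rescaled dual-feasible point), at the small cost of invoking the adjoints $\Lambda_i^*$; this is legitimate here because positive normalization-preserving maps are automatically base-norm contractive and hence bounded. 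For the lower bound the paper's construction is considerably simpler: it takes the degenerate two-element ensemble $\{(1,\mathrm{id}),(0,\Lambda')\}$ on the \emph{same} space together with the measurement $\{W/\|W\|_\Omega^\circ,\,U-W/\|W\|_\Omega^\circ\}$, which already yields the ratio $\langle W,\omega\rangle/\sup_{\sigma\in\F}\langle W,\sigma\rangle$ without introducing an auxiliary classical output system or the swapped measure-and-prepare channels. Your two-channel gadget is correct and has the minor aesthetic advantage of a non-degenerate prior, but the identity-channel trick delivers the same witness with less machinery.
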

\begin{proof}
Consider any sequence {$\{\lambda \sigma_k - \tau_k\}_k \to \omega$ such that $\sigma_k \in \F, \tau_k \in \C$.} Then, for any $n \in \mathbb{N}$, any $\mbM \in \M_n$, and any $\A \in \T_n$, it holds that
\begin{equation}\begin{aligned}
    p_{\mathrm{succ}} (\A, \mbM, \omega) &= \sum_{i=1}^n p_i \< M_i, \Lambda_i (\omega) \>\\
    &= \lim_{k \to \infty} \sum_{i=1}^n p_i \< M_i, \Lambda_i (\lambda \sigma_k - \tau_k) \>\\
    &\leq \limsup_{k \to \infty} \sum_{i=1}^n p_i \< M_i, \Lambda_i (\lambda \sigma_k) \>\\
    &\leq \lambda \sup_{\sigma' \in \F} \sum_{i=1}^n p_i \< M_i, \Lambda_i (\sigma') \>\\
    &= \lambda \sup_{\sigma' \in \F} p_{\mathrm{succ}} (\A, \mbM, \sigma')
\end{aligned}\end{equation}
where in the second line we used the continuity of $p_{\mathrm{succ}} (\A, \mbM, \cdot)$ for fixed $\A$ and $\mbM$, and in the third line we used that $\omega \in \C \Rightarrow \Lambda_i(\omega) \in \C'$ and each $M_i \in \C'\*$, hence $\< M_i, \Lambda_i (\lambda \sigma_k - \tau_k) \> \leq \<M_i, \Lambda_i(\lambda \sigma_k) \> \, \forall k$. Since this holds for any feasible $\lambda$, we have
\begin{equation}\begin{aligned}
    &p_{\mathrm{succ}} (\A, \mbM, \omega) \\
    &\qquad \leq \inf \lset \lambda \sup_{\sigma' \in \F}  p_{\mathrm{succ}} (\A, \mbM, \sigma') \sbar \omega \in \cl(\lambda \F - \C) \rset\\
    &\qquad = \Rl_\F(\omega) \, \sup_{\sigma' \in \F} p_{\mathrm{succ}} (\A, \mbM, \sigma')
\end{aligned}\end{equation}
by definition of the robustness. This gives an upper bound on $\displaystyle \frac{p_{\mathrm{succ}} (\A, \mbM, \omega)}{\sup_{\sigma' \in \F} p_{\mathrm{succ}} (\A, \mbM, \sigma')}$ for any $n$, $\mbM$, and $\A$, which means that the least upper bound obeys
\begin{equation}\begin{aligned}
    \sup_{\substack{\mbM \in \M_n\\\A \in \T_n\\n \in \mathbb{N}}} \frac{p_{\mathrm{succ}} (\A, \mbM, \omega)}{\sup_{\sigma' \in \F} p_{\mathrm{succ}} (\A, \mbM, \sigma')} \leq \Rl_\F(\omega).
\end{aligned}\end{equation}

For the other inequality, let $W \in \V\*$ be any feasible dual solution such that $W \in \C\*$ and $\<W, \sigma\> \leq 1 \; \forall \sigma \in \F$. Consider then the measurement $\mbM' \in \M_2$ defined as $\{ W / \norm{W}_\Omega^\circ, \, U - W / \norm{W}_\Omega^\circ \}$  and the channel ensemble $\A'$ given by the probability distribution $\{ 1, 0 \}$ and channels $\{ \mathrm{id}, \Lambda' \}$ with $\Lambda'$ arbitrary. This gives
\begin{equation}\begin{aligned}
    \sup_{\substack{\mbM \in \M_n\\\A \in \T_n\\n \in \mathbb{N}}} \frac{p_{\mathrm{succ}} (\A, \mbM, \omega)}{\sup_{\sigma \in \F}  p_{\mathrm{succ}} (\A, \mbM, \sigma)} &\geq \frac{p_{\mathrm{succ}} (\A', \mbM', \omega)}{\sup_{\sigma \in \F}  p_{\mathrm{succ}} (\A', \mbM', \sigma)}\\
    &= \frac{ \frac{1}{\norm{W}_\Omega^\circ} \< W, \omega \>}{\frac{1}{\norm{W}_\Omega^\circ} \sup_{\sigma \in \F} \< W, \sigma \>}\\
    &\geq \< W, \omega \>.
\end{aligned}\end{equation}
Optimizing over all feasible $W$, we have that the supremum on the left-hand side must equal $ \Rl_\F(\omega)$. Furthermore, from Prop.~\ref{prop:banach_states} we know that when $\Rl_\F(\omega) < \infty$, there exists an optimal choice of $W$ such that $\< W, \omega \> = \Rl_\F(\omega)$, which allows us to construct the measurement $\mbM'$ achieving this value.
\end{proof}

The result establishes the l.s.c.\ generalized robustness $\Rl_\F$ as a precise quantifier of the advantage provided by a given state over all resourceless states $\sigma \in \F$. This gives the measure a direct operational application, elevating $\Rl_\F$ from a monotone defined through geometric considerations to a physically meaningful quantity, and revealing a general connection between discrimination tasks and resource quantifiers based on conic optimization problems. This not only extends the previous relations of this type found in general finite-dimensional resource theories of states~\cite{takagi_2019-2,takagi_2019}, but adds to an increasing list of discrimination settings in which robustness-based quantifiers play a vital role~\cite{piani_2015,piani_2016,takagi_2019,uola_2019-1,uola_2019,yuan_2020,skrzypczyk_2019-1,oszmaniec_2019-1}.

Several extensions of the result in Thm.~\ref{thm:robustness_discrimination} can be considered. In some settings in quantum mechanics, entanglement can be employed to increase the probability of success in discrimination tasks --- by only sending part of an entangled probe system $\omega$ through the channel, the correlations at the output can be exploited to distinguish channels more effectively~\cite{kitaev_1997}. In order to single out the resource theory corresponding to the set $\F$ as the source of all advantages, we have not considered possible trade-offs between entanglement and other resources here.

Furthermore, the advantage provided by a given state $\omega$ could change if one allows different measurements to be used in the discrimination of $\{\Lambda_i(\omega)\}$ and in the discrimination of $\{\Lambda_i(\sigma)\}$. Again, we do not consider this modified setting here explicitly in order to maintain full generality of our results.
\txb{To illustrate why allowing different measurements in \eqref{eq:robustness_discrimination} could lead to significantly different behavior, consider a state $\omega$ for which $2<\Rl_\F(\omega)<\infty$. By Theorem~\ref{thm:robustness_discrimination}, there exist $\A \in \T_2$ and $\mbM \in \M_2$ satisfying that
\begin{equation}
	\frac{p_{\mathrm{succ}} (\A, \mbM, \omega)}{\sup_{\sigma \in \F}  p_{\mathrm{succ}} (\A, \mbM, \sigma)} > 2.
\end{equation}
Since $p_{\mathrm{succ}} (\A, \mbM, \omega)\leq 1$ always holds, it follows that $\sup_{\sigma \in \F}  p_{\mathrm{succ}} (\A, \mbM, \sigma)< 1/2$. But a success probability of $1/2$ can always be achieved with a maximum likelihood guess. What happens is that the guessing strategy $\mbM$ is so poor --- when applied to free states --- that even random guessing outperforms it. In this case, it would therefore be more natural to replace the $\mbM$ in the denominator on the left-hand side of \eqref{eq:robustness_discrimination} with a new variable $\mbM'$. Rather than compromising the applicability of Theorem~\ref{thm:robustness_discrimination}, this argument shows that it expresses the best of itself when the channel ensemble has more than $1/\Rl_\F(\omega)$ elements. Luckily enough, this is often the case in applications --- the interested Reader can find an example showcasing this at the end of Section~\ref{subsubsec:squeezed_states}.
}

Interestingly, for some finite-dimensional resources, the robustness measure $\R_\F$ was shown to remain the figure of merit --- that is, the maximal advantage provided by a given state --- in both of the extended settings that we described above. This holds in particular for discrete-variable quantum entanglement~\cite{takagi_2019-2,bae_2019}. It would be an interesting question to investigate cases where this can be established also in the infinite-dimensional regime.

%%%%%%%%%%%%%%%%%%%%%%%%%%%%%%%%%%%%%%%%%%%%%%%%%%%%%%%%%%%%%%%%%%%%%%%%%%%%%%%%%%%%%%%%%%%%%%%%%
%%%%%%%%%%%%%%%%%%%%%%%%%%%%%%%%%%%%%%%%%%%%%%%%%%%%%%%%%%%%%%%%%%%%%%%%%%%%%%%%%%%%%%%%%%%%%%%%%
%%%%%%%%%%%%%%%%%%%%%%%%%%%%%%%%%%%%%%%%%%%%%%%%%%%%%%%%%%%%%%%%%%%%%%%%%%%%%%%%%%%%%%%%%%%%%%%%%
%%%%%%%%%%%%%%%%%%%%%%%%%%%%%%%%%%%%%%%%%%%%%%%%%%%%%%%%%%%%%%%%%%%%%%%%%%%%%%%%%%%%%%%%%%%%%%%%%
%%%%%%%%%%%%%%%%%%%%%%%%%%%%%%%%%%%%%%%%%%%%%%%%%%%%%%%%%%%%%%%%%%%%%%%%%%%%%%%%%%%%%%%%%%%%%%%%%

\section{Quantum mechanics}\label{sec:quantum}

We will now fix a separable Hilbert space $\H$. Let $\T(\H)$ be the Banach space of trace-class operators on $\H$, and $\B(\H)$ its continuous dual, the Banach space of bounded linear operators on $\H$. Let $\D(\H) \subseteq \T(\H)$ be the subset of density operators. Here the cone $\C$ corresponds to the cone of positive operators in $\T(\H)$, while $\C\*$ is the cone of positive operators in $\B(\H)$; we will use $\cgeq$ to denote inequality with respect to either of these. The base norm is given by the trace norm $\norm{\cdot}_1$. We use the notation $\<A,B\>$ for the Hilbert-Schmidt inner product $\Tr(A^\dagger B)$. Given a state $\ket\psi \in \H$, we will write $\psi$ for $\proj{\psi}$.

This section complements the dedicated paper~\cite{our_main}, in which we focus on the case of continuous-variable quantum resources. In particular, the discussion below constitutes a technical companion to Ref.~\cite{our_main}, providing a detailed derivation and several extensions of the results mentioned there. Since all of our results for general GTPs immediately apply to quantum mechanics as a special case, the results of previous sections already serve as proofs of some of the results of~\cite{our_main}. Specifically, Thm.~1 in \cite{our_main} is Thm.~\ref{thm:robustness_discrimination} here; Thm.~2 in \cite{our_main} is Thm.~\ref{thm:rob_lsc_measure_properties} here; Thm.~3 in \cite{our_main} is a consequence of our characterization in Sec.~\ref{sec:defining_robustness} and in particular the duality result in Prop.~\ref{prop:banach_states}. The other results will be established below.

%%%%%%%%%%%%%%%%%%%%%%%%%%%%%%%%%%%%%%%%%%%%%%%%%%%%%%%%%%%%%%%%%%%%%%%%%%%%%%%%%%%%%%%%%%%%%%%%%%%
%%%%%%%%%%%%%%%%%%%%%%%%%%%%%%%%%%%%%%%%%%%%%%%%%%%%%%%%%%%%%%%%%%%%%%%%%%%%%%%%%%%%%%%%%%%%%%%%%%%%%%%%
%%%%%%%%%%%%%%%%%%%%%%%%%%%%%%%%%%%%%%%%%%%%%%%%%%%%%%%%%%%%%%%%%%%%%%%%%%%%%%%%%%%%%%%%%%%%%%%%%%%%%%%%

\subsection{Strong duality}\label{sec:strong_duality_resources}

Returning to our discussion in Sec.~\ref{sec:optimization_banach}, it is of interest to ask when strong duality holds, that is, when the two optimization problems $\R_\F(\omega)$ and $\Rl_\F(\omega)$ have the same optimal value. This would allow one to simplify the application of the robustness, as the optimization over sequences $\{\xi_n\}_n \to \omega$ in the definition of $\Rl_\F$ (see Cor.~\ref{cor:rob_lsc_equal}) would no longer be necessary.

First, we remark that insight from finite-dimensional spaces does not easily generalize here. In finite-dimensional theories, the existence of a single state $\sigma \in \sint(\C) \cap \F$ is a necessary and sufficient condition to ensure that the robustness $\R_\F$ remains finite for all states~\cite{datta_2009}, and indeed also ensures the strong duality of the measure~\cite{takagi_2019}. In infinite-dimensional spaces, the conditions become more complicated, and one can easily see that an analogous requirement cannot be fulfilled.
\begin{lem}
If $\H$ is infinite dimensional, there is no state $\sigma$ for which it holds that $\forall \rho \in \D(\H) \; \exists \, \lambda \geq 0 \text{ s.t. } \rho \cleq \lambda \sigma$.
\end{lem}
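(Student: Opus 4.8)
The plan is to show that for \emph{every} candidate state $\sigma$ one can construct a single density operator $\rho$ that witnesses the failure of the domination property, i.e.\ such that $\lambda\sigma - \rho$ is not positive for any $\lambda \ge 0$. It will suffice to take $\rho = \proj{\psi}$ a pure state. The starting observation is that any $\sigma \in \D(\H)$ is a positive trace-class operator, hence compact, and admits a spectral decomposition $\sigma = \sum_k s_k \proj{e_k}$ with $\{e_k\}$ an orthonormal basis of $\H$, $s_k \ge 0$, and $\sum_k s_k = 1$. Since $\dim\H = \infty$ the index set is infinite, and summability of the $s_k$ forces $s_k \to 0$; in particular $\inf_k s_k = 0$ and $0$ belongs to the spectrum of $\sigma$. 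The idea is to exploit exactly this accumulation of eigenvalues at $0$: a vector $\psi$ that overweights the eigenvectors with very small eigenvalues will be ``too singular'' for any fixed multiple of $\sigma$ to dominate.

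First I would dispose of the case $\ker\sigma \ne \{0\}$: picking any unit vector $\psi \in \ker\sigma$ gives $\braket{\psi|\sigma|\psi} = 0$, whence $\braket{\psi|(\lambda\sigma - \proj{\psi})|\psi} = -1 < 0$ for all $\lambda$, so $\proj{\psi}$ is not dominated and we are done. The substantive case is $\ker\sigma = \{0\}$, where all $s_k > 0$. Using $s_k \to 0$ I would extract a strictly increasing subsequence of indices with $s_{k_j} \le 2^{-j}$ and set $\psi = \sum_j c_{k_j} e_{k_j}$ with $c_{k_j}^2 = 6/(\pi^2 j^2)$, so that $\sum_j c_{k_j}^2 = 1$ makes $\psi$ a genuine unit vector while $\sum_j c_{k_j}^2 / s_{k_j} \ge \sum_j \tfrac{6}{\pi^2}\, 2^j / j^2 = \infty$.

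The main step is to certify non-positivity of $\lambda\sigma - \proj{\psi}$ for an arbitrary fixed $\lambda$ without having to manipulate an unbounded inverse of $\sigma$. To this end I would test the operator against the \emph{truncated} vectors $\phi_N = \sum_{j \le N} (c_{k_j}/s_{k_j})\, e_{k_j}$, each of which lies in $\H$. A direct computation gives $\braket{\phi_N|\sigma|\phi_N} = \braket{\phi_N|\psi} = S_N$ with $S_N \coloneqq \sum_{j \le N} c_{k_j}^2 / s_{k_j}$, so that
\[
  \braket{\phi_N | (\lambda\sigma - \proj{\psi}) | \phi_N} = \lambda S_N - S_N^2 = S_N(\lambda - S_N).
\]
Because $S_N \to \infty$, for $N$ large enough this quantity is strictly negative, so $\lambda\sigma - \proj{\psi}$ is not positive for any $\lambda$. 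Hence $\proj{\psi}$ cannot be dominated by $\lambda\sigma$ for any $\lambda \ge 0$, which is the desired conclusion.

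The only delicate point I anticipate is the rigour surrounding the (generally unbounded) operator $\sigma^{-1}$: the natural obstruction is the statement $\braket{\psi|\sigma^{-1}|\psi} = \infty$, equivalently $\psi \notin \ran(\sigma^{1/2})$, which cannot be fed directly into an operator inequality because $\sigma^{-1}$ is unbounded and $\psi$ lies outside its natural domain. The truncation device above is precisely what circumvents this, trading a single assertion about an unbounded operator for a sequence of elementary finite-rank estimates; the same device also absorbs the $\ker\sigma \ne \{0\}$ case if one prefers a uniform treatment rather than splitting into two cases.
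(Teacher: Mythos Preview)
Your proof is correct and follows essentially the same approach as the paper: write down the spectral decomposition of $\sigma$, build a pure state $\psi$ in the eigenbasis with $\sum_k |c_k|^2/s_k = \infty$, and then certify the failure of $\proj{\psi}\cleq\lambda\sigma$ by testing against finitely truncated vectors $\phi_N$ rather than invoking the unbounded $\sigma^{-1}$ directly. The only noteworthy difference is the choice of witness: the paper takes simply $\ket{\psi}=\sum_n \sqrt{s_n}\,\ket{e_n}$ and $\ket{\phi_N}\propto\sum_{n\le N} s_n^{-1/2}\ket{e_n}$, which gives the clean estimate $N\le\lambda$ without the need to pass to a subsequence or introduce the auxiliary coefficients $6/(\pi^2 j^2)$; your construction is a bit more elaborate but arrives at the same contradiction by the same mechanism.
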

\begin{proof}
This is a consequence of the fact that a choice of $\lambda \in \RR_+$ such that $\rho \cleq \lambda \sigma$ exists if and only if $\ran(\sqrt{\rho}) \subseteq \ran(\sqrt{\sigma})$~\cite{douglas_1966}, but since each $\sqrt{\sigma}$ is a compact operator, its range cannot be the whole space $\H$~\cite[p. 177]{conway_1985}.

To see this more explicitly, assume such a state $\sigma$ exists. Clearly, its support must be the whole $\H$. Let $\sigma=\sum_{n=0}^\infty p_n \ketbra{e_n}{e_n}$ be its spectral decomposition, with $p_n>0$ for all $n$ and $\{\ket{e_n}\}_n$ an orthonormal set. Set $\ket{\psi}\coloneqq \sum_n \sqrt{p_n}\ket{e_n}$, and for $N\in \NN$ consider the truncated states $\ket{\phi_N}\coloneqq \left( \sum_{n=0}^N p_n^{-1} \right)^{-1/2} \sum_{n=0}^N p_n^{-1/2} \ket{e_n}$. If $\ketbra{\psi}{\psi} \cleq \lambda \sigma$ were to hold for some real $\lambda$, then evaluating this on $\ket{\phi_N}$ would yield $N\leq \lambda$. Since this would need to hold for all $N\in \NN$, we arrive at a contradiction.
\end{proof}
As a consequence, one might expect the robustness $\Rl_\F$ to be infinite for some states --- indeed, the existence of infinitely resourceful states is often a natural property for infinite-dimensional systems~\cite{eisert_2002,keyl_2003}, so it is not a surprising fact.

The non-existence of interior points of $\C$ motivates us to study alternative approaches to strong duality. In order to establish a useful sufficient condition for this property, recall that the space $\T(\H)$ of trace-class operators can be regarded as the Banach dual space~\cite[Definition~1.10.1]{megginson_1998} of $\K(\H)$, the space of compact operators on $\H$. Regarding the spaces as the dual pair $\sigma(\K(\H), \T(\H))$, trace-class operators can then be endowed with the weak* topology induced by the seminorms of the form $\left|\< \cdot, K \>\right|$ for all $K \in \K(\H)$~\cite[Definition~2.6.1]{megginson_1998}. This gives us the following condition.
\begin{thm}\label{thm:strong_duality_from_weakstar}
If the conic hull $\cone(\F)$ is closed in the weak* topology, then strong duality holds and we have $\R_\F(\rho) = \Rl_\F(\rho)$ and $\Rs_\F(\rho) = \Rsl_\F(\rho)$ for all $\rho \in \D(\H)$.
\end{thm}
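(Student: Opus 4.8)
The plan is to reduce the statement to a norm-closedness property and then invoke Proposition~\ref{prop:strong_duality_from_closedness}. For the generalized robustness we are in the case $\K_0 = \C$, $\K_1 = \cone(\F)$, with bases $\B_0 = \D(\H)$ and $\B_1 = \F$, so that $\B_1 - \K_0 = \F - \C$; for the standard robustness $\K_0 = \K_1 = \cone(\F)$ and $\B_1 - \K_0 = \F - \cone(\F)$. By Proposition~\ref{prop:strong_duality_from_closedness} it therefore suffices to show that these two difference sets are \emph{norm} closed, the weak* closedness of $\cone(\F)$ being the only extra hypothesis at our disposal.

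Before the main argument I would record three facts about the weak* topology $\sigma(\T(\H),\K(\H))$. First, since $\H$ is separable, $\K(\H)$ is norm separable, so the weak* topology is metrizable on norm-bounded sets; combined with Banach--Alaoglu this makes every closed trace-norm ball weak* \emph{sequentially} compact. Second, the positive cone $\C$ is weak* closed: for each unit vector $\ket\psi$ the functional $x \mapsto \langle \proj{\psi}, x\rangle$ is weak* continuous because $\proj{\psi} \in \K(\H)$, and $\C$ is the intersection of the corresponding closed half-spaces over all $\ket\psi$. Third, the trace norm is weak* lower semicontinuous, being the norm of a dual Banach space, and on $\C$ it coincides with the trace.

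The core step is the closedness argument, which I would run in parallel for both sets. Take a norm-convergent sequence $x_n \coloneqq \sigma_n - \tau_n \to x$ with $\sigma_n \in \F$ and $\tau_n$ lying in $\C$ (for $\R_\F$) or in $\cone(\F)$ (for $\Rs_\F$). Since $\norm{\sigma_n}_1 = 1$ and $\{x_n\}$ converges, the sequence $\tau_n = \sigma_n - x_n$ is trace-norm bounded; both $\{\sigma_n\}$ and $\{\tau_n\}$ thus lie in a fixed ball and, by sequential weak* compactness, admit a common subsequence along which $\sigma_{n_k} \xrightarrow{w^*} y$ and $\tau_{n_k} \xrightarrow{w^*} z$. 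Weak* closedness of $\cone(\F)$ places $y \in \cone(\F)$, while weak* closedness of $\C$ (resp. of $\cone(\F)$) places $z$ in the appropriate cone; since norm convergence implies weak* convergence, linearity gives $x = y - z$. Writing $y = \mu \sigma$ with $\sigma \in \F$ and $\mu = \Tr y$, the weak* lower semicontinuity of the trace norm yields $\mu \le \liminf_k \norm{\sigma_{n_k}}_1 = 1$.

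The decisive manoeuvre is to absorb the ``trace deficit'' $1-\mu$: a priori the weak* limit $y$ may lose mass at infinity, so that $\mu < 1$ and $y \notin \F$. One rewrites $x = \mu\sigma - z = \sigma - \bigl[(1-\mu)\sigma + z\bigr]$, and since $(1-\mu)\sigma + z$ again belongs to $\C$ (resp. to $\cone(\F)$, using that $\cone(\F)$ is a convex cone stable under addition and nonnegative scaling), we conclude $x \in \F - \C$ (resp. $x \in \F - \cone(\F)$); the degenerate case $y=0$ is covered by picking any $\sigma \in \F$, which exists unless $\F = \emptyset$, in which case both equalities are trivial. This absorption is precisely why subtracting an entire cone, rather than insisting on exact normalization, makes the set closed. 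With norm closedness established, Proposition~\ref{prop:strong_duality_from_closedness} delivers $\R_\F(\rho) = \Rl_\F(\rho)$ and $\Rs_\F(\rho) = \Rsl_\F(\rho)$ for all $\rho$. I expect the main obstacle to be exactly this loss of trace mass under weak* limits --- the failure of the trace to be weak* continuous --- since it is what blocks a naive ``extract a convergent subsequence of states'' argument, and its resolution through the cone-absorption trick is the heart of the proof.
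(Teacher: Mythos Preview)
Your proof is correct. Both you and the paper reduce to Proposition~\ref{prop:strong_duality_from_closedness}, but through the two different sufficient conditions it offers: the paper shows that $\conv(\F\cup(-\D(\H)))$ is norm closed, whereas you show that $\B_1-\K_0$ (i.e., $\F-\C$ and $\F-\cone(\F)$) is norm closed. The paper's route is purely topological: it notes that $\F_\leq = \cone(\F)\cap\U$ and $\D_\leq=\C\cap\U$ are intersections of weak*-closed sets with the weak*-compact unit ball (Banach--Alaoglu), hence weak*-compact, so $\conv(\F_\leq\cup(-\D_\leq))$ is weak*-compact, hence weak*-closed, hence norm closed --- no sequences, no separability, no trace-deficit issue. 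Your route is sequential, relying on separability of $\K(\H)$ to metrize the weak* topology on bounded sets and extract convergent subsequences; the cone-absorption step you emphasize is exactly what repairs the possible loss of mass under weak* limits. The paper's argument is shorter and does not invoke separability of $\H$ (though the paper assumes it anyway); yours is more explicit about the mechanism and makes visible why subtracting a full cone, rather than a base, is what produces closedness.
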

\begin{proof}
We will show that $\B \coloneqq \conv(\F\cup(-\D(\H)))$ is closed in the trace norm topology and invoke Prop.~\ref{prop:strong_duality_from_closedness} to prove that strong duality for $R_\F$ is satisfied. To this end, define the subnormalized sets $\F_\leq \coloneqq \lset \lambda \sigma \sbar \lambda \in [0,1],\; \sigma \in \F \rset$ and $\D_\leq \coloneqq \lset \lambda \rho \sbar \lambda \in [0,1],\; \rho \in \D(\H) \rset$. Recall that the unit ball $\U = \lset X \in \T(\H) \sbar \norm{X}_1 \leq 1 \rset$ is weak*-compact by the Banach-Alaoglu theorem~\cite[Theorem~2.6.18]{megginson_1998}. Noting that $\cone(\F)$ is weak*-closed by assumption and $\C = \cone(\D(\H))$ is readily verified to be weak*-closed, we have that $\F_\leq = \cone(\F) \cap \U$ and $\D_\leq = \C \cap \U$ are both intersections of a weak*-compact and a weak*-closed set, meaning that they are both weak*-compact. But then $\B = \conv(\F_\leq \cup (-\D_\leq))$ is the convex hull of a union of two convex, weak*-compact sets, and is thus weak*-compact itself, which in particular implies that it is weak*-closed. As the weak* topology is coarser than the norm topology on $\T(\H)$, we have that $\B$ is norm-closed as desired.

An analogous statement holds for $\conv(\F\cup(-\F))$, showing that also $\Rs_\F$ satisfies strong duality.
\end{proof}

Using the above result and our previously established conditions, we will see that most resources of practical relevance do indeed satisfy strong duality. Examples are:
\begin{enumerate}[(i)]
\item Nonclassicality theory (see Sec.~\ref{sec:nonclassicality}). Here, the cone $\cone(\F)$ was recently shown to be weak*-closed in Ref.~\cite{relentNC}, so we immediately get strong duality using Thm.~\ref{thm:strong_duality_from_weakstar}. 
\item Entanglement theory. We will explicitly show this using Thm.~\ref{thm:strong_duality_from_weakstar} in Sec.~\ref{sec:entanglement}.
\item Coherence theory. We will explicitly show this using Thm.~\ref{thm:strong_duality_from_weakstar} in Sec.~\ref{sec:coherence}.
\item Athermality (thermodynamics). Here, the set $\F$ consists of a single state (the Gibbs state), which is clearly compact in any of the considered topologies, with strong duality following by Prop.~\ref{prop:compact} or Thm.\ref{thm:strong_duality_from_weakstar}.
\item Energy-constrained resource theories. Specifically, for a general closed and convex set $\F \subseteq \D(\H)$, one can define the restriction $\F' = \lset \sigma \in \F \sbar \<H, \sigma\> \leq E \rset$ where $H$ is an unbounded positive self-adjoint operator with a discrete spectrum of finite multiplicity. Any such set $\F'$ is in fact compact~\cite{holevo_2003}, hence strong duality for $\R_{\F'}$ is ensured by Prop.~\ref{prop:compact}.
\end{enumerate}

The constraints on the energy of the system as in (v) are a common way to avoid discontinuities in infinite dimensions~\cite{wehrl_1978,eisert_2002,holevo_2003,winter_2016-1}. The assumptions on the allowed Hamiltonian are standard --- any Hamiltonian obeying the so-called Gibbs hypothesis~\cite{winter_2016-1}, \txb{i.e., such that $\Tr e^{-\beta H}<\infty$ for all $\beta>0$,} will indeed be an unbounded operator with a discrete spectrum of finite multiplicity.

A complete characterization of resources for which $\R_\F(\rho) = \Rl_\F(\rho)$ is an interesting question which we leave for future work.

%%%%%%%%%%%%%%%%%%%%%%%%%%%%%%%%%%%%%%%%%%%%%%%%%%%%%%%%%%%%%%%%%%%%%%%%%%%%%%%%%%%%%%%%%%%%%%%%%%%%%%%%
%%%%%%%%%%%%%%%%%%%%%%%%%%%%%%%%%%%%%%%%%%%%%%%%%%%%%%%%%%%%%%%%%%%%%%%%%%%%%%%%%%%%%%%%%%%%%%%%%%%%%%%%
%%%%%%%%%%%%%%%%%%%%%%%%%%%%%%%%%%%%%%%%%%%%%%%%%%%%%%%%%%%%%%%%%%%%%%%%%%%%%%%%%%%%%%%%%%%%%%%%%%%%%%%%

\subsection{General bounds for the robustness}

We will now present a couple of handy results that will allow us to compute the robustness of pure states.

\begin{lem}\label{lem:robustness_pure}
Let $\ket{\psi}\in \H$ and an arbitrary set of free states $\F$, it holds that
\begin{equation}
    \R_\F(\psi)  = \inf_{\sigma\in \F} \braket{\psi|\sigma^{-1}|\psi} ,
    \label{rob_pure_alt}
\end{equation}
where $\braket{\psi|\sigma^{-1}|\psi} = \left\|\sigma^{-1/2} \ket{\psi}\right\|^2$, and the infimum is restricted to those free states $\sigma$ such that $\ket{\psi}\in \mathrm{dom}\left(\sigma^{-1/2}\right) = \ran\left(\sigma^{1/2}\right)$, with $\mathrm{dom}$ denoting the domain.
\end{lem}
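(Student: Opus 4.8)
The plan is to prove the identity $\R_\F(\psi) = \inf_{\sigma \in \F} \braket{\psi | \sigma^{-1} | \psi}$ by establishing the two inequalities separately, exploiting the conic characterization
\begin{equation*}
  \R_\F(\psi) = \inf \lset \< U, \sigma' \> \sbar \psi \cleq_\C \sigma',\; \sigma' \in \cone(\F) \rset
\end{equation*}
from Eq.~\eqref{eq:rob_primal_conic_expression}, where here $\psi = \proj{\psi}$ and $U = \id$. The key object is the operator inequality $\proj{\psi} \cleq \mu\, \sigma$ for a free state $\sigma \in \F$ and a scalar $\mu \geq 0$, since the optimal $\sigma' \in \cone(\F)$ can be written as $\mu\sigma$ with $\< \id, \mu\sigma \> = \mu$, and minimizing $\mu$ for fixed $\sigma$ followed by minimizing over $\sigma$ will reproduce the right-hand side.

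First I would show that, for a fixed free state $\sigma$, the smallest $\mu$ for which $\proj{\psi} \cleq \mu\sigma$ holds is exactly $\braket{\psi | \sigma^{-1} | \psi}$ (interpreted as $+\infty$ when $\ket\psi \notin \ran(\sigma^{1/2})$). The forward direction of this sub-claim is the analytic heart of the argument: I expect to invoke the operator-theoretic result of Douglas~\cite{douglas_1966}, already cited in the excerpt, which characterizes when $A A^\dagger \cleq \lambda\, B B^\dagger$ in terms of range inclusion $\ran(A) \subseteq \ran(B)$ together with a factorization $A = B C$ with $\norm{C}^2 \leq \lambda$. Applied with $A = \ket\psi$ (a rank-one map) and $B = \sigma^{1/2}$, this converts the operator inequality $\proj{\psi} \cleq \mu\sigma$ into the statement $\ket\psi \in \ran(\sigma^{1/2})$ and $\norm{\sigma^{-1/2}\ket\psi}^2 \leq \mu$, which is precisely $\braket{\psi|\sigma^{-1}|\psi} \leq \mu$. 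The two desired inequalities in the lemma then follow by taking the infimum over $\mu$ for fixed $\sigma$ and subsequently over $\sigma \in \F$; the restriction of the infimum to $\sigma$ with $\ket\psi \in \ran(\sigma^{1/2})$ is automatic because any $\sigma$ failing this condition forces $\mu = \infty$ and hence cannot improve the infimum (unless no finite-$\mu$ $\sigma$ exists, in which case $\R_\F(\psi) = \infty$ and both sides agree).

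The main obstacle I anticipate is the careful treatment of the unbounded operator $\sigma^{-1/2}$ and its domain, since $\sigma$ generically has spectrum accumulating at $0$ in the infinite-dimensional setting, so $\braket{\psi | \sigma^{-1} | \psi}$ must be read as the (possibly infinite) squared norm $\norm{\sigma^{-1/2}\ket\psi}^2 = \sum_n \abs{\braket{e_n|\psi}}^2 / p_n$ in a spectral basis $\sigma = \sum_n p_n \proj{e_n}$. To make the direction $\proj{\psi} \cleq \mu\sigma \Rightarrow \braket{\psi|\sigma^{-1}|\psi} \leq \mu$ rigorous without appealing to Douglas, one can alternatively argue directly: evaluating the operator inequality on truncated vectors of the form $\ket{\phi_N} = c_N \sum_{n \le N} p_n^{-1/2}\braket{\psi|e_n}^{*}\ket{e_n}$ (the same truncation trick used in the preceding lemma's proof) yields partial sums of $\sum_n \abs{\braket{e_n|\psi}}^2/p_n$ bounded by $\mu$, and monotone convergence gives the full inequality. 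The converse direction $\braket{\psi|\sigma^{-1}|\psi} = \mu < \infty \Rightarrow \proj{\psi}\cleq\mu\sigma$ is the more routine verification: writing $\ket\psi = \sigma^{1/2}\ket\chi$ with $\ket\chi = \sigma^{-1/2}\ket\psi$ and $\norm{\ket\chi}^2 = \mu$, one checks $\proj{\psi} = \sigma^{1/2}\proj{\chi}\sigma^{1/2} \cleq \norm{\ket\chi}^2 \sigma^{1/2}\sigma^{1/2} = \mu\sigma$ using $\proj{\chi}\cleq \norm{\ket\chi}^2 \id$. Finally, I would note that the lemma is stated for $\R_\F$ rather than $\Rl_\F$, so no passage to the closure or subfeasible value is needed; the identity is a direct consequence of the primal conic formulation, and the fact that the infimum over $\sigma\in\F$ need not be attained causes no difficulty since the claim is an equality of infima.
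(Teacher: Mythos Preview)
Your proposal is correct and follows essentially the same route as the paper: both reduce to showing that for fixed $\sigma$ the minimal $\mu$ with $\proj{\psi}\cleq\mu\sigma$ equals $\braket{\psi|\sigma^{-1}|\psi}$, the paper proving achievability via Cauchy--Schwarz (your factorization $\proj{\psi}=\sigma^{1/2}\proj{\chi}\sigma^{1/2}\cleq\|\chi\|^2\sigma$ is the same inequality repackaged) and optimality via the truncated-vector evaluation you also describe. Your invocation of Douglas' theorem is a legitimate shortcut for the forward direction that the paper does not use explicitly, but it is equivalent in content to the truncation argument both of you outline.
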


\begin{proof}
We employ the characterization in the second line of \eqref{eq:rob_primal_conic_expression}. For a fixed state $\sigma$, it is easy to see that $\min\left\{ \lambda: \ketbra{\psi}{\psi}\leq \lambda \sigma \right\} = \braket{\psi|\sigma^{-1}|\psi}$. In fact, on the one hand, by the Cauchy--Schwarz inequality for every vector $\ket{x}$ we have that
\begin{equation}\begin{aligned}
    \left|\braket{x|\psi}\right|^2 &= \left|\braket{x|\sigma^{1/2}\sigma^{-1/2}|\psi}\right|^2 \\&\leq \left\|\sigma^{1/2} \ket{x}\right\|^2 \left\| \sigma^{-1/2}\ket{\psi}\right\|^2\\& = \braket{\psi|\sigma^{-1}|\psi} \braket{x|\sigma|x} \\&= \braket{x|\left( \braket{\psi|\sigma^{-1}|\psi} \sigma \right) |x} ,
\end{aligned}\end{equation}
implying that $\ketbra{\psi}{\psi}\leq \braket{\psi|\sigma^{-1}|\psi} \sigma$. On the other hand, diagonalize $\sigma$ as $\sigma = \sum_{n=0}^\infty p_n \ketbra{e_n}{e_n}$, where $p_n>0$ for all $n$ without loss of generality, and set $\ket{\phi_N} \coloneqq \left(\sum_{n=0}^{N-1} p_n^{-1} \ketbra{e_n}{e_n}\right) \ket{\psi}$. Let $\lambda$ be such that $\ketbra{\psi}{\psi}\leq \lambda \sigma$. Taking the overlap of both sides with the vector $\ket{\phi_N}$ yields 
\begin{equation}\begin{aligned}
\braket{\psi| \left(\sum_{n=0}^{N-1} p_n^{-1} \ketbra{e_n}{e_n}\right) |\psi}^2 &= \left|\braket{\phi_N|\psi}\right|^2\\& \leq \lambda \braket{\phi_N|\sigma|\phi_N} \\&= \lambda \braket{\psi| \left(\sum_{n=0}^{N-1} p_n^{-1} \ketbra{e_n}{e_n}\right) |\psi} .
\end{aligned}\end{equation}
Note that since $\ket{\psi}\in \mathrm{dom}\left(\sigma^{-1/2}\right)$ we must have that $\braket{e_n|\psi}\neq 0$ for some $n$, in turn implying that $\braket{\psi| \left(\sum_{n=0}^{N-1} p_n^{-1} \ketbra{e_n}{e_n}\right) |\psi}>0$. By simplifying we obtain that $\lambda \geq \braket{\psi| \left(\sum_{n=0}^{N-1} p_n^{-1} \ketbra{e_n}{e_n}\right) |\psi} = \sum_{n=0}^{N-1} p_n^{-1} \left|\braket{\psi|e_n}\right|^2$. Taking the limit $N\to\infty$ yields $\lambda \geq \braket{\psi|\sigma^{-1}|\psi}$, completing the proof.
\end{proof}

\begin{lem}\label{lem:robustness_lowerbound}
For any states $\rho, \omega \in \D(\H)$, it holds that
\begin{equation}\begin{aligned} \label{rob_lower_bound}
\Rl_\F(\rho) \geq \frac{\< \rho, \omega \>}{\sup_{\sigma \in \F} \<\sigma, \omega \>}.
\end{aligned}\end{equation}
In particular,
\begin{equation}\begin{aligned} \label{rob_pure_lower_bound}
    \Rl_\F(\psi) \geq \frac{1}{\sup_{\sigma \in \F} \cbraket{\psi|\sigma|\psi}^2}
\end{aligned}\end{equation}
for any pure state $\ket{\psi}\in \H$.
\end{lem}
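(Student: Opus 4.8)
The plan is to obtain both inequalities as lower bounds on $\Rl_\F$ directly from the dual (witness) characterization established in Corollary~\ref{cor:rob_lsc_equal}, namely
\begin{equation*}
  \Rl_\F(\rho) = \sup \lset \<W,\rho\> \sbar W \in \C\*,\; \<W,\sigma\> \leq 1 \;\; \forall \sigma \in \F \rset .
\end{equation*}
Since every dual-feasible $W$ already yields a lower bound, the whole argument collapses to exhibiting one good witness built out of the test state $\omega$, the natural candidate being $\omega$ itself after a suitable rescaling.

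For \eqref{rob_lower_bound}, I first note that $\omega \in \D(\H)$ is a positive trace-class operator, hence also a positive bounded operator, so $\omega \in \C\*$; moreover $\<\sigma,\omega\> = \Tr(\sigma\omega) \geq 0$ for every $\sigma \in \F$. Writing $N \coloneqq \sup_{\sigma\in\F}\<\sigma,\omega\>$ and assuming first $N \in (0,\infty)$, I take the rescaled witness $W \coloneqq \omega/N$. It lies in $\C\*$ and obeys $\<W,\sigma\> = \<\sigma,\omega\>/N \leq 1$ for all $\sigma \in \F$, so it is dual-feasible; substituting it into the formula above and using $\<W,\rho\> = \<\rho,\omega\>/N$ gives exactly \eqref{rob_lower_bound}. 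The only point requiring care is the degenerate case $N = 0$: there $W = t\omega$ is feasible for every $t \geq 0$, whence $\Rl_\F(\rho) \geq t\<\rho,\omega\>$ for all $t$, forcing $\Rl_\F(\rho) = \infty$ as soon as $\<\rho,\omega\> > 0$ --- in agreement with reading the right-hand side of \eqref{rob_lower_bound} as $+\infty$ under the conventions of Section~\ref{sec:prelim} --- while if $\<\rho,\omega\> = 0$ the bound is vacuous.

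The pure-state estimate \eqref{rob_pure_lower_bound} then follows by specializing \eqref{rob_lower_bound} to $\rho = \omega = \proj\psi$ and simplifying the Hilbert--Schmidt inner products, using $\<\proj\psi,\proj\psi\> = \Tr\proj\psi = 1$ and $\<\sigma,\proj\psi\> = \braket{\psi|\sigma|\psi}$ for every $\sigma \in \F$. I do not anticipate any serious obstacle: the construction uses a single explicit witness rather than solving the dual program, so the only genuinely technical points are confirming that the candidate actually sits in the dual cone $\C\*$ (immediate, since positive trace-class operators are positive and bounded) and the bookkeeping of the degenerate and infinite-valued cases noted above; the remaining steps are direct substitutions into the dual characterization.
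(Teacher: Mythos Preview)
Your proposal is correct and follows essentially the same argument as the paper: both exhibit $W = \omega / \sup_{\sigma \in \F}\<\sigma,\omega\>$ as a feasible witness in the dual characterization of $\Rl_\F$ from Corollary~\ref{cor:rob_lsc_equal}, and then specialize to $\omega = \proj\psi$ for the pure-state bound. Your treatment is in fact slightly more thorough, since you also handle the degenerate case $N=0$ that the paper leaves implicit.
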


\begin{proof}
The first claim is easily proved by noting that any $\frac{\omega}{\sup_{\sigma \in \F} \<\sigma, \omega \>}$ is a feasible dual solution in \eqref{eq:rob_lsc_definition}. The second claim follows by taking $\omega = \ketbra{\psi}{\psi}$.
\end{proof}

An interesting property of the pure-state bound in Eq.~\eqref{rob_pure_lower_bound} is that it can relate the robustness $\R_\F$ to the minimal relative entropy distance from the set of free states. Indeed, the robustness always provides an upper bound to the minimal relative entropy as~\cite{datta_2009,berta_2015}
\begin{equation}\begin{aligned}
\inf_{\sigma \in \F} D(\rho \| \sigma) \leq \log \R_\F(\rho),
\end{aligned}\end{equation}
where $D(\rho\|\sigma) = \< \rho, \log\rho - \log\sigma \>$ is the quantum (Umegaki) relative entropy. In certain cases, this bound can be tight.
\begin{cor}\label{cor:relative_entropy}
Whenever $ \R_\F(\psi)$ equals $\displaystyle\frac{1}{\sup_{\sigma \in \F} \cbraket{\psi|\sigma|\psi}^2}$, 
it holds that $\displaystyle \inf_{\sigma \in \F} D(\psi \| \sigma) = \log \R_\F(\psi)$.
\end{cor}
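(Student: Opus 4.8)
The plan is to prove the identity by squeezing $\inf_{\sigma\in\F} D(\psi\|\sigma)$ between two bounds that coincide exactly under the stated hypothesis. One direction is already available for free: the inequality $\inf_{\sigma\in\F} D(\psi\|\sigma)\leq\log\R_\F(\psi)$ recalled above holds for every state, so the entire task reduces to proving the matching lower bound $\inf_{\sigma\in\F} D(\psi\|\sigma)\geq\log\R_\F(\psi)$ in the regime where $\R_\F(\psi)$ saturates the bound of Lemma~\ref{lem:robustness_lowerbound}.

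First I would reduce the relative entropy to a single expectation value. Because $\psi=\ketbra{\psi}{\psi}$ is a rank-one projector, the eigenvalue of $\psi$ along $\ket{\psi}$ equals $1$ and hence $\braket{\psi|\log\psi|\psi}=0$; the Umegaki relative entropy then collapses to $D(\psi\|\sigma)=-\braket{\psi|\log\sigma|\psi}$, with the convention that it is $+\infty$ whenever $\ket{\psi}$ leaves the support of $\sigma$, a case in which the lower bound below is vacuous.

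The analytic core is a single use of the concavity of the logarithm. Diagonalizing $\sigma=\sum_n p_n\ketbra{e_n}{e_n}$ and setting $q_n=\cbraket{\psi|e_n}^2$, the weights $\{q_n\}$ form a probability distribution on the support of $\sigma$, so Jensen's inequality yields $\braket{\psi|\log\sigma|\psi}=\sum_n q_n\log p_n\leq\log\sum_n q_n p_n=\log\braket{\psi|\sigma|\psi}$. Consequently $D(\psi\|\sigma)\geq-\log\braket{\psi|\sigma|\psi}$ for every $\sigma\in\F$, and passing to the infimum over $\sigma$ gives $\inf_{\sigma\in\F} D(\psi\|\sigma)\geq-\log\sup_{\sigma\in\F}\braket{\psi|\sigma|\psi}$.

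It then only remains to identify the right-hand side with $\log\R_\F(\psi)$, which is exactly the role of the hypothesis: it equates $\R_\F(\psi)$ with the reciprocal of the overlap supremum appearing in Lemma~\ref{lem:robustness_lowerbound}, so that $\log\R_\F(\psi)=-\log\sup_{\sigma\in\F}\braket{\psi|\sigma|\psi}$ and the chain above becomes $\inf_{\sigma\in\F} D(\psi\|\sigma)\geq\log\R_\F(\psi)$; combined with the generic upper bound, this pins the infimum to $\log\R_\F(\psi)$. I expect no deep obstacle here, only two points of care: the Jensen weights $\{q_n\}$ must genuinely normalize, which holds precisely on the support of $\sigma$ where $D(\psi\|\sigma)$ is finite; and the power of the overlap supremum must be tracked consistently between the logarithmic estimate, which delivers the first power of $\sup_{\sigma\in\F}\braket{\psi|\sigma|\psi}$, and the saturation hypothesis, so that the two pinching bounds agree term by term.
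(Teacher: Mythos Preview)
Your proof is correct and follows essentially the same squeeze as the paper: the paper phrases it as the sandwich $D_{\min}(\psi\|\sigma)\leq D(\psi\|\sigma)\leq D_{\max}(\psi\|\sigma)$ between the R\'enyi-$0$ and R\'enyi-$\infty$ divergences (citing Datta), while your Jensen argument is precisely the explicit derivation of the $D_{\min}\leq D$ inequality for rank-one $\psi$. The only cosmetic difference is that the paper names the endpoints as R\'enyi divergences rather than working out the concavity step by hand.
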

\begin{proof}
The result follows since the condition is equivalent to
\begin{equation}\begin{aligned}
   \inf_{\sigma \in \F} \log \frac{1}{\<\psi, \sigma \>} \eqqcolon \inf_{\sigma \in \F} D_{\min} (\psi\| \sigma) &= \inf_{\sigma \in \F} D_{\max} (\psi \| \sigma)\\ &\coloneqq \log \R_\F(\psi),
\end{aligned}\end{equation}
and $D_{\min}(\psi \| \cdot)$ (R\'{e}nyi relative entropy of order 0) and $D_{\max}(\psi \| \cdot)$ (sandwiched R\'{e}nyi relative entropy of order $\infty$) provide, respectively, a lower and upper bound to the Umegaki relative entropy~\cite{datta_2009}.\end{proof}

In finite-dimensional spaces, the condition of the Corollary is always satisfied for maximally resourceful states in any convex resource theory~\cite{liu_2019,regula_2020}, and can also hold for broader classes of states obeying certain symmetries~\cite{bravyi_2019,seddon_2020}. Although it is unclear if states of this kind can always be found in infinite-dimensional theories, we will find such examples for relevant resources.

\subsection{Robustness and seminorms}\label{sec:seminorms}

Many resources of practical relevance have a structure defined by free \textit{pure} states and convex combinations thereof. Examples include the resource theory of entanglement, nonclassicality, non-Gaussianity, and coherence. Although the results we considered above apply to more general theories, it will be useful to study additional properties that this class of resources enjoys.

We then consider a norm-closed set $\f \subseteq \H$ of free pure states in the underlying Hilbert space, and define the set of free states $\F_\f$ to consist of convex combinations of such free states:
\begin{equation}\begin{aligned}
    \F_\f \coloneqq \cl \conv \lset \proj{v} \sbar \ket{v} \in \f \rset.
\label{eq:free states closure}
\end{aligned}\end{equation}
Here, the closure can be taken either in the weak topology or the norm topology, as the two coincide for convex sets. This definition is equivalent to~\cite{holevo_2005}
\begin{equation}\begin{aligned}
    \F_\f = \lset \int_{\f} \proj{v}\, \mathrm{d}\mu(\proj{v}) \sbar \mu \in P(\f) \rset.
\label{eq:free states integral}
\end{aligned}\end{equation}
where $P(\f)$ is the set of Borel probability measures supported on the closed set $\lset \proj{v} \sbar \ket{v} \in \f \rset$.
\begin{remark}
Taking the closure is, in general, necessary. In particular, when $\f$ denotes the set of product states on the tensor product of two infinite-dimensional Hilbert spaces, Holevo, Shirokov, and Werner~\cite{holevo_2005} showed the existence of states $\sigma \in \F_\f$ which cannot be written as a convex combination $\sum_{k=1}^{r} p_k \proj{v_k}$ with $\sum_{k=1}^r p_k = 1$ and $\ket{v_k} \in \f$, even when one allows infinite discrete combinations with $r = \infty$.
\end{remark}

%%%%%%%%%%%%%%%%%%%%%%%%%%%%%%%

We will assume without loss of generality that $\f$ is balanced, i.e., $\ket{v} \in \f \Rightarrow \lambda \ket{v} \in \f$ for any $\lambda \in \mathbb{C}$ s.t. $|\lambda| = 1$. This condition can be easily fulfilled for any non-balanced set $\f'$ by defining $\f = \lset \lambda \ket{v} \sbar |\lambda| = 1,\; \ket{v} \in \f' \rset$, since the corresponding sets $\F_\f$ are the same.

Given any such set, define the seminorm $\norm{\cdot}_\f^\circ \colon \H \to \mathbb{R}_+$ by
\begin{equation}\begin{aligned}
    \norm{\ket{x}}_\f^\circ \coloneqq \sup \lset \left|\braket{x | v}\right| \sbar \ket{v} \in \f \rset,
\end{aligned}\end{equation}
and $\norm{\cdot}_\f \colon \H \to \mathbb{R}_+ \cup \{\infty\}$ by
\begin{equation}\begin{aligned}\label{eq:arveson_seminorm}
    \norm{\ket{y}}_\f \coloneqq \sup \lset \left|\braket{y | x}\right| \sbar \norm{\ket{x}}_\f^\circ \leq 1 \rset.
\end{aligned}\end{equation}
Arveson~\cite{arveson_2009} characterized this function and explicitly showed that it is convex, absolutely homogeneous, and lower semicontinuous. In fact, it corresponds to the gauge function (Minkowski functional) of the set $\cl \conv \f$, i.e.
\begin{equation}\begin{aligned}\label{eq:norm_minkowski}
    \norm{\ket{y}}_\f = \inf \lset \mu > 0 \sbar \ket{y} \in \mu \cl \conv \f \rset.
\end{aligned}\end{equation}
However, although we follow \cite{arveson_2009} in using norm-like notation for both functions, $\norm{\cdot}_\f$ defines a valid norm on $\H$ only if $\cl \conv \f$ has nonempty interior, which is generally not the case. Indeed, we will later see that $\norm{\cdot}_\f$ can be infinite in some cases.

Ref.~\cite{arveson_2009} then extends $\norm{\cdot}_\f^\circ$ to a seminorm on $\B(\H)$:
\begin{equation}\begin{aligned}
    \norm{Z}_{\F_\f}^\circ \coloneqq \sup \lset \left|\braket{v | Z | u}\right| \sbar \ket{v}, \ket{u} \in \f \rset.
\end{aligned}\end{equation}
With a slight modification of what was done in~\cite{arveson_2009}, we define a function $\T(\H) \to \RR_+ \cup \{\infty\}$ as
\begin{equation}\begin{aligned}
    \norm{X}_{\F_\f} \coloneqq \sup \lset \left|\< Z, X \>\right| \sbar \norm{Z}_{\F_\f}^\circ \leq 1 \rset,
\end{aligned}\end{equation}
where the optimization is over operators $Z \in \B(\H)$. The optimization can be restricted to self-adjoint $Z$ whenever $X$ is self-adjoint.

We will now relate this function with the robustness.
\begin{pro}\label{thm:robustness_pure_norm}
For any $\rho \in \D(\H)$, it holds that
\begin{equation}\begin{aligned}
    \Rl_\F(\rho)  \leq \norm{\rho}_{\F_\f}.
\end{aligned}\end{equation}
For any rank-one state $\psi = \proj{\psi}$, it further holds that
\begin{equation}\begin{aligned}
    \Rl_\F(\psi)  = \norm{\psi}_{\F_\f} = \norm{\ket{\psi}}_\f^2.
\label{eq:robustness_pure_norm}
\end{aligned}\end{equation}
\end{pro}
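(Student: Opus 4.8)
The plan is to route everything through the dual formula for the l.s.c.\ robustness from Corollary~\ref{cor:rob_lsc_equal},
\begin{equation}
  \Rl_\F(\rho) = \sup \lset \<W, \rho\> \sbar W \cgeq 0,\; \<W, \sigma\> \leq 1 \; \forall \sigma \in \F \rset ,
\end{equation}
which pairs naturally with the sup-definition of $\norm{\cdot}_{\F_\f}$. For the first (general) inequality I would take any feasible dual witness $W$, so that $\<W, \proj{v}\> = \braket{v|W|v} \leq 1$ for every $\ket{v} \in \f$. Since $W \cgeq 0$, the form $(x,y)\mapsto\braket{x|W|y}$ is positive semidefinite, and Cauchy--Schwarz gives $|\braket{v|W|u}| \leq \sqrt{\braket{v|W|v}}\sqrt{\braket{u|W|u}} \leq 1$ for all $\ket{v},\ket{u}\in\f$, i.e.\ $\norm{W}_{\F_\f}^\circ \leq 1$. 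Hence $W$ is admissible in the definition of $\norm{\rho}_{\F_\f}$, and since $\<W,\rho\>\geq 0$ is real we obtain $\<W,\rho\> = |\<W,\rho\>| \leq \norm{\rho}_{\F_\f}$; taking the supremum over $W$ yields $\Rl_\F(\rho) \leq \norm{\rho}_{\F_\f}$.

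For the rank-one identity I would prove the chain
\begin{equation}
  \norm{\ket{\psi}}_\f^2 \;\leq\; \Rl_\F(\psi) \;\leq\; \norm{\psi}_{\F_\f} \;\leq\; \norm{\ket{\psi}}_\f^2 ,
\end{equation}
whose middle step is just the general inequality specialized to $\rho=\psi$, so that all three quantities must coincide. For the leftmost inequality I would use an explicit rank-one witness: for any $\ket{a}$ with $\norm{\ket{a}}_\f^\circ\leq 1$, the operator $W=\proj{a}$ is positive and satisfies $\<W,\proj{v}\>=|\braket{a|v}|^2\leq(\norm{\ket{a}}_\f^\circ)^2\leq 1$, which via the integral representation~\eqref{eq:free states integral} extends to $\<W,\sigma\>\leq 1$ for all $\sigma\in\F$; thus $W$ is feasible and gives $\<W,\psi\>=|\braket{\psi|a}|^2$. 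Optimizing over $\ket{a}$ then produces $\Rl_\F(\psi)\geq\norm{\ket{\psi}}_\f^2$, with the correct value $\infty$ when $\norm{\ket{\psi}}_\f=\infty$.

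The remaining and most delicate step is the rightmost inequality $\norm{\psi}_{\F_\f}\leq\norm{\ket{\psi}}_\f^2$, for which I would invoke Arveson's identification~\eqref{eq:norm_minkowski} of $\norm{\cdot}_\f$ with the gauge of $\cl\conv\f$. Setting $c=\norm{\ket{\psi}}_\f$ (the case $c=\infty$ being trivial), closedness of $\cl\conv\f$ gives $\ket{\psi}=c\ket{w}$ with $\ket{w}\in\cl\conv\f$. For any feasible $Z$, i.e.\ $\norm{Z}_{\F_\f}^\circ\leq 1$, and any finite convex combination $\ket{w_0}=\sum_i\lambda_i\ket{v_i}$ of free vectors (balancedness of $\f$ letting the $\lambda_i$ be nonnegative reals summing to $1$), the bound $|\braket{v_i|Z|v_j}|\leq 1$ and the triangle inequality give $|\braket{w_0|Z|w_0}|\leq\sum_{i,j}\lambda_i\lambda_j=1$; since $Z$ is bounded, $\ket{w}\mapsto\braket{w|Z|w}$ is continuous, so this bound passes to the limit and $|\braket{w|Z|w}|\leq 1$. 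Therefore $|\<Z,\psi\>|=c^2|\braket{w|Z|w}|\leq c^2$, and the supremum over $Z$ yields $\norm{\psi}_{\F_\f}\leq\norm{\ket{\psi}}_\f^2$. The main obstacle I anticipate is precisely this passage to the closure in~\eqref{eq:norm_minkowski}: one must check that a generic element of $\cl\conv\f$ is approximated by honest finite convex combinations and that the quadratic estimate survives the limit, which is where boundedness of $Z$ and the convex, balanced structure of $\cl\conv\f$ are essential.
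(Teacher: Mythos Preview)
Your proposal is correct and follows essentially the same route as the paper's proof. Both arguments use Cauchy--Schwarz to show that positive feasible witnesses for $\Rl_\F$ satisfy $\norm{W}_{\F_\f}^\circ\leq 1$ (yielding the general inequality), invoke Arveson's gauge identification~\eqref{eq:norm_minkowski} to prove $\norm{\psi}_{\F_\f}\leq\norm{\ket\psi}_\f^2$, and close the chain via rank-one witnesses $W=\proj{a}$; you simply spell out in more detail the limiting argument for the Arveson step, which the paper leaves implicit by citing~\cite{arveson_2009}.
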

\begin{proof}
We begin by noticing that, for any positive self-adjoint operator $Z \in \B(\H)$ and pure states $\ket{u}, \ket{v}$, the Cauchy-Schwarz inequality for the inner product on $\H$ gives
\begin{equation}\begin{aligned}
    \left|\braket{u | Z | v}\right|^2 &= \left|\< \sqrt{Z} \ket{u}, \sqrt{Z} \ket{v} \>\right|^2\\
    &\leq \norm{\sqrt{Z} \ket{u}}^2 \norm{\sqrt{Z} \ket{v}}^2\\
    &= \braket{u | Z | u} \braket{v | Z | v}
\end{aligned}\end{equation}
where $\sqrt{Z}$ is the positive square root of $Z$. This implies that, for any $Z \cgeq 0$, we get
\begin{equation}\begin{aligned}
    \norm{Z}_{\F_\f}^\circ = \sup \lset \left|\braket{v | Z | v}\right| \sbar \ket{v} \in \f \rset
\end{aligned}\end{equation}
since optimizing over $\ket{u}, \ket{v} \in \f$ cannot achieve a higher value of $\left|\braket{u | Z | v}\right|$. This gives
\begin{equation}\begin{aligned}\label{eq:rob_norm_ineq}
    \norm{\rho}_{\F_\f} &= \sup \lset \left|\< Z, \rho \>\right| \sbar \norm{Z}_\F^\circ \leq 1 \rset\\
    &\geq \sup \lset \left|\< Z, \rho \>\right| \sbar Z\cgeq 0,\, \norm{Z}_\F^\circ \leq 1 \rset\\
    &=\sup \lset \< Z, \rho \> \sbar Z\cgeq 0,\; \braket{v | Z | v} \leq 1 \; \forall \ket{v} \in \f \rset\\
    &=\Rl_\F(\rho)
\end{aligned}\end{equation}
where the last equality follows from the dual formulation of the robustness (see Cor.~\ref{cor:rob_lsc_equal}).

Using an argument based on \cite[7.2]{arveson_2009}, we can then relate $\norm{\psi}_{\F_\f}$ with $\norm{\ket{\psi}}_\f^2$ for any pure state. Notice that for any $\ket\psi$ we have $\ket\psi / \norm{\ket\psi}_\f \in \cl \conv \f$, so for all $Z \in \B(\H)$ it holds that
\begin{equation}\begin{aligned}
\left| \frac{\bra{\psi}}{\norm{\ket\psi}_\f} Z \frac{\ket{\psi}}{\norm{\ket\psi}_\f} \right| \leq \norm{Z}_{\F_\f}^\circ.
\end{aligned}\end{equation}
This then implies that
\begin{equation}\begin{aligned}
\norm{\psi}_{\F_\f} = \sup_{\norm{Z}_{\F_\f}^\circ \leq 1} \cbraket{\psi|Z|\psi} \leq \norm{\ket\psi}^2_\f.
\end{aligned}\end{equation}
Together with Eq.~\eqref{eq:rob_norm_ineq} we have thus shown that $\Rl_\F(\psi) \leq \norm{\ket\psi}_{\f}^2$, and so it remains to show the other inequality. To this end, consider any feasible $\ket{x} \in \H$ such that $\norm{\ket{x}}_\f^\circ \leq 1$. The operator $W = \proj{x}$ is then clearly positive, and 
\begin{equation}\begin{aligned}
    \sup_{\sigma \in \F} \< W, \sigma \> = \sup_{\ket{v} \in \f} \braket{ v | W | v} = (\norm{\ket{x}}_\f^\circ)^2 \leq 1
\end{aligned}\end{equation}
which means that $\Rl_\F(\psi)  \geq \< W, \psi \> = \left|\braket{\psi | x}\right|^2$. Optimizing over all feasible $\ket{x}$ gives $\Rl_\F(\psi)  \geq \norm{\ket{\psi}}_\f^2$.
\end{proof}

The Theorem extends a general relation from finite-dimensional spaces~\cite{regula_2018}, which includes several well-known correspondences: the generalized robustness of entanglement was previously shown to equal the squared sum of Schmidt coefficients of a pure state~\cite{vidal_1999,steiner_2003,harrow_2003} (which is indeed the norm $\norm{\cdot}_\f$ in this case~\cite{rudolph_2001}), and the generalized robustness of coherence was shown to equal the $\ell^1$ norm for pure states~\cite{piani_2016}.

A useful way to view the result of the Theorem is as follows: for any rank-one state $\psi$, it suffices to optimize over rank-one witnesses in the dual formulation of Eq.~\eqref{eq:rob_lsc_definition}.

%%%%

%%%%%%%%%%%%%%%%%%%%%%%%%%%%%%%%%%%%%%%%%%%%%%%%%%%%%%%%%%%%%%%%%%%%%%%%%%%%%%%%%%%%%%%%%%%%%
\section{Examples and applications}\label{sec:examples}

We now consider explicit applications of our results in the characterization of several important quantum resources: optical nonclassicality, entanglement, coherence, and genuine non-Gaussianity.

\subsection{Nonclassicality}\label{sec:nonclassicality}

Coherent states \cite{Schroedinger1926-coherent, Klauder1960, glauber_1963, sudarshan_1963} and their probabilistic mixtures are widely recognized as the most classical among all quantum states of a quantum harmonic oscillator, and hence defined as classical states. From a resource theory perspective, the identification of this particular set of states is motivated by the fact that they can be easily produced and manipulated with standard techniques in quantum optical settings. Moreover, states which are nonclassical according to this distinction can be exploited to obtain operational advantages in applications such as linear optical quantum computation \cite{knill_2001}, quantum metrology \cite{yadin_2018, kwon_2019} and entanglement generation \cite{Aharonov1966, kim_2002, wang_2002, asboth_2005}.

Formally, we consider the quantum theory of a single harmonic oscillator. The corresponding Hilbert space is then $\H_1\coloneqq L^2(\RR)$, i.e., the space of square-integrable function on the real line. The annihilation and creation operators, denoted with $a,a^\dag$, respectively, satisfy the commutation relations $[a,a^\dag] = I$. Fock states are defined by $\ket{n}\coloneqq (a^\dag)^n \ket{0} /\sqrt{n!}$, where $\ket{0}$ is the vacuum state. For a complex number $\alpha\in \CC$, the corresponding coherent state is given by $\ket{\alpha}\coloneqq e^{-|\alpha|^2/2}\sum_{n=0}^\infty \frac{\alpha^n}{\sqrt{n!}} \ket{n} = \mathcal{D}_\alpha \ket{0}$, where $\mathcal{D}_\alpha \coloneqq e^{\alpha a^\dag - \alpha^* a}$ is a displacement operator. We deem free all the so-called classical states~\cite{Bach1986, yadin_2018}. Mathematically, we set $\F = \C \coloneqq \cl \conv\left\{ \ketbra{\alpha}{\alpha}: \alpha\in \CC \right\}$; this can equivalently be understood as the set of states whose Glauber--Sudarshan $P$ representation yields a valid probability distribution~\cite{glauber_1963,sudarshan_1963,Bach1986}. Among the simplest and most useful classical state is undoubtedly the thermal state with mean photon number $N\in [0,\infty)$, defined by
\begin{equation}
    \tau_N\coloneqq \frac{1}{N+1} \sum_{n=0}^\infty \left( \frac{N}{N+1} \right)^n \ketbra{n}{n}.
    \label{thermal}
\end{equation}

Notable examples of nonclassical states, instead, include the Fock states themselves, as well as the so-called squeezed states, obtained by letting a squeezing operations act on the vacuum, i.e.,
\begin{align}
S(r) &\coloneqq \exp\left[ \frac{r}{2}\left((a^\dag)^2 - a^2 \right) \right] , \label{squeezing}\\
\ket{\zeta_r} &\coloneqq S(r) \ket{0} = \frac{1}{\sqrt{\cosh(r)}} \sum_{n=0}^\infty \frac{1}{2^n} \sqrt{\binom{2n}{n}} \tanh(r)^n \ket{2n} , \label{squeezed_state}
\end{align}
where we assume that $r\geq 0$~\cite[Eq.~(3.7.2) and~(3.7.5)]{BARNETT-RADMORE}.

Recalling from the result in Sec.~\ref{sec:strong_duality_resources} that strong  duality holds in this resource theory, we have $\R_\C(\rho) = \Rl_\C(\rho)$ as well as $\Rs_\C(\rho) = \Rsl_\C(\rho)$ for all states.

\subsubsection{Infinite standard robustness}

We begin by establishing that the standard robustness $\Rsl_\C$ is in fact infinite for most physically accessible states in this resource theory.

\begin{lem}\label{unbounded chi lemma}
Any $m$-mode state $\rho$ with finite standard robustness of nonclassicality $\Rsl_\C (\rho)<\infty$ has a bounded normal-ordered characteristic function $\chi^\rho_1:\CC \to \CC$, where $\chi_1^\rho(\alpha)\coloneqq e^{|\alpha|^2/2}\Tr[\rho\mathcal{D}_\alpha]$. Specifically,
\begin{equation}
    \left\|\chi_1^\rho \right\|_{L^\infty} \leq 2 \Rsl_\C (\rho) -1\, ,
    \label{Linfty bound}
\end{equation}
where $\left\|\chi_1^\rho \right\|_{L^\infty}\coloneqq \sup_{\alpha\in \CC} \left|\chi_1^\rho(\alpha)\right|$.
\end{lem}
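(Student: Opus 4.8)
The plan is to show that the normal-ordered characteristic function of every \emph{classical} state is bounded by one in modulus, and then to transport this bound to $\rho$ through the affine decomposition supplied by the standard robustness. The crux is a single-coherent-state computation. Using the composition law $\D_\alpha \D_\beta = e^{(\alpha\beta^* - \alpha^*\beta)/2}\D_{\alpha+\beta}$ together with the coherent-state overlap $\braket{\beta|\gamma} = e^{-|\beta|^2/2 - |\gamma|^2/2 + \beta^*\gamma}$, I would first establish that $\Tr[\ketbra{\beta}{\beta}\D_\alpha] = e^{-|\alpha|^2/2}\,e^{2i\,\mathrm{Im}(\alpha\beta^*)}$, so that
\[
  \chi_1^{\ketbra{\beta}{\beta}}(\alpha) = e^{|\alpha|^2/2}\Tr[\ketbra{\beta}{\beta}\D_\alpha] = e^{2i\,\mathrm{Im}(\alpha\beta^*)},
\]
which has modulus exactly $1$ for all $\alpha,\beta$. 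In the genuinely $m$-mode case $\D_\alpha$ factorizes across modes and the same conclusion holds verbatim for $\alpha,\beta\in\CC^m$.

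Second, I would upgrade this to all classical states. For a finite convex combination $\sigma = \sum_k p_k \ketbra{\beta_k}{\beta_k}$, linearity of $\chi_1$ in the state and the triangle inequality give $|\chi_1^\sigma(\alpha)| \le \sum_k p_k = 1$. Since for fixed $\alpha$ the map $\sigma \mapsto \chi_1^\sigma(\alpha) = e^{|\alpha|^2/2}\Tr[\sigma\D_\alpha]$ is Lipschitz in trace norm (because $\D_\alpha$ is unitary, hence bounded), the bound $|\chi_1^\sigma(\alpha)| \le 1$ passes to the trace-norm closure, and therefore holds for every $\sigma \in \C = \cl\conv\{\ketbra{\alpha}{\alpha}\}$ and every $\alpha$.

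Third, I would exploit the primal decomposition of the standard robustness. Since strong duality holds in this theory (Sec.~\ref{sec:strong_duality_resources}), $\Rsl_\C = \Rs_\C$, and any value $s$ feasible for $\Rs_\C(\rho)$ produces an \emph{exact} decomposition $\rho = s\sigma' - (s-1)\sigma$ with $\sigma,\sigma'\in\C$. Linearity of $\chi_1$ then yields $\chi_1^\rho(\alpha) = s\,\chi_1^{\sigma'}(\alpha) - (s-1)\,\chi_1^{\sigma}(\alpha)$, and the triangle inequality combined with the previous step gives $|\chi_1^\rho(\alpha)| \le s + (s-1) = 2s-1$ for every $\alpha$. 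Taking the supremum over $\alpha$ and then letting $s \downarrow \Rsl_\C(\rho)$ (to absorb the fact that the infimum defining the robustness need not be attained) delivers $\|\chi_1^\rho\|_{L^\infty} \le 2\Rsl_\C(\rho) - 1$, as claimed; when $\Rsl_\C(\rho) = \infty$ the statement is vacuous.

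I expect the main obstacle to be bookkeeping rather than conceptual: the exact phase cancellation leaving $|\chi_1^{\ketbra{\beta}{\beta}}| = 1$ is what makes the whole estimate work, so this computation must be carried out carefully. The only genuine subtlety is the passage to the closure in the definition of $\C$, which I would handle by the trace-norm continuity argument above (equivalently, by the integral representation $\sigma = \int \ketbra{\beta}{\beta}\,\mathrm{d}\mu(\beta)$ and the boundedness of $\D_\alpha$, which lets one interchange trace and integral). Everything else reduces to linearity of $\chi_1$ and the triangle inequality.
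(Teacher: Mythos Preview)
Your proof is correct and follows the same overall structure as the paper's: establish $|\chi_1^\sigma(\alpha)|\le 1$ for all classical $\sigma$, then use the affine decomposition $\rho = s\omega - (s-1)\sigma$ furnished by the standard robustness together with linearity of $\chi_1$ and the triangle inequality, and finally let $s\downarrow \Rsl_\C(\rho)$.

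The one point of genuine difference is how you obtain the bound $|\chi_1^\sigma|\le 1$ for classical states. The paper argues that a classical state has a positive $P$-function, so its Fourier transform $\chi_1^\sigma$ is positive-definite by Bochner's theorem, whence $|\chi_1^\sigma(\alpha)|\le \chi_1^\sigma(0)=1$. You instead compute $\chi_1^{\ketbra{\beta}{\beta}}(\alpha)=e^{2i\,\mathrm{Im}(\alpha\beta^*)}$ directly for coherent states and propagate the bound through convex combinations and the trace-norm closure. Your route is more elementary in that it avoids invoking Bochner, at the cost of the explicit closure argument; the paper's route is more conceptual and immediately yields the slightly stronger fact that $\chi_1^\sigma$ is positive-definite (not merely bounded), which is what is used downstream in Proposition~\ref{prop:infinite_noncl_robustness}. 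Either way the lemma follows.
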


\begin{proof}
If $r\coloneqq \Rs_\C(\rho) = \Rsl_\C(\rho)<\infty$, for every $\epsilon>0$ there exists a free state $\sigma\in\F$ such that $\omega=\frac{\rho+(r+\epsilon-1)\sigma}{r+\epsilon}\in\F$. Both $\sigma$ and $\omega$ have positive $P$-functions, and hence their Fourier transform, i.e.\ $\chi_1^\sigma(\alpha)$ and $\chi_1^\omega(\alpha)$, are positive-definite~\cite{Bochner1933, Richter2002, Bohmann2020}. In this context, a function $f:\CC\to \CC$ is called positive definite if the matrix $\left(f(\alpha_\mu - \alpha_\nu)\right)_{\mu,\nu=1,\ldots, N}$ is positive semi-definite for all collections $\alpha_1,\ldots, \alpha_N\in \CC$. This in turn implies $|f(\alpha)|\leq|f(0)|$, and in particular that $f$ is bounded. The relation which defines the characteristic functions is linear, and hence
\begin{equation}
    \chi^\omega_1(\alpha)=\frac{\chi^\rho_1(\alpha)+(r+\epsilon-1)\chi^\sigma_1(\alpha)}{r+\epsilon}\,.
\end{equation}
Hence, also $\chi^\rho_1(\alpha)$ has to be bounded, and moreover
\begin{equation}\begin{aligned}
\left\|\chi_1^\rho \right\|_{L^\infty} &\leq (r+\epsilon) \left\|\chi_1^\omega\right\|_{L^\infty} + (r+\epsilon -1) \left\|\chi_1^\sigma\right\|_{L^\infty} \\
&= (r+\epsilon) \left|\chi_1^\omega(0)\right| + (r+\epsilon -1) \left|\chi_1^\sigma(0)\right| \\
&= 2(r+\epsilon) -1 \, ,
\end{aligned}\end{equation}
where in the last step we used that $\chi_1^\tau(0) = \Tr[\tau \mathcal{D}_0] = \Tr[\tau]=1$ for all density operators $\tau$. Since $\epsilon>0$ was arbitrary, we deduce \eqref{Linfty bound}. 
\end{proof}

Examples of states with unbounded $\chi_1$ functions (and hence infinite standard robustness of nonclassicality) comprise Fock, squeezed and cat states, \txb{but also some physically achievable approximations thereof. Consider for instance a squeezed thermal state
\begin{equation}
    \rho_{N,r} \coloneqq S(r) \tau_N S^\dag(r)\, ,
    \label{squeezed_thermal}
\end{equation}
where $S(r)$ is given by \eqref{squeezing} and $\tau_N$ by \eqref{thermal}. A state such as that in \eqref{squeezed_thermal} is a reasonable physical approximation to a pure squeezed state, one that can actually be achieved in a laboratory. It is well known that $\rho_{N,r}$ is nonclassical if and only if $e^{2r} > 2N+1$. It is also straightforward to show that its normal-ordered characteristic function evaluates to $\chi_1^{\rho_{N,r}}(\alpha) = e^{|\alpha|^2/2} e^{-\frac12 (2N+1) \left(e^{-2r} \alpha_R^2 + e^{2r} \alpha_I^2\right)}$, where $\alpha_R\coloneqq \Re \alpha$ and $\alpha_I \coloneqq \Im \alpha$. Therefore, $\Rsl_\C(\rho_{N,r}) = +\infty$ as soon as $\rho_{N,r}$ is nonclassical. It is not difficult to generalize this statement to all Gaussian states, so that $\Rsl_\C$ becomes trivial (namely, either $0$ or $+\infty$) on this whole set of states.

But there is more: the next Proposition proves that a large class of pure (possibly non-Gaussian) states also have unbounded $\chi_1$ function.}

\begin{pro}\label{prop:infinite_noncl_robustness}
Let $\ket{\psi}$ be a nonclassical pure state having a vanishing overlap with a finite (possibly empty) set of coherent states. Then, $\Rsl_\C(\psi)=\infty$.
\end{pro}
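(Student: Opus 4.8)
The plan is to establish the contrapositive of Lemma~\ref{unbounded chi lemma}: I will show that the stated hypotheses force the normal-ordered characteristic function $\chi_1^\psi$ to be \emph{unbounded}, whence \eqref{Linfty bound} immediately gives $\Rsl_\C(\psi)=\infty$. The whole argument is carried out in the Fock--Bargmann picture. Writing $\ket\psi=\sum_n c_n\ket n$, I associate with it the entire function $F_\psi(z)\coloneqq\sum_n \frac{c_n}{\sqrt{n!}}z^n$, so that $\braket{\alpha|\psi}=e^{-|\alpha|^2/2}F_\psi(\bar\alpha)$. The hypothesis that $\ket\psi$ has vanishing overlap with only finitely many coherent states then reads exactly that $F_\psi$ has finitely many zeros.

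Next I would record two structural facts. Since $\ket\psi$ is normalizable, $F_\psi$ belongs to the Fock--Bargmann space and is therefore entire of order at most $2$; combined with finiteness of its zero set, Hadamard's factorization theorem yields $F_\psi(z)=P(z)\,e^{az^2+bz+c}$ for a polynomial $P$ and constants $a,b,c$, where square-integrability against the Gaussian weight forces $|a|<1/2$. Second, a pure state is classical if and only if it is coherent (the coherent states are the extreme points of $\C$), so nonclassicality of $\ket\psi$ means $F_\psi$ is \emph{not} of the form $\kappa e^{\beta z}$, i.e.\ either $\deg P\ge 1$ or $a\ne 0$.

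The computational core is an explicit representation of $\chi_1^\psi$. Inserting a coherent-state resolution of the identity between the two exponentials in $\chi_1^\psi(\alpha)=\langle\psi|\,e^{\alpha a^\dagger}e^{-\alpha^* a}\,|\psi\rangle$ and using $e^{\pm\alpha a^\dagger}\ket\beta\propto\ket{\beta\pm\alpha}$, the Gaussian prefactors collapse and I obtain
\begin{equation}
  \chi_1^\psi(\alpha)=\frac{1}{\pi}\int_{\CC}\overline{F_\psi(w+\bar\alpha)}\,F_\psi(w-\bar\alpha)\,e^{-|w|^2}\,d^2w .
\end{equation}
(Expanding in $w$ and using the Bargmann orthogonality relations equivalently gives $\chi_1^\psi(\alpha)=\sum_m \frac{1}{m!}\overline{F_\psi^{(m)}(\bar\alpha)}\,F_\psi^{(m)}(-\bar\alpha)$, which one can sanity-check against $\chi_1^{\ket n}=L_n(|\alpha|^2)$ and $|\chi_1^{\ket{\alpha_0}}|\equiv 1$.) Substituting $F_\psi=P\,e^{az^2+bz+c}$ turns this into a Gaussian integral with a polynomial prefactor, convergent precisely because $|a|<1/2$. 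Using the complex-Gaussian identity $\frac{1}{\pi}\int_{\CC}e^{-|w|^2+aw^2+\bar a\bar w^2+\mu w+\nu\bar w}\,d^2w=(1-4|a|^2)^{-1/2}\exp\frac{\mu\nu+a\nu^2+\bar a\mu^2}{1-4|a|^2}$ (here $\mu=b-2a\bar\alpha$, $\nu=\bar b+2\bar a\alpha$), the integral evaluates exactly to $\chi_1^\psi(\alpha)=e^{\Lambda(\alpha)}R(\alpha)$, with $\Lambda$ an explicit quadratic in $(\alpha,\bar\alpha)$ and $R$ a polynomial whose top-degree term, when $a=0$, is $(-1)^d|p_d|^2|\alpha|^{2d}$ ($d=\deg P$, $p_d$ the leading coefficient of $P$).

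Finally I would split according to the two nonclassical cases. If $a\ne 0$, collecting the quadratic terms gives $\Lambda(\alpha)=\frac{2\Re(\bar a\alpha^2)-4|a|^2|\alpha|^2}{1-4|a|^2}+O(|\alpha|)$, which along the ray $\arg\alpha=\tfrac12\arg a$ reduces to $\frac{2|a|}{1+2|a|}|\alpha|^2\to+\infty$; since $R\not\equiv 0$, $|\chi_1^\psi(\alpha)|=e^{\Re\Lambda(\alpha)}|R(\alpha)|\to\infty$ along this ray. If instead $a=0$, then $\deg P=d\ge 1$, the factor $e^{\Re\Lambda}$ stays bounded, and $|R(\alpha)|\sim|p_d|^2|\alpha|^{2d}\to\infty$. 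In both cases $\chi_1^\psi$ is unbounded, and Lemma~\ref{unbounded chi lemma} delivers $\Rsl_\C(\psi)=\infty$. I expect the one genuinely delicate point to be the case $a\ne 0$: one must carry out the non-Hermitian Gaussian integration carefully, since completing the square in $w$ feeds extra $\alpha$-dependence into $\Lambda$, and one must check that this contribution does not cancel the $\Re(\bar a\alpha^2)$ growth --- the inequality $|a|<1/2$, equivalently the positivity of $1-4|a|^2$, is exactly what guarantees it does not.
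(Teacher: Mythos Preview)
Your proof is correct and follows essentially the same strategy as the paper: Hadamard factorization of the Bargmann representation into a polynomial times a Gaussian $e^{az^2+bz+c}$ with $|a|<\tfrac12$, followed by the dichotomy $a\neq 0$ versus $a=0$ with $\deg P\geq 1$, showing in each case that $\chi_1^\psi$ is unbounded and invoking Lemma~\ref{unbounded chi lemma}. The only difference is in how you extract the asymptotics of $\chi_1^\psi$: the paper passes through the Husimi $Q$-function and its Fourier transform $\chi_{-1}^\psi$ (using that the Gaussian covariance $A$ goes to $A^{-1}$ under Fourier), whereas you compute $\chi_1^\psi$ directly as a Bargmann-space Gaussian integral --- a slightly more explicit route that handles the polynomial prefactor cleanly, but not a genuinely different idea.
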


\begin{proof}
Let us write $\ket{\psi}=\sum_{n=0}^\infty c_n\ket{n}$. The function $f(\alpha)=e^{|\alpha|^2/2}\braket{\psi|\alpha}=\sum_{n=0}^\infty \frac{c^*_n}{\sqrt{n!}}\alpha^n$ is a complex entire function of order at most 2 (otherwise, $|\braket{\psi|\alpha}|$ would diverge). If $N<\infty$ is the number of zeros of $f(\alpha)$, its Hadamard factorization \cite{ComplexVariablesHandbook} becomes $f(\alpha)=e^{a\alpha^2+b\alpha}P_N(\alpha)$, where $P_N(\alpha)$ is a polynomial of degree $N$ and $|a|<\frac{1}{2}$ in order for $|\braket{\psi|\alpha}|$ to be bounded. The Husimi Q-function is then
\begin{equation}\begin{aligned}\label{q function expression}
Q^\psi(\alpha)&=e^{-|\alpha|^2}|f(\alpha)|^2\\&=e^{-|\alpha|^2}e^{2\Re[a\alpha^2+b\alpha]}|P_N(\alpha)|^2\\&=e^{-\mathbf{r}(\alpha)^T A\mathbf{r}(\alpha)+\boldsymbol{\beta}^T\mathbf{r}(\alpha)}|P_N(\alpha)|^2\,,
\end{aligned}\end{equation}
with
\begin{equation*}
\mathbf{r}(\alpha)=
\begin{pmatrix}
\Re\alpha\\\Im\alpha
\end{pmatrix},\quad
A=\begin{pmatrix}
1-2\Re a& 2\Im a\\
2\Im a & 1+2\Re a
\end{pmatrix},\quad
\boldsymbol{\beta}=
\begin{pmatrix}
2\Re b\\-2\Im b
\end{pmatrix}\,.
\end{equation*}
It is easy to see that the matrix $A$ has eigenvalues $1\pm2|a|$. The Fourier transform of a function in the form \eqref{q function expression} has again the same structure, but with $A^{-1}$ in place of $A$. Now, let us suppose for the moment that $a>0$. In this case, $A$ has an eigenvalue strictly bigger than 1, and hence $A^{-1}$ has one strictly smaller than 1. Thus, $\chi_1^\psi=e^{|\alpha|^2}\chi_{-1}^\psi$, where $\chi_{-1}^\psi$ is the Fourier transform of $Q^\psi$, is necessarily unbounded and, by virtue of Lemma \ref{unbounded chi lemma}, we conclude that $\Rsl_\C(\psi)=\infty$. If instead $a=0$, we have $A=A^{-1}$. At this point, we have to make a subsequent distinction: if $P_N$ is the trivial polynomial, $\psi$ is a coherent state and hence is obviously classical; if $P_N$ is not trivial we end up once again with a divergent $\chi_1^\psi$ (in this case the divergence is polynomial instead of exponential).
\end{proof}

Among the states which fulfil the hypothesis of the last result there are e.g. any finite superposition of Fock states and any nonclassical Gaussian state. It is worth noticing that the celebrated cat state $\ket{\alpha_+}\propto\ket{\alpha}+\ket{-\alpha}$ is a pure state with vanishing overlap with \txb{infinitely many} coherent states, but with unbounded $\chi_1$ and hence infinite standard robustness of nonclassicality.

\note{%
We remark that the number of coherent states for which $\braket{\psi|\alpha}$ vanishes has been considered in~\cite{chabaud_2020} as the so-called stellar rank, related to the degree of non-Gaussianity of a state.}

Let us briefly clarify an apparent similarity between the standard robustness of nonclassicality and the notion of Glauber--Sudarshan P representation~\cite{glauber_1963,sudarshan_1963}. Specifically, recalling that any state $\sigma \in \C$ can be written as $\int_{\C'} \proj{\alpha}\, \mathrm{d}\mu(\alpha)$ for some Borel probability measure $\mu$ on the set $\C' \coloneqq \{\proj{\alpha}\}_{\alpha \in \CC}$~\cite{Bach1986,holevo_2005}, we can write the standard robustness $\Rs_\C$ as the least coefficient $\lambda$ such that
\begin{equation}\begin{aligned}
\rho = \lambda \int_{\C'} \proj{\alpha}\, \mathrm{d}\mu_+(\alpha) - (\lambda - 1) \int_{\C'} \proj{\alpha}\, \mathrm{d} \mu_-(\alpha)
\end{aligned}\end{equation}
for some two probability measures $\mu_+, \mu_-$. Notice that any two measures $\mu_\pm$ give rise to a signed measure (i.e., one allowing negative values) $\mu' = \mu_+ - \mu_-$; conversely, one can use the Hahn--Jordan decomposition theorem to write any signed measure on $\C'$ as the difference of two non-negative measures $\mu_\pm$~\cite[Section~6.6]{RUDIN-ANALYSIS}. This shows that the standard robustness (or, specifically, $\Rs_\C - 1$) admits a natural interpretation as the \textit{negativity} of the resource (cf.~\cite{tan_2020}), in the sense that it quantifies the minimal negative part of a signed measure $\mu'$ such that
 \begin{equation}\begin{aligned}\label{eq:signed_measure}
     \rho = \int_{\C'} \proj{\alpha}\, \mathrm{d} \mu'(\alpha).
 \end{aligned}\end{equation}
An important difference between this expression and the Glauber--Sudarshan P representation is that the latter is based on quasiprobability distributions over the set $\C'$, which are a strictly more general concept than signed Borel measures.
In particular, we can see from Prop.~\ref{prop:infinite_noncl_robustness} that there exist states which cannot be written as in~\eqref{eq:signed_measure} for any signed measure $\mu$, even though it is known that any state admits a representation in terms of a quasiprobability distribution on $\C'$ in the form of the P representation~\cite{glauber_1963,sudarshan_1963}.

%%%

\subsubsection{Fock states}

\begin{pro}For any $n>0$,
\begin{equation}\begin{aligned}
\Rl_\C(\proj{n}) = e^{n} \frac{n!}{n^n}.
\label{eq:robustness Fock nonclassicality}
\end{aligned}\end{equation}
\end{pro}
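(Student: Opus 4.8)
The plan is to sandwich $\Rl_\C(\proj n)$ between matching upper and lower bounds, both derived from the pure-state machinery of Lemma~\ref{lem:robustness_pure} and Lemma~\ref{lem:robustness_lowerbound}. For the lower bound I would apply the witness inequality \eqref{rob_lower_bound} of Lemma~\ref{lem:robustness_lowerbound} with the choice $\rho=\omega=\proj n$, which gives
\[
    \Rl_\C(\proj n) \geq \frac{\braket{\proj n,\proj n}}{\sup_{\sigma\in\C}\braket{\sigma,\proj n}} = \frac{1}{\sup_{\sigma\in\C}\braket{n|\sigma|n}}.
\]
Since $\sigma\mapsto\braket{n|\sigma|n}$ is a continuous linear functional and $\C=\cl\conv\{\proj\alpha\}$, the supremum is attained on the coherent states, so $\sup_{\sigma\in\C}\braket{n|\sigma|n}=\sup_{\alpha\in\CC}\left|\braket{n|\alpha}\right|^2$. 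Inserting $\left|\braket{n|\alpha}\right|^2=e^{-|\alpha|^2}|\alpha|^{2n}/n!$ and maximizing over $t=|\alpha|^2\geq 0$ — a one-variable problem whose maximum (for $n>0$) sits at $t=n$ — yields $\sup_{\sigma\in\C}\braket{n|\sigma|n}=e^{-n}n^n/n!$, and hence the lower bound $\Rl_\C(\proj n)\geq e^{n}n!/n^{n}$.

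For the matching upper bound, the crux is to produce a single classical state whose inverse, plugged into Lemma~\ref{lem:robustness_pure}, reproduces this value on the nose. The natural candidate is the phase-averaged coherent state at the optimal radius $|\alpha|^2=n$,
\[
    \sigma = \int_0^{2\pi}\proj{\sqrt n\,e^{i\theta}}\,\frac{\mathrm{d}\theta}{2\pi}.
\]
This is manifestly free (its Glauber--Sudarshan $P$-function is the uniform probability law on the circle of radius $\sqrt n$), and phase averaging annihilates every off-diagonal Fock component, collapsing $\sigma$ to the Poisson-diagonal operator $\sigma = e^{-n}\sum_{k}\tfrac{n^k}{k!}\proj k$. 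Because $\sigma$ has full support, $\ket n\in\ran(\sigma^{1/2})$, so the domain hypothesis of Lemma~\ref{lem:robustness_pure} is satisfied; and since $\ket n$ is an eigenvector of $\sigma$, we simply read off $\braket{n|\sigma^{-1}|n}=1/\braket{n|\sigma|n}=e^{n}n!/n^{n}$. Lemma~\ref{lem:robustness_pure} then gives $\R_\C(\proj n)\leq\braket{n|\sigma^{-1}|n}=e^{n}n!/n^{n}$.

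Combining the two bounds and using $\Rl_\C\leq\R_\C$, which always holds, we obtain
\[
    e^{n}\frac{n!}{n^{n}} \leq \Rl_\C(\proj n) \leq \R_\C(\proj n) \leq \braket{n|\sigma^{-1}|n} = e^{n}\frac{n!}{n^{n}},
\]
so every inequality is an equality and the claim follows (consistent with the strong duality $\R_\C=\Rl_\C$ established in Sec.~\ref{sec:strong_duality_resources}). The only genuinely nontrivial step is \emph{guessing} the optimal free state for the upper bound — the radius-$\sqrt n$ phase-averaged coherent state — after which every remaining computation is a short verification. An alternative route to the upper bound would compute $\norm{\ket n}_\f^2$ directly via Prop.~\ref{thm:robustness_pure_norm}, but exhibiting the explicit optimizer seems cleaner and also pins down the feasible free state realizing the robustness.
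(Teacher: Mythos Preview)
Your proof is correct and follows essentially the same route as the paper: the lower bound via Lemma~\ref{lem:robustness_lowerbound} with $\omega=\proj n$, and the upper bound via Lemma~\ref{lem:robustness_pure} using the phase-averaged coherent state $\sigma_n=\frac{1}{2\pi}\int_0^{2\pi}\proj{\sqrt n\,e^{i\theta}}\,\mathrm d\theta$. One small caveat: in infinite dimensions ``$\sigma$ has full support'' does not by itself imply $\ket n\in\ran(\sigma^{1/2})$; the conclusion holds here because $\ket n$ is an eigenvector of $\sigma$ with strictly positive eigenvalue, which you should state as the reason.
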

\begin{proof}
Define
\begin{equation}\begin{aligned}
    \gamma_n &\!\coloneqq \sup_{\alpha \in \CC}\, \abs{\braket{\alpha | n}}^2\\
    &= \sup_{a \geq 0}\, e^{-a} \frac{a^n}{n!}\\
    &= e^{-n} \frac{n^n}{n!}.
\end{aligned}\end{equation}
Lemma~\ref{lem:robustness_lowerbound} then immediately gives
\begin{equation}\begin{aligned}
    \Rl_\C(\proj{n}) \geq \gamma_n^{-1}.
\end{aligned}\end{equation}
Inspired by the methods used to compute the nonclassical trace distance of Fock states~\cite{hillery_1987,nair_2017}, consider now the phase-randomized coherent state
\begin{equation}\begin{aligned}\label{eq:phase-random}
    \sigma_n &\!\coloneqq \frac{1}{2\pi} \int_0^{2\pi} \proj{\sqrt{n} e^{i\theta}} \mathrm{d}\theta\\
    &= e^{-n} \sum_{k=0}^{\infty} \frac{n^k}{k!} \proj{k}.
\end{aligned}\end{equation}
Using Lemma~\ref{lem:robustness_pure}, we get
\begin{equation}\begin{aligned}
    \Rl_\C(\proj{n})  &\leq \braket{n | \sigma_n^{-1} | n}\\
    &= e^n \frac{n!}{n^n}\\
    &= \gamma_n^{-1}.
\end{aligned}\end{equation}
The upper and lower bounds thus coincide, and we have $\Rl_\C(\proj{n}) = \gamma_n^{-1}$ for any $n>0$.
\end{proof}
We note that the bound from Lemma~\ref{lem:robustness_lowerbound} that we employed in the proof can be rephrased in terms of a geometric measure of nonclassicality based on the Husimi Q function, which was previously studied in several works~\cite{wuensche_2001,malbouisson_2003,nair_2017,marian_2019}. In particular, we have $\displaystyle \Rl_\C(\rho) \geq \frac{\< \rho, \omega \>}{\pi Q_{\max}(\omega)}$ for any state $\rho, \omega$, where $Q_{\max} (\omega) = \sup_{\alpha \in \CC} Q_\omega(\alpha)$ with $Q_\omega(\alpha) = \frac{1}{\pi} \braket{\alpha | \omega | \alpha}$ being the Husimi Q-function.

We also have from Cor.~\ref{cor:relative_entropy} that the above immediately implies
\begin{equation}\begin{aligned}
   \inf_{\sigma \in \C} D(\proj{n} \| \sigma) = \log \Rl_\C(\proj{n}) = - \log \gamma_n.
\end{aligned}\end{equation}
This expression for the relative entropy of nonclassicality has been independently shown in~\cite{relentNC}.

We can furthermore compute the robustness for the noisy Fock state
\begin{equation}\begin{aligned}
    \rho_{n,t} \coloneqq t \proj{n} + (1-t) \sigma_n.
\end{aligned}\end{equation}
On the one hand, convexity of $\Rl_\C$ gives
\begin{equation}\begin{aligned}
    \Rl_\C(\rho_{n,t}) \leq t \gamma_n^{-1} + (1-t),
\end{aligned}\end{equation}
and on the other hand using the feasible dual solution $W = \frac{\proj{n}}{\gamma_n}$ gives
\begin{equation}\begin{aligned}
    \Rl_\C(\rho_{n,t}) &\geq \< W, \rho_{n,t} \>\\
    &= t \gamma_n^{-1} + (1-t) \frac{\braket{n|\sigma_n|n}}{\gamma_n}\\
    &= t \gamma_n^{-1} + (1-t).
\end{aligned}\end{equation}

In a similar way, we can bound the robustness of the mixed state
\begin{equation}\begin{aligned}
    \rho_{n,q} \coloneqq q \proj{n} + (1-q) \proj{0}
\end{aligned}\end{equation}
as
\begin{equation}\begin{aligned}
  q \gamma_n^{-1} \leq \Rl_\C(\rho_{n,q}) \leq q \gamma_n^{-1} + (1-q).
\end{aligned}\end{equation}

%%%

\subsubsection{Squeezed states} \label{subsubsec:squeezed_states}

\begin{pro}\label{prop:rob_squeezed_states}
The robustness of nonclassicality of the squeezed states \eqref{squeezed_state} is given by
\begin{equation}
   \Rl_\C(\zeta_r) = e^r
\end{equation}
for all $r\geq 0$.
\end{pro}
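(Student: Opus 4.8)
The plan is to bound $\Rl_\C(\zeta_r)$ above and below by $e^r$. Since strong duality holds in the nonclassicality theory (so that $\R_\C=\Rl_\C$), I am free to use the dual characterization for the lower bound and the primal (gauge) characterization for the upper bound.

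For the lower bound I would apply Lemma~\ref{lem:robustness_lowerbound} with the witness state $\omega=\proj{\zeta_{r'}}$, a second squeezed vacuum with parameter $r'>0$ that I will eventually send to infinity. Two overlaps are needed. From the Fock expansion and $S^\dagger(r)S(r')=S(r'-r)$ one gets $\braket{\zeta_r|\zeta_{r'}}=\braket{0|\zeta_{r'-r}}=1/\sqrt{\cosh(r-r')}$, while the maximal Husimi value $\sup_\alpha|\braket{\alpha|\zeta_{r'}}|^2=1/\cosh r'$ is attained at $\alpha=0$ and coincides with $\sup_{\sigma\in\C}\<\sigma,\proj{\zeta_{r'}}\>$, since a linear functional over $\cl\conv\{\proj{\alpha}\}$ is maximized on the coherent states. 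Lemma~\ref{lem:robustness_lowerbound} then gives $\Rl_\C(\zeta_r)\geq \cosh r'/\cosh(r-r')$ for every $r'$, and taking $r'\to\infty$ yields $\Rl_\C(\zeta_r)\geq e^r$.

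For the upper bound I would invoke Proposition~\ref{thm:robustness_pure_norm}, which for a pure state reduces the robustness to the squared gauge $\Rl_\C(\zeta_r)=\norm{\ket{\zeta_r}}_\f^2$ of $\cl\conv\f$ with $\f$ the coherent states. The crux is an explicit coherent-state integral representation along the anti-squeezed real axis, valid for $r>0$: $\int_\RR e^{-\gamma u^2}\ket{u}\,du \propto \ket{\zeta_r}$ with $\gamma=(\coth r-1)/2>0$. Matching the even Fock coefficients via $\braket{2m|u}=e^{-u^2/2}u^{2m}/\sqrt{(2m)!}$ and the Gaussian moment $\int_\RR u^{2m}e^{-au^2}\,du$ fixes $\gamma$ through $\gamma+\frac12=(\coth r)/2$ and gives the overall constant $\int_\RR e^{-\gamma u^2}\ket{u}\,du=\sqrt{2\pi\sinh r}\,\ket{\zeta_r}$. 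Dividing by $Z=\int_\RR e^{-\gamma u^2}\,du=\sqrt{2\pi\sinh r\,e^{r}}$ exhibits $e^{-r/2}\ket{\zeta_r}=Z^{-1}\int_\RR e^{-\gamma u^2}\ket{u}\,du$ as a continuous convex combination of coherent states, hence (as a norm limit of its finite Riemann sums) an element of $\cl\conv\f$. This yields $\norm{\ket{\zeta_r}}_\f\leq e^{r/2}$ and so $\Rl_\C(\zeta_r)\leq e^r$, matching the lower bound.

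The hard part is the upper bound, and in particular guessing the right free state. The obvious Gaussian candidate --- a squeezed thermal state $S(r)\tau_N S^\dagger(r)$ at the classicality threshold $2N+1=e^{2r}$, plugged into Lemma~\ref{lem:robustness_pure} --- only gives $\braket{\zeta_r|\sigma^{-1}|\zeta_r}=N+1=e^r\cosh r>e^r$, so Gaussian free states are strictly suboptimal and one is forced to the continuous line of coherent states above, whose dual witness is attained only in the infinite-squeezing limit. The delicate steps are pinning down the exponent $\gamma$ and checking that the two Gaussian normalizations combine to exactly $e^{-r/2}$; the justification that the vector-valued integral lies in $\cl\conv\f$, and the remaining Gaussian integrals, are routine.
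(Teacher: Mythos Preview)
Your lower bound is exactly the paper's argument. Your upper bound is correct but takes a genuinely different route. You invoke Proposition~\ref{thm:robustness_pure_norm} and exhibit $e^{-r/2}\ket{\zeta_r}$ directly as a continuous convex combination of coherent-state \emph{vectors} along the real axis, so that $\norm{\ket{\zeta_r}}_\f\leq e^{r/2}$ without any optimization. The paper instead applies Lemma~\ref{lem:robustness_pure} with a one-parameter family of Gaussian classical \emph{density operators} $\sigma_s=S(s)\tau_{N(s)}S^\dagger(s)$, $N(s)=(e^{2s}-1)/2$, lying on the boundary of the classical cone (covariance matrix $\mathrm{diag}(1,e^{4s})$); it computes $g(r,s)=\braket{\zeta_r|\sigma_s^{-1}|\zeta_r}$, minimizes in $s$, and finds that at $s_0(r)=\tfrac{r}{2}+\tfrac14\ln(2-e^{-2r})$ one gets $g(r,s_0(r))=e^r$ exactly. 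Your construction is arguably slicker here, since the factor $e^{-r/2}$ drops out of two Gaussian normalizations with no search; the paper's ansatz, on the other hand, is more systematic and is reused for the photon-added states in Proposition~\ref{prop:photon_added}.

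Your closing remark, however, is wrong: Gaussian free states are \emph{not} strictly suboptimal. You only tried the threshold state with squeezing $s=r$, which indeed gives $\braket{\zeta_r|\sigma^{-1}|\zeta_r}=N+1=e^r\cosh r$. The paper shows that lowering the squeezing to $s_0(r)<r$ (still at the classicality threshold for that $s$) already saturates the bound $e^r$. What is true is that the optimal \emph{dual witness} is attained only in the $r'\to\infty$ limit; but on the primal side a Gaussian classical state suffices.
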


\begin{proof}
For any squeezed state $\ket{\zeta_q}$ with $q \geq 0$, we have
\begin{equation}
    \sup_{\alpha\in \CC} \left|\braket{\alpha|\zeta_q}\right|^2 = \sup_{\alpha\in \CC} \frac{1}{\cosh(q)}\, e^{-|\alpha|^2+ \tanh(q) \Re(\alpha^2)} = \frac{1}{\cosh(q)}.
\end{equation}
Employing the lower bound from Lemma~\ref{lem:robustness_lowerbound} with the choice $\omega = \zeta_q$ then gives
\begin{equation}\begin{aligned}
    \Rl_\C(\zeta_r) &\geq \cbraket{\zeta_r|\zeta_q}^2 \cosh(q)\\
    &= \frac{\cosh(q)}{\cosh(q-r)}
\end{aligned}\end{equation}
for any $q$, where we used the well-known expression for the overlap of two squeezed states (see e.g.~\cite[3.7]{BARNETT-RADMORE}). In the limit $q \to \infty$, this gives
\begin{equation}\begin{aligned}
    \Rl_\C(\zeta_r) \geq e^r.
\end{aligned}\end{equation}

We now move on to the proof of the upper bound. For $s\geq 0$, let us construct the state
\begin{equation}
    \sigma_s \coloneqq S(s) \tau_{N(s)} S^\dag (s), \qquad N(s) \coloneqq \frac{e^{2s}-1}{2} ,
\label{eq:thermal ansatz}
\end{equation}
where $S$ and $\tau_N$ are defined by \eqref{squeezing} and \eqref{thermal}, respectively. Note that $\sigma_s$ is a Gaussian state with quantum covariance matrix\footnote{The quantum covariance matrix of an $m$-mode state $\rho$ is given by $(V_\rho)_{ij}\coloneqq \Tr\left[\rho \{ R_i - r_i, R_j - r_j\}\right]$, where $R\coloneqq (x_1,\ldots, x_m, p_1, \ldots, p_m)^\intercal$ is the vector of canonical operators, and $r\coloneqq \Tr[\rho R]\in \RR^{2m}$ is the vector of first moments of the state \cite{BUCCO}.} $V_{\sigma_s} = \left(\begin{smallmatrix} 1 & \\ & e^{4s} \end{smallmatrix}\right)$. Since $V_{\sigma_s}\geq \id$, the state $\sigma_s$ is classical. In fact, one can check that
\begin{equation}
    \sigma_s = \sqrt{\frac{2}{\pi (e^{4s}-1)}} \int_{-\infty}^{+\infty} dt\, e^{-\frac{2t^2}{e^{4s}-1}}\, \ketbra{it}{it} ,
\end{equation}
where $\ket{it}$ is a coherent state. 

We now employ this state as an ansatz for Lemma \ref{lem:robustness_pure}. Using the identity $\sum_{n=0}^\infty \frac{1}{4^n} \binom{2n}{n} t^n = \frac{1}{\sqrt{1-t}}$ for $t\in [0,1)$, which in turn can be easily retrieved from the fact that the squeezed state \eqref{squeezed_state} is normalized, we find that 
\begin{equation}
    \Rl_\C(\zeta_r)  \leq g(r,s) ,
\end{equation}
where
\begin{align*}
    g(r,s) \coloneqq&\ \braket{\zeta_r | \sigma_s^{-1} | \zeta_r} \\
    =&\ \braket{0 | S^\dag (r) S(s) \tau_{N(s)}^{-1} S^\dag (s) S(r) | 0} \\
    =&\ \braket{\zeta_{r-s} | \tau_{N(s)}^{-1} | \zeta_{r-s}} \\
    =&\ \frac{\sinh (s)}{(1-\tanh (s)) \sqrt{\sinh (r) \sinh (2s-r)}} ,
\end{align*}
where to obtain a finite result we assumed that $2s>r$. A lengthy yet straightforward calculation shows that, for a fixed $r$, the above function achieves the minimum in $s$ for $s=s_0(r)\coloneqq \frac{r}{2} + \frac14 \ln \left(2-e^{-2r}\right)$. Note that $2s_0(r)>r$ as long as $r>0$. A remarkable simplification occurs when we plug this value of $s$ inside the function $g$. Namely, upon elementary manipulations one obtains that $g(r,s_0(r)) = e^r$.
\end{proof}

\txb{
Before we move on, we wish to remark that squeezed states can be used to demonstrate a particularly simple application of Theorem~\ref{thm:robustness_discrimination} to the theory of nonclassicality. Consider in fact a channel ensemble composed of several displacement operators, whose action is to translate the input state by a different amount along the same line in phase space; this can be understood as ``classical message encoders'' that encode a message on the input states in the form of a displacement parameter. Using a probe whose initial state is highly squeezed along the same direction allows for almost perfect discrimination of this ensemble. On the contrary, if the initial state is restricted to be classical, even the best measurement at the output will still yield a significant error. The robustness of the squeezed state, which we just computed in Proposition~\ref{prop:rob_squeezed_states}, quantifies exactly the ratio between the two success probabilities of decoding~\cite{our_in_preparation}, in accordance with Theorem~\ref{thm:robustness_discrimination}.
}

%%%

\subsubsection{Photon-added and photon-subtracted squeezed states}

We define single-mode single-photon-added and single-photon-subtracted squeezed vacuum states as 
\bal
 \ket{\zeta_{r,\theta}}_+:=a^\dagger \ket{\zeta_{r,\theta}}/\sqrt{\braket{\zeta_{r,\theta}|aa^\dagger|\zeta_{r,\theta}}}= a^\dagger\ket{\zeta_{r,\theta}}/\cosh(r)
\eal
\bal
 \ket{\zeta_{r,\theta}}_-:=a \ket{\zeta_{r,\theta}}/\sqrt{\braket{\zeta_{r,\theta}|a^\dagger a|\zeta_{r,\theta}}}= a\ket{\zeta_{r,\theta}}/\sinh(r).
\eal
where $\ket{\zeta_{r,\theta}}=R(\theta)S(r)\ket{0}$ with $R(\theta)=e^{ia^\dagger a \theta}$ is the phase rotation unitary. Since $R(\theta)$ is a passive Gaussian unitary, it does not affect the degree of nonclassicality. 
Thus, we take $\theta=0$ and define $\ket{\zeta_{r}}_\pm:=\ket{\zeta_{r,0}}_\pm$. 
Next, we note that $\ket{\zeta_{r}}_+$ and $\ket{\zeta_{r}}_-$ are actually identical. To see this, let us express these in the number representation.
\bal
 \ket{\zeta_{r}}_+ = \frac{1}{\cosh^{3/2}(r)} \sum_{n=0}^\infty \frac{\sqrt{2n+1}}{2^n} \sqrt{\binom{2n}{n}} \tanh^n(r) \ket{2n+1}
 \label{eq:photon added number}
\eal
\bal
 \ket{\zeta_{r}}_- = \frac{1}{\sinh(r)\sqrt{\cosh(r)}} \sum_{n=1}^\infty \frac{\sqrt{2n}}{2^n} \sqrt{\binom{2n}{n}} \tanh^n(r) \ket{2n-1}.
 \label{eq:photon subtracted number}
\eal
Then, \eqref{eq:photon subtracted number} may be further computed as 
\bal
 \ket{\zeta_{r}}_- &= \frac{1}{\sinh(r)\sqrt{\cosh(r)}} \sum_{n=0}^\infty \frac{\sqrt{2(n+1)}}{2^{n+1}} \sqrt{\binom{2(n+1)}{n+1}}\\ &\hspace{4cm}\times\tanh^{n+1}(r) \ket{2n+1}\\
 &= \frac{\tanh(r)}{\sinh(r)\sqrt{\cosh(r)}} \sum_{n=0}^\infty \frac{\sqrt{2n+1}}{2^n} \sqrt{\binom{2n}{n}} \tanh^n(r) \ket{2n+1},
 \label{eq:photon subtracted number 2}
\eal
which is identical to $\ket{\zeta_r}_+$.
Hence, it suffices to only consider photon-added states. 
Then, we have the following bounds for the robustness for this class of states.

\begin{pro}\label{prop:photon_added}
Let $\ket{\zeta_r}_\pm$ be single-photon-added/subtracted squeezed vacuum states as defined above. Then,
\begin{equation}\label{eq:photon_added_lower}
    e^{-r+1}\cosh^2(r) \leq \Rl_\C({\zeta_r}_\pm) \leq \frac{4e^{2r}}{3\sqrt{3}\sinh(r)}
\end{equation}
for all $r\geq 0$.

A tighter lower bound can be obtained for $r \leq \ln\sqrt{2} \approx 0.35$ as $\displaystyle \Rl_\C({\zeta_r}_\pm) \geq e \cosh(r)^{-3}$
and for $r \geq \ln\sqrt{2}$ as $\displaystyle \Rl_\C({\zeta_r}_\pm) \geq \frac{4}{27} e^{1+2r} \sinh(r)^{-1}$.
\end{pro}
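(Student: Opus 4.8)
My plan is to first reduce both families to a single pure state. Using the identity $a^\dagger S(r)\ket0 = S(r)\big(S^\dagger(r)a^\dagger S(r)\big)\ket0 = \cosh(r)\,S(r)\ket1$, one sees that $\ket{\zeta_r}_+ = S(r)\ket1$ is simply the squeezed single-photon state; combined with the already-established fact that $\ket{\zeta_r}_- = \ket{\zeta_r}_+$, it suffices to bound $\Rl_\C(\psi)$ for $\psi = \proj{S(r)1}$. Since strong duality holds for nonclassicality (Sec.~\ref{sec:strong_duality_resources}), we have $\Rl_\C = \R_\C$, so I may freely combine the dual (witness) bound of Lemma~\ref{lem:robustness_lowerbound} with the primal (ansatz) bound of Lemma~\ref{lem:robustness_pure}.

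For the lower bounds I will use a single one-parameter family of witnesses in Lemma~\ref{lem:robustness_lowerbound}, namely $\omega_s = S(s)\proj1 S^\dagger(s)$. The numerator evaluates to $\braket{\psi|\omega_s|\psi} = \left|\braket{1|S(r-s)|1}\right|^2 = \cosh^{-3}(r-s)$ using $\braket{1|S(u)|1} = \cosh^{-3/2}(u)$ (read off from \eqref{eq:photon added number}), while the denominator equals $\sup_{\sigma\in\C}\<\sigma,\omega_s\> = \sup_\alpha\left|\braket{\alpha|S(s)1}\right|^2 = e^{s-1}\cosh^{-2}(s)$, obtained by maximizing $|\alpha|^2 e^{-|\alpha|^2 + \tanh(s)\Re(\alpha^2)}/\cosh^3(s)$ over $\alpha$ exactly as in the proof of Proposition~\ref{prop:rob_squeezed_states}. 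This yields $\Rl_\C(\psi) \geq h(s) \coloneqq e^{1-s}\cosh^2(s)/\cosh^3(r-s)$ for every $s\geq0$. Choosing $s=r$ (so that $\omega_r = \proj\psi$) recovers the headline bound $e^{1-r}\cosh^2(r)$; maximizing $h$ over $s\geq0$ then sharpens it. The stationarity condition is $2\tanh(s)+3\tanh(r-s)=1$, whose interior solution exists precisely when $\tfrac{d}{ds}\ln h(0) = -1+3\tanh(r) > 0$, i.e.\ $\tanh(r)>1/3$, equivalently $r>\ln\sqrt2$. For $r\leq\ln\sqrt2$ the maximum sits at the boundary $s=0$, giving $h(0)=e\cosh^{-3}(r)$; for $r\geq\ln\sqrt2$ the interior critical value simplifies (despite $s^*$ itself not being elementary) to $\tfrac4{27}e^{1+2r}/\sinh(r)$.

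For the upper bound I will feed the classical squeezed-thermal ansatz $\sigma_s = S(s)\tau_{N(s)}S^\dagger(s)$ from \eqref{eq:thermal ansatz} into Lemma~\ref{lem:robustness_pure}. Writing $u=s-r$ and $\beta = (N+1)/N = \coth(s)$, I commute the squeezers using $S^\dagger(r)S(s)=S(s-r)$ to get $\braket{\psi|\sigma_s^{-1}|\psi} = \braket{S(-u)1|\tau_{N(s)}^{-1}|S(-u)1}$; expanding $S(-u)\ket1$ in the (odd) Fock basis via \eqref{eq:photon added number} and resumming with the generating-function identity $\sum_{n\geq0}(2n+1)\binom{2n}{n}(z/4)^n = (1-z)^{-3/2}$ gives the closed form $\braket{\psi|\sigma_s^{-1}|\psi} = \tfrac{(N+1)\beta}{\cosh^3(u)}\,(1-\beta^2\tanh^2 u)^{-3/2}$, valid whenever $\beta^2\tanh^2 u<1$. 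Minimizing this single-variable expression over $s$ produces the stated $\tfrac{4e^{2r}}{3\sqrt3\sinh(r)}$, the factor $3^{3/2}=3\sqrt3$ being the fingerprint of the $-3/2$ power.

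The routine parts are the two Gaussian/Fock matrix elements and the generating-function resummation. The main obstacle is the two elementary-but-delicate single-variable optimizations: extracting the clean closed form $\tfrac4{27}e^{1+2r}/\sinh(r)$ for the interior maximum of $h$ (where the optimizer $s^*$ is not itself expressible in closed form, so one must eliminate it via the stationarity relation), and carrying the $s$-minimization of the upper-bound expression through to the final form. One must also verify the convergence/domain condition $\ket\psi\in\ran(\sigma_s^{1/2})$, equivalently $\beta^2\tanh^2 u<1$, required for Lemma~\ref{lem:robustness_pure} to apply.
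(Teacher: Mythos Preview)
Your proposal is correct and follows essentially the same approach as the paper: the same one-parameter family of witnesses $\omega_s=\ket{\zeta_s}_+\!\bra{\zeta_s}$ for the lower bounds via Lemma~\ref{lem:robustness_lowerbound}, the same squeezed-thermal ansatz $\sigma_s$ from \eqref{eq:thermal ansatz} for the upper bound via Lemma~\ref{lem:robustness_pure}, and the same generating-function resummation $\sum_{n\geq0}(2n+1)\binom{2n}{n}(z/4)^n=(1-z)^{-3/2}$. One correction worth noting: contrary to your claim, the interior optimizer $s^*$ for the lower bound \emph{is} elementary --- your stationarity equation $2\tanh(s)+3\tanh(r-s)=1$ is solved explicitly by $s^*=\tfrac12\ln(2e^{2r}-3)$ (valid precisely when $r\geq\ln\sqrt2$), and direct substitution yields $h(s^*)=\tfrac{4}{27}e^{1+2r}/\sinh(r)$ without any elimination step.
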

\begin{proof}
We sketch the proof here. Since the exact calculations are lengthy, we defer the complete details to Appendix~\ref{app:nonclassicality}.

The lower bounds all follow from Lemma~\ref{lem:robustness_lowerbound} with suitable choices of $\omega = \ket{\zeta_q}_+\!\bra{\zeta_q}$ in Eq.~\eqref{rob_lower_bound}. In particular, we compute
\bal
 \frac{\left|{}_+\braket{\zeta_q|\zeta_r}_+\right|^2}{\sup_{\alpha\in \CC} \left|\braket{\alpha|\zeta_q}_+\right|^2} = \frac{e^{-q+1}\cosh^2(q)}{\cosh^3(r-q)}.
\eal
The general lower bound in Eq.~\eqref{eq:photon_added_lower} follows by choosing $q=r$. For $r \leq \ln\sqrt{2}$, we get a tighter bound with the choice of $q = 0$, and for $r \geq \ln\sqrt{2}$ we can choose $\displaystyle q = \frac{1}{2} \ln\left(2e^{2r}-3\right)$.

To obtain the upper bound, we consider the ansatz in Eq.~\eqref{eq:thermal ansatz}. Computing
\bal
    g_{\rm PA}(r,s) \coloneqq&\ {}_+\braket{\zeta_r | \sigma_s^{-1} | \zeta_r}_+ \\
    =&\ \frac{\cosh(s)\sinh^2(s)}{(1-\tanh(s))\left(\sinh(r)\sinh(2s-r)\right)^{3/2}} ,
\eal
we see that this expression is minimized for $s=\frac14 \ln (4e^{2r}-3)$. Plugging this value of $s$ into the above function yields the claimed bound.
\end{proof}

%%%

\subsubsection{Cat states}

\begin{pro}\label{prop:cat_states}
For the cat states $\ket{\alpha_\pm} = \frac{1}{\sqrt{2 c_\pm}} (\ket\alpha \pm \ket{-\alpha})$, where $\alpha > 0$ and $c_\pm = 1 \pm e^{-2\alpha^2}$, we have
\begin{equation}\begin{aligned}
    \Rl_\C(\alpha_\pm) \leq \frac{2}{c_\pm}
\end{aligned}\end{equation}
and corresponding lower bounds give $\Rl_\C(\alpha_\pm) \to 2$ as $\alpha \to \infty$.

In the low $\alpha$ regime, we can furthermore obtain improved analytical bounds. For the even cat state $\ket{\alpha_+}$, it holds that
\begin{equation}\begin{aligned}
    \Rl_\C(\alpha_+) \geq \frac{\cosh(\alpha)^2}{\cosh(\alpha^2)}.
\end{aligned}\end{equation}
For the odd cat state $\ket{\alpha_-}$, we have
\begin{equation}\begin{aligned}
    \frac{\alpha^2 e}{\sinh(\alpha^2)} \leq \Rl_\C(\alpha_-) \leq \frac{\alpha^2 e}{(1-\alpha^4) \sinh(\alpha^2)},
\end{aligned}\end{equation}
where the upper bound applies only to $\alpha < 1$. The lower bound is tight for $\alpha = 1$.
\end{pro}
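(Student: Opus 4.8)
The plan is to feed the two master tools of this section into carefully chosen ansätze. Lemma~\ref{lem:robustness_pure} turns any classical state $\sigma$ into an upper bound $\Rl_\C(\psi)\leq\braket{\psi|\sigma^{-1}|\psi}$, while Lemma~\ref{lem:robustness_lowerbound}, specialised to nonclassicality through $\sup_{\sigma\in\C}\braket{\sigma,\proj\phi}=\sup_{\beta\in\CC}\cbraket{\beta|\phi}^2$, turns any pure probe $\omega=\proj\phi$ into a lower bound $\Rl_\C(\psi)\geq\cbraket{\psi|\phi}^2/\sup_{\beta}\cbraket{\beta|\phi}^2$. Every quantity reduces to coherent-state overlaps that factor through hyperbolic functions: for real $x$ one has $\cbraket{x|\alpha_+}^2=\tfrac{2}{c_+}e^{-x^2-\alpha^2}\cosh^2(x\alpha)$ and $\cbraket{x|\alpha_-}^2=\tfrac{2}{c_-}e^{-x^2-\alpha^2}\sinh^2(x\alpha)$, cross overlaps $\cbraket{\mu_+|\alpha_+}^2=\tfrac{4}{c_\mu c_\alpha}e^{-\mu^2-\alpha^2}\cosh^2(\mu\alpha)$ (and $\sinh^2$ for the odd case), and the single-photon Fock overlap $\braket{1|\alpha_-}=\sqrt{2/c_-}\,\alpha\,e^{-\alpha^2/2}$. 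I would record these identities, together with the orthogonality $\braket{\alpha_+|\alpha_-}=0$, before anything else.

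For the common upper bound $\Rl_\C(\alpha_\pm)\leq 2/c_\pm$ I would use the classical ansatz $\sigma=\tfrac12(\proj\alpha+\proj{-\alpha})$: in the orthonormal basis $\{\ket{\alpha_+},\ket{\alpha_-}\}$ spanning its support, $\sigma$ is diagonal with eigenvalues $c_+/2$ and $c_-/2$, so Lemma~\ref{lem:robustness_pure} gives $\braket{\alpha_\pm|\sigma^{-1}|\alpha_\pm}=2/c_\pm$. The matching statement $\Rl_\C(\alpha_\pm)\to 2$ as $\alpha\to\infty$ I would obtain from Lemma~\ref{lem:robustness_lowerbound} with the self-probe $\omega=\proj{\alpha_\pm}$: a saddle-point estimate shows $\sup_\beta\cbraket{\beta|\alpha_\pm}^2\to\tfrac12$ (the supremum being squeezed between its value $c_\pm/2$ attained at $\beta=\alpha$ and a matching Gaussian upper bound concentrated near $x=\alpha$), so the lower bound tends to $2$ and, with $2/c_\pm\to 2$, forces $\Rl_\C(\alpha_\pm)\to 2$.

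The sharpened low-$\alpha$ estimates exploit the limits $\ket{\alpha_+}\to\ket 0$ and $\ket{\alpha_-}\to\ket 1$ (the Fock state) as $\alpha\to 0$, which single out the right probes. For the even cat I take $\omega=\proj\phi$ with $\ket\phi$ the even cat of amplitude one; using $|\cosh(\beta_R-i\beta_I)|^2=\cosh^2\beta_R-\sin^2\beta_I$ one sees that $\sup_\beta\cbraket{\beta|\phi}^2=\tfrac{2}{c_1}e^{-1}$ is attained at $\beta=0$ (amplitude one being exactly the threshold at which $e^{-x^2}\cosh^2 x$ loses its off-centre maxima), and the ratio collapses to $\Rl_\C(\alpha_+)\geq\cosh^2\alpha/\cosh\alpha^2$. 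For the odd cat I take $\omega=\proj 1$ with $\ket 1$ the single-photon Fock state, whose coherent supremum is $\sup_\beta\cbraket{\beta|1}^2=e^{-1}$; this gives $\Rl_\C(\alpha_-)\geq\alpha^2 e/\sinh\alpha^2$. The improved odd-cat upper bound reuses the phase-randomized coherent state $\sigma_1=e^{-1}\sum_k\tfrac1{k!}\proj k$: because $\ket{\alpha_-}$ lives on odd Fock levels, $\braket{\alpha_-|\sigma_1^{-1}|\alpha_-}$ becomes $\tfrac{2e^{1-\alpha^2}}{c_-}\sum_{n\ \mathrm{odd}}\alpha^{2n}$, and the series sums to $\alpha^2/(1-\alpha^4)$ precisely when $\alpha<1$, yielding $\Rl_\C(\alpha_-)\leq\alpha^2 e/[(1-\alpha^4)\sinh\alpha^2]$.

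Tightness at $\alpha=1$ then needs no new construction, only a coincidence: there the common upper bound reads $2/c_-=2/(1-e^{-2})=2e^2/(e^2-1)$, which equals the lower bound $e/\sinh 1=2e^2/(e^2-1)$, forcing $\Rl_\C(1_-)=e/\sinh 1$. I expect the main obstacle to be neither the ansätze nor the algebra but the supremum evaluations underlying the lower bounds: one must confirm that each $\sup_{\beta\in\CC}\cbraket{\beta|\phi}^2$ is attained at a real coherent parameter (and, for amplitude at most one, at the origin), which the $\cosh^2-\sin^2$ identity dispatches for the even case together with a monotonicity argument in the radial variable, and that the large-$\alpha$ estimate is controlled tightly enough to pin the limiting supremum to exactly $1/2$ rather than merely bounding it from one side.
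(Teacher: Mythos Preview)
Your proposal is correct and follows essentially the same approach as the paper: the same classical mixture $\sigma=\tfrac12(\proj{\alpha}+\proj{-\alpha})$ for the common upper bound, the self-probe for the large-$\alpha$ limit, the amplitude-one even cat and the Fock state $\ket{1}$ as probes for the sharpened lower bounds, and the phase-randomized state $\sigma_1$ for the improved odd-cat upper bound. You are in fact slightly more careful than the paper in flagging the need to justify that the coherent-state supremum is attained at a real parameter (and at the origin for amplitude $\leq 1$), which the paper simply asserts.
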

\begin{proof}
Defining the state $\sigma_\alpha \coloneqq \frac{1}{2} ( \proj{\alpha} + \proj{-\alpha}) \in \C$, one can notice that each $\ket{\alpha_\pm}$ is an eigenvector of this operator with eigenvalue $\frac{c_\pm}{2}$, respectively. Lemma~\ref{lem:robustness_pure} then gives
\begin{equation}\begin{aligned}
     \Rl_\C(\alpha_\pm) \leq \braket{\alpha_\pm| \sigma_\alpha^{-1} | \alpha_\pm} = \frac{2}{c_\pm}.
\end{aligned}\end{equation}
This can be alternatively noticed by recalling from Prop.~\ref{thm:robustness_pure_norm} that $\Rl_\C(\alpha_\pm) = \norm{\ket{\alpha_\pm}}_\f^2$ and using that
\begin{equation}\begin{aligned}
    \norm{\ket{\alpha_\pm}}_\f &= \frac{1}{\sqrt{2c_\pm}} \norm{\ket{\alpha} \pm \ket{-\alpha}}_\f\\
    &\leq \frac{1}{\sqrt{2c_\pm}} \left(\norm{\ket\alpha}_\f + \norm{\ket{-\alpha}}_\f\right)\\
    &=\sqrt{\frac{2}{c_\pm}}.
\end{aligned}\end{equation}

Now, from Lemma~\ref{lem:robustness_lowerbound} we have that
\begin{equation}\begin{aligned}\label{eq:cat_lowerbounds}
    \Rl_\C(\alpha_+) &\geq \inf_{\beta \in \CC} \frac{1}{\cbraket{\beta|\alpha_+}^2}\\
    &=\inf_{\beta \geq 0} \frac{ e^{\beta^2} \cosh(\alpha^2) }{\cosh(\alpha\beta)^2},\\
    \Rl_\C(\alpha_-) &\geq \inf_{\beta \geq 0} \frac{ e^{\beta^2} \sinh(\alpha^2) }{\sinh(\alpha\beta)^2}.
\end{aligned}\end{equation}
Although the bounds do not seem amenable to an analytical expression in general, they can be verified to approach $2$ in the large $\alpha$ limit (cf.~\cite{malbouisson_2003,nair_2017}).

Alternatively, for the even cat state $\ket{\alpha_+}$ we can take the cat state $\ket{\gamma_+}$ with parameter $\gamma = 1$ as an ansatz to the lower bound in Lemma~\ref{lem:robustness_lowerbound}. For this state, we compute $\sup_{\beta \in \CC} \cbraket{1_+|\beta}^2 = \cosh(1)^{-1}$. Using that $\cbraket{\alpha_+|\gamma_+}^2 = \frac{\cosh(\alpha\gamma)^2}{\cosh(\alpha^2)\cosh(\gamma^2)}$, we get
\begin{equation}\begin{aligned}\label{eq:cat_even_lower_one}
\Rl_\C(\alpha_+) \geq \frac{\cosh(\alpha)^2}{\cosh(\alpha^2)}.
\end{aligned}\end{equation}
For the odd cat state $\ket{\alpha_-}$ we can choose the Fock state $\ket{1}$ as an ansatz to Lemma~\ref{lem:robustness_lowerbound}, giving
\begin{equation}\begin{aligned}\label{eq:cat_odd_fock}
 \Rl_\C(\alpha_-) \geq \frac{\alpha^2 e}{\sinh(\alpha^2)}.
\end{aligned}\end{equation}
The upper and lower bound then coincide for $\alpha = 1$, in which case we have $\displaystyle \Rl_\C(1_-) = \frac{2e^2}{e^2 - 1}$. For an alternative upper bound, we can choose the phase-randomized state $\sigma_1$ (see Eq.~\ref{eq:phase-random}). Since for all $0 < \alpha < 1$, $\ket{\alpha_-} \in \ran(\sigma_1^{1/2})$,  Lem.~\ref{lem:robustness_pure} gives
\begin{equation}\begin{aligned}\label{eq:cat_odd_upper_sigmaone}
    \Rl_\C(\alpha_-) \leq \braket{\alpha_-| \sigma_1^{-1} | \alpha_-} = \frac{\alpha^2 e}{(1-\alpha^4) \sinh(\alpha^2)}.
\end{aligned}\end{equation}
\end{proof}

We note that the lower bounds in Eq.~\eqref{eq:cat_lowerbounds} can be easily evaluated numerically --- we visualize this in Fig.~\ref{fig:bounds_cat}. In particular, the convergence of the lower bound to $2$ is fast, and already for $\alpha = 2$ we obtain
\begin{equation}\begin{aligned}
    1.9990 &\approx \frac{1}{\sup_{\beta \in \CC} \cbraket{\alpha_+|\beta}^2} \leq \Rl_\C(\alpha_+) \leq \frac{2}{c_+} \approx 1.9993\\
    2.0003 &\approx \frac{1}{\sup_{\beta \in \CC} \cbraket{\alpha_-|\beta}^2} \leq \Rl_\C(\alpha_-) \leq \frac{2}{c_-} \approx 2.0007.
\end{aligned}\end{equation}

An interesting property emerges by performing an explicit numerical optimization over the choices of even cat states $\ket{\gamma_+}$ in the bound from Lemma~\ref{lem:robustness_lowerbound},
\begin{equation}\begin{aligned}\label{eq:cat_optimized_lower}
    \Rl_\C(\alpha_+) \geq \sup_{\gamma \in \CC} \frac{\cbraket{\alpha_+|\gamma_+}^2}{\sup_{\beta \in \CC} \cbraket{\beta|\gamma_+}^2}.
\end{aligned}\end{equation}
As we see in Fig.~\ref{fig:bounds_cat}, the bound actually matches the upper bound $\frac{2}{c_+}$ perfectly for all $\alpha$. In a similar way, we obtain an exact match of the analytical upper and numerical lower bound for $\ket{\alpha_-}$ for all $\alpha \geq 1$.

For the odd cat state $\ket{\alpha_-}$, we additionally plot a bound obtained from numerically optimizing the parameter $x$ in the phase-randomized state $\sigma_x$ over $x > \alpha^2$, which can provide a slight improvement over the bound from $\sigma_1$. As $\alpha \to 0$, all of the bounds approach $\Rl_\C(\proj{1}) = e$.

\begin{figure*}[t]
\centering
\subfloat[$\ket{\alpha_+}$]{\includegraphics[width=.48\textwidth]{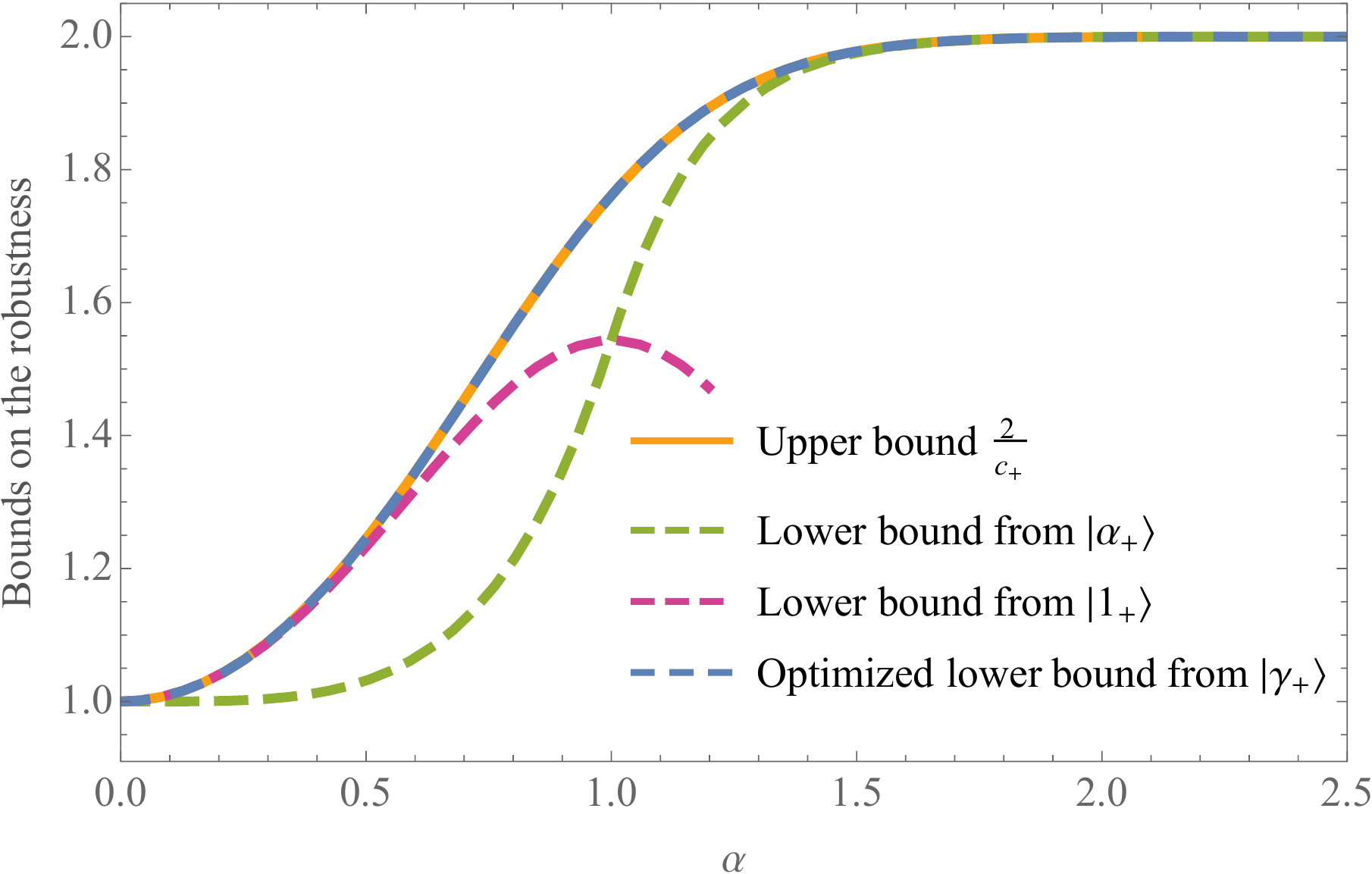}}\qquad
\subfloat[$\ket{\alpha_-}$]{\includegraphics[width=.48\textwidth]{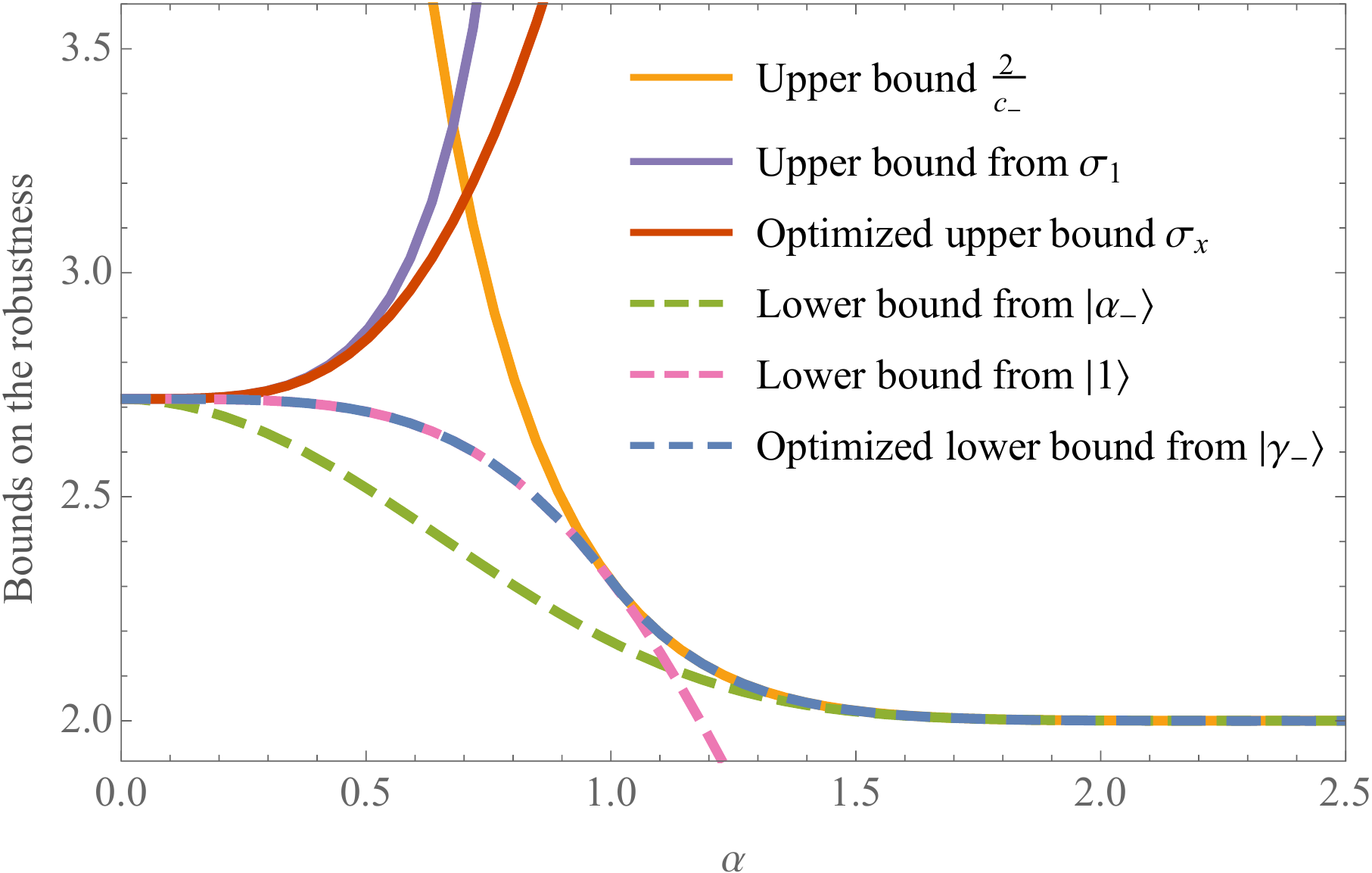}}
\caption{Evaluating bounds for the robustness of nonclassicality of the Schr\"odinger cat states (a) $\ket{\alpha_+}$ and (b) $\ket{\alpha_-}$. The plots show the various analytical and numerical bounds: (a) the lower bound obtained by choosing $\ket{\alpha_+}$ (Eq.~\eqref{eq:cat_lowerbounds}) and $\ket{1_+}$ (Eq.~\eqref{eq:cat_even_lower_one}) in Lemma~\ref{lem:robustness_lowerbound}, as well as the one obtained by numerically optimizing over the choice of $\ket{\gamma_+}$ (Eq.~\eqref{eq:cat_optimized_lower}); (b) the upper bounds obtained by choosing $\sigma_1$ (Eq.~\eqref{eq:cat_odd_upper_sigmaone}) or the optimized state $\sigma_x$ in Lemma~\ref{lem:robustness_pure}, and the lower bounds obtained by choosing $\ket{\alpha_-}$ (Eq.~\eqref{eq:cat_lowerbounds}), $\ket{1}$~(Eq.~\eqref{eq:cat_odd_fock}), or the optimized state $\ket{\gamma_-}$ in Lemma~\ref{lem:robustness_lowerbound}.}%caption
\label{fig:bounds_cat}
\end{figure*}

%%%%%%%%%%%%%%%%%%%%%%%%%%%%%%%%%%%%

\subsubsection{Multi-mode nonclassicality}

The Hilbert space modelling $m$ harmonic oscillators (or modes) is $\H = L^2(\RR^m)$, which can be written as a tensor product of the single-mode spaces $\H = L^2(\RR) \otimes \cdots \otimes L^2(\RR)$. 
The theory of multi-mode nonclassicality admits a remarkably simple structure, in the sense that any $m$-mode coherent state is of the form $\ket{\boldsymbol\alpha} = \ket{\alpha_1} \otimes \cdots \otimes \ket{\alpha_m}$ with $\boldsymbol\alpha = (\alpha_1, \ldots, \alpha_m) \in \CC^m$. This leads to the property that the robustness of nonclassicality is, in fact, multiplicative for products of single-mode states.
\begin{pro}\label{prop:multiplicativity_noncl}
Let $\{\rho_i\}_{i=1}^m$ be a collection of single-mode quantum states $\rho_i \in \D(L^2(\RR))$. Then
\begin{equation}\begin{aligned}
    \Rl_\C \left(\bigotimes_{i=1}^m \rho_i\right) = \prod_{i=1}^m \Rl_\C(\rho_i).
\end{aligned}\end{equation}
\end{pro}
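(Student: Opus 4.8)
The plan is to establish the two inequalities separately and glue them with strong duality. The lower bound $\Rl_\C\!\left(\bigotimes_i \rho_i\right) \geq \prod_i \Rl_\C(\rho_i)$ will come from the dual formulation of $\Rl_\C$ in Cor.~\ref{cor:rob_lsc_equal}, while the upper bound will come from the primal formulation in the second line of Eq.~\eqref{eq:rob_primal_conic_expression}, converted into a statement about $\Rl_\C$ via the identity $\R_\C = \Rl_\C$ that holds for nonclassicality (Sec.~\ref{sec:strong_duality_resources}, Thm.~\ref{thm:strong_duality_from_weakstar}). Two structural facts will be used repeatedly: an $m$-mode coherent state factorizes as $\ket{\boldsymbol\alpha} = \bigotimes_i \ket{\alpha_i}$, and --- because $\langle W, \cdot\rangle$ is trace-norm continuous while $\C$ is the closed convex hull of coherent-state projectors --- a positive operator $W$ satisfies $\langle W, \sigma\rangle \leq 1$ for all $\sigma \in \C$ if and only if $\braket{\boldsymbol\alpha|W|\boldsymbol\alpha} \leq 1$ for every coherent state.

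For the lower bound, for each mode $i$ I pick a feasible dual witness $W_i \cgeq 0$ with $\braket{\alpha|W_i|\alpha} \leq 1$ for all $\alpha \in \CC$ and with $\langle W_i, \rho_i\rangle$ arbitrarily close to $\Rl_\C(\rho_i)$. The product $W = \bigotimes_i W_i$ is then a feasible witness for the $m$-mode problem: it is positive, and $\braket{\boldsymbol\alpha|W|\boldsymbol\alpha} = \prod_i \braket{\alpha_i|W_i|\alpha_i} \leq 1$. Since $\langle W, \bigotimes_i \rho_i\rangle = \prod_i \langle W_i, \rho_i\rangle$, optimizing over the $W_i$ yields the claimed inequality. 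The case in which some $\Rl_\C(\rho_i) = \infty$ is handled by letting the objective on that mode diverge while taking $W_j = \id$ (feasible, with $\langle \id, \rho_j\rangle = 1$) on the remaining modes.

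For the upper bound I use $\R_\C(\rho_i) = \inf\{\langle U, \sigma_i'\rangle : \rho_i \cleq_\C \sigma_i', \; \sigma_i' \in \cone(\C)\}$ and propose the ansatz $\sigma' = \bigotimes_i \sigma_i'$ built from near-optimal single-mode feasible points. Two things must be checked. First, the ordering is preserved: $\rho_i \cleq_\C \sigma_i'$ for all $i$ gives $\bigotimes_i \rho_i \cleq_\C \bigotimes_i \sigma_i'$, which follows from the telescoping identity $\bigotimes_i \sigma_i' - \bigotimes_i \rho_i = \sum_i \rho_1 \otimes \cdots \otimes \rho_{i-1} \otimes (\sigma_i' - \rho_i) \otimes \sigma_{i+1}' \otimes \cdots \otimes \sigma_m'$, each summand of which is a tensor product of positive operators and hence positive. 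Second, the product stays free: writing $\sigma_i' = \lambda_i \sigma_i$ with $\sigma_i \in \C$ gives $\sigma' = \left(\prod_i \lambda_i\right)\bigotimes_i \sigma_i$, and $\bigotimes_i \sigma_i \in \C$ because the product of the Borel probability measures representing the $\sigma_i$ represents $\bigotimes_i \sigma_i$ as a mixture of $m$-mode coherent states (the integral representation of $\C$). Multiplicativity of the trace then gives $\langle U, \sigma'\rangle = \prod_i \langle U, \sigma_i'\rangle$, so taking infima yields $\R_\C\!\left(\bigotimes_i \rho_i\right) \leq \prod_i \R_\C(\rho_i)$. Using $\Rl_\C \leq \R_\C$ on the left and $\R_\C(\rho_i) = \Rl_\C(\rho_i)$ on the right turns this into the desired upper bound, which combined with the lower bound proves equality.

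The telescoping operator inequality and the multiplicativity of the trace are routine. The two points needing genuine care are that the tensor product of free states is again free --- which rests on the product-measure construction and on the closure built into the definition of $\C$ --- and the appeal to single-mode strong duality to match the primal bound (naturally phrased for $\R_\C$) with the dual bound (phrased for $\Rl_\C$). I expect the latter to be the conceptual crux: without $\R_\C = \Rl_\C$ the primal argument delivers only $\Rl_\C\!\left(\bigotimes_i \rho_i\right) \leq \prod_i \R_\C(\rho_i)$, which is strictly weaker, so multiplicativity genuinely relies on the strong-duality results established earlier for this resource theory.
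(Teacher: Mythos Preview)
Your proposal is correct and follows essentially the same approach as the paper's own proof: both establish submultiplicativity via the primal formulation using a telescoping decomposition of $\bigotimes_i \sigma_i' - \bigotimes_i \rho_i$, and supermultiplicativity via the dual formulation using that tensor products of single-mode witnesses are feasible multi-mode witnesses thanks to the factorization $\ket{\boldsymbol\alpha} = \bigotimes_i \ket{\alpha_i}$. Your treatment is slightly more detailed in a few places --- you justify explicitly that $\bigotimes_i \sigma_i \in \C$ via product measures, you handle the case $\Rl_\C(\rho_i) = \infty$ separately, and you spell out the role of strong duality more carefully --- but the core argument is the same.
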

\begin{proof}
We will use $\C_i$ to denote the set of classical states of the given mode. Let $\sigma_i \in \C_i$ be feasible states such that $\rho_i \cleq \lambda_i \sigma_i$. Then, $\bigotimes_i \rho_i \cleq \left(\prod_i \lambda_i\right) \bigotimes_i \sigma_i$ --- this can be seen for $m=2$ from the fact that
\begin{equation}\begin{aligned}
    0 &\cleq (\lambda_1 \sigma_1 - \rho_1) \otimes \lambda_2 \sigma_2 + \rho_1 \otimes (\lambda_2 \sigma_2 - \rho_2)\\
    &= \lambda_1 \lambda_2 (\sigma_1 \otimes \sigma_2 ) - \rho_1 \otimes \rho_2
\end{aligned}\end{equation}
where we used that the tensor product of positive operators is positive, and an extension to arbitrary $m$ is immediate. Since $\bigotimes_i \sigma_i \in \C$, this shows in particular that feasible solutions for $\R_\C(\rho_i) = \Rl_\C(\rho_i)$ can be used to construct a feasible solution for $\Rl_\C\left(\bigotimes_i \rho_i\right)$, leading to $\Rl_\C\left(\bigotimes_i \rho_i\right) \leq \prod_i \lambda_i$. As this holds for any feasible choice of $\lambda_i$, it must also hold for their infimum, which implies the desired relation $\Rl_\C\left(\bigotimes_i \rho_i\right) \leq \prod_i \Rl_\C(\rho_i)$.

Let $W_i$ be feasible dual solutions for $\Rl_\C(\rho_i)$, with each $W_i \cgeq 0$ and $\sup_{\sigma \in \C_i} \<W_i, \sigma\> \leq 1$. Then $\bigotimes_i W_i \cgeq 0$ and
\begin{equation}\begin{aligned}
    \sup_{\sigma \in \C} \< \bigotimes_i W_i, \sigma \> &= \sup_{\boldsymbol\alpha \in \CC^m} \< \bigotimes_i W_i, \proj{\boldsymbol\alpha} \> \\
    &= \sup_{\alpha_i \in \CC} \< \bigotimes_i W_i, \bigotimes_i \proj{\alpha_i} \> \\
    &= \sup_{\alpha_i \in \CC} \prod_i \braket{\alpha_i|W_i|\alpha_i}\\
    &\leq 1,
\end{aligned}\end{equation}
where in the first line we used that the supremum of a continuous linear functional over a set $\S$ is equal to its supremum over $\cl\conv\S$, and in the second line we used that every $m$-mode coherent state is a product of single-mode coherent states. This means that $\bigotimes_i W_i$ is a feasible witness for $\Rl_\C\left(\bigotimes_i \rho_i\right)$. Therefore $\Rl_\C\left(\bigotimes_i \rho_i\right) \geq \< \bigotimes_i W_i, \bigotimes_i \rho_i \> = \prod_i \< W_i, \rho_i \>$, and since this holds for any feasible $W_i$, it must also hold for their least upper bounds $\prod_i \Rl_\C(\rho_i)$.
\end{proof}

One can notice that the submultiplicativity of the robustness $\R_\F$ in fact holds in any resource theory such that $\sigma_1 \in \F(\H_1), \sigma_2 \in \F(\H_2) \Rightarrow \sigma_1 \otimes \sigma_2 \in \F(\H_1 \otimes \H_2)$, but the full multiplicativity relies on the particular structure of pure states in the theory of nonclassicality.

%%%%%%%%%%%%%%%%%%%%%%%%%%%%%%%%%%%%%%%%%%%%%%%%%%%%%%%%%%%%%%%%%%%%%%%%%%%%%%%%%%%%%%%%%%%%

\subsection{Entanglement}\label{sec:entanglement}

The foremost example of a quantum resource theory is entanglement theory~\cite{werner_1989,bennett_1996-1, bennett_1996-3, horodecki_2009} (cf.~\cite[Sec.~IV.A]{chitambar_2019}). In spite of its historical importance, comparatively little is known on the properties of entanglement in infinite-dimensional systems~\cite{eisert_2002, holevo_2005, shirokov_2010, Shirokov-AFW-1, relentNC}. Throughout this section, we will specialize our results to this case, and establish the corresponding robustness as a valid entanglement measure in the infinite-dimensional setting.

We consider a tensor product of two separable Hilbert spaces $\H = \H_A \otimes \H_B$. The set of free pure states is defined to consist of product vectors:
\begin{equation*}
\f = \lset \ket{\phi} \otimes \ket{\tau} \sbar \ket{\phi} \in \H_A,\; \ket{\tau} \in \H_B,\, \braket{\phi|\phi}=1=\braket{\tau|\tau} \rset .
\end{equation*}
The closure of all convex combinations of such states defines the set of separable states $\S = \cl \conv \lset \proj \psi \sbar \ket\psi \in \f \rset$.

\subsubsection{Strong duality}

We start our investigation by showing that Thm.~\ref{thm:strong_duality_from_weakstar} applies, thus implying that $\R_\S=\Rl_\S$ holds in this case as well.

\begin{lem}
The cone $\cone(\S)$ of separable operators is closed in the weak* topology. Thus, for all states $\rho\in \D(\H_A\otimes \H_B)$ it holds that $\Rl_\S (\rho) = \R_\S(\rho)$ and $\Rsl_\S (\rho) = \Rs_\S(\rho)$.
\end{lem}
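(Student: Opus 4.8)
My plan is to reduce the entire statement to the single topological fact that $\cone(\S)$ is weak* closed, since \thrm{strong_duality_from_weakstar} then delivers both equalities $\R_\S(\rho)=\Rl_\S(\rho)$ and $\Rs_\S(\rho)=\Rsl_\S(\rho)$ simultaneously. To prove weak* closedness I would invoke the Krein--\v{S}mulian theorem: as $\cone(\S)$ is a convex set and $\T(\H)$ is the dual of the Banach space $\K(\H)$, the cone is weak* closed if and only if its intersection with every ball $r\,\U$ is weak* closed, where $\U\coloneqq\lset X\in\T(\H)\sbar\norm{X}_1\leq1\rset$. By conic homogeneity $\cone(\S)\cap r\U=r\,(\cone(\S)\cap\U)$, so it suffices to treat $\S_\leq\coloneqq\cone(\S)\cap\U$. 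Since every element of $\cone(\S)$ is positive and therefore has trace equal to its trace norm, $\S_\leq=\lset\lambda\sigma\sbar\lambda\in[0,1],\,\sigma\in\S\rset$ is precisely the set of subnormalized separable states, and the goal becomes showing that this set is weak* closed.

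The engine of the argument is the weak* compactness of the set $\bar\Sigma\coloneqq\lset\proj\phi\otimes\proj\tau\sbar\ket\phi\in\H_A,\,\ket\tau\in\H_B,\,\norm{\ket\phi}\leq1,\,\norm{\ket\tau}\leq1\rset$ of subnormalized product pure states, which in particular contains $0$. Because $\H$ is separable, $\K(\H)$ is separable and the weak* topology is metrizable on the bounded set $\bar\Sigma$, so I can argue sequentially. Given a sequence in $\bar\Sigma$, I would use the weak sequential compactness of the unit balls of $\H_A$ and $\H_B$ (Hilbert spaces are reflexive) to extract $\ket{\phi_{n_k}}\rightharpoonup\ket\phi$ and $\ket{\tau_{n_k}}\rightharpoonup\ket\tau$ with $\norm{\ket\phi},\norm{\ket\tau}\leq1$; then $\ket{\phi_{n_k}}\otimes\ket{\tau_{n_k}}\rightharpoonup\ket\phi\otimes\ket\tau$, and since a compact operator maps weakly convergent sequences to norm convergent ones, $\< \proj{\phi_{n_k}\otimes\tau_{n_k}},K\> \to \<\proj{\phi\otimes\tau},K\>$ for every $K\in\K(\H)$. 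This is weak* convergence to an element of $\bar\Sigma$, proving weak* compactness; crucially, passing to the possibly subnormalized limit is exactly what absorbs any ``escape to infinity'' of the vectors, which is why the normalized product states alone would fail to be compact.

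With $\bar\Sigma$ weak* compact, I would identify $\S_\leq$ with the set of barycenters $\int_{\bar\Sigma}Y\,d\nu(Y)$ of Borel probability measures $\nu$ on $\bar\Sigma$. This set is the image of the (narrowly) compact space of probability measures on $\bar\Sigma$ under the barycenter map, which is weak* continuous because $Y\mapsto\<Y,K\>$ is weak* continuous and bounded on $\bar\Sigma$; hence the image is weak* compact. For one inclusion, writing each nonzero $Y\in\bar\Sigma$ as $\Tr(Y)\,P_Y$ with $P_Y$ a normalized product pure state turns a barycenter into $\lambda\int P\,d\tilde\mu(P)$ with $\lambda=\int\Tr(Y)\,d\nu\leq1$ and $\tilde\mu$ a probability measure on normalized product pure states; by the integral representation~\eqref{eq:free states integral} this is an element $\lambda\sigma$ with $\sigma\in\S$, hence lies in $\S_\leq$. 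Conversely, any $\lambda\sigma\in\S_\leq$ with $\sigma=\int\proj v\,d\mu$ is the barycenter of $\lambda\mu+(1-\lambda)\delta_0$. Thus $\S_\leq$ is weak* compact, in particular weak* closed, and Krein--\v{S}mulian yields the weak* closedness of $\cone(\S)$; \thrm{strong_duality_from_weakstar} then gives both claimed equalities.

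The main obstacle is the step identifying weak* limits (equivalently, barycenters over $\bar\Sigma$) with \emph{genuine} separable states: in infinite dimensions the closure in the definition of $\S$ is essential, and elements of $\S$ need not be countable convex combinations of product states. This is precisely where I rely on the Holevo--Shirokov--Werner integral representation~\eqref{eq:free states integral}, which guarantees that every separable state is an integral over product pure states; combined with the subnormalization device (the inclusion of $0$ in $\bar\Sigma$) that tames the failure of compactness of the normalized product states, this closes the gap and distinguishes the infinite-dimensional treatment from the elementary finite-dimensional one of \prop{compact}.
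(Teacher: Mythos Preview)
Your proof is correct and takes a genuinely different route from the paper. The paper proceeds by finite-dimensional truncation: fixing local bases, it defines the projections $\mathcal{P}_N(\cdot)=P_A^N\otimes P_B^N(\cdot)P_A^N\otimes P_B^N$ onto $\H_A^N\otimes\H_B^N$ and shows that $\cone(\S)=\bigcap_N \mathcal{P}_N^{-1}(\cone(\S_N))$, the nontrivial inclusion being verified via the gentle measurement lemma; since each $\mathcal{P}_N$ is weak*-continuous (finite-dimensional range) and each $\cone(\S_N)$ is closed, the intersection is weak*-closed. Your argument instead works globally: Krein--\v{S}mulian reduces the problem to the bounded set $\S_\leq$, you establish weak*-compactness of the subnormalized product pure states $\bar\Sigma$ via reflexivity of the local Hilbert spaces, and then identify $\S_\leq$ with the barycenters of probability measures on $\bar\Sigma$ using the Holevo--Shirokov--Werner integral representation~\eqref{eq:free states integral}. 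The paper's approach is more elementary in its tools (no Krein--\v{S}mulian or Choquet-type machinery) and makes transparent that separability is ``locally detectable'' by all finite-rank compressions; your approach is more structural and would adapt readily to any free set arising as the closed convex hull of a family of pure states whose subnormalized version is weak*-compact, which is a useful template for other resources in the paper (e.g.\ nonclassicality). Both arguments ultimately lean on an external input---the gentle measurement lemma in the paper's case, the integral representation in yours---to bridge the gap created by the failure of compactness of normalized free pure states.
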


\begin{proof}
Let $\{\ket{n}_A\}_n$ and $\{\ket{m}_B\}_m$ be orthonormal bases of the local Hilbert spaces $\H_A$ and $\H_B$. For $N\in \NN$, consider the subspaces $\H_A^N\coloneqq \mathrm{span}\{\ket{0}_A,\ldots, \ket{N}_A \}$ and $\H_B^N\coloneqq \mathrm{span}\{\ket{0}_B,\ldots, \ket{N}_B \}$, with corresponding projectors $P_A^N$ and $P_B^N$. Define the mappings $\mathcal{P}_N:\T(\H_A\otimes \H_B)\to \T\left(\H_A^{N}\otimes \H_B^{N}\right)$ given by
\begin{equation}
    \mathcal{P}_N(\cdot) \coloneqq P_A^N \otimes P_B^N (\cdot) P_A^N \otimes P_B^N ,
\end{equation}
and construct the sets
\begin{align}
    \f^N &\coloneqq \lset \ket{\phi}\! \otimes\! \ket{\tau} \sbar \ket{\phi}\! \in\! \H_A^N,\; \ket{\tau}\! \in\! \H_B^N,\, \braket{\phi|\phi}=\!1\!=\braket{\tau|\tau} \rset , \\
    \S^N &\coloneqq \conv\left( \lset \ketbra{\psi}{\psi} \sbar \ket{\psi}\in \f^N \rset\right) .
\end{align}
Now, it is not difficult to verify that
\begin{equation}
    \cone(\S) = \bigcap_{N\in\NN} \mathcal{P}_N^{-1} \left( \cone(\S_N) \right) .
    \label{eq:cone_sep_projections}
\end{equation}
In fact, on the one hand any $X\in \cone(\S)$ clearly satisfies that $\mathcal{P}_N(X)\in \cone(\S_N)$. On the other, let $X$ be a non-zero trace-class operator such that $\mathcal{P}_N(X)\in \cone (\S_N)\subseteq \cone(\S)$ for all $N\in \NN$. First, it is elementary to see that $X\geq 0$ is in fact positive semidefinite. Second, it holds that $\lim_{N\to\infty} \Tr[\mathcal{P}_N(X)]=\Tr[X]$, because $X$ is trace-class. Incidentally, since $X\geq 0$ and $X\neq 0$, we have that $\Tr X > 0$. Therefore, we can apply the gentle measurement lemma~\cite[Lemma~9]{winter_1999} and conclude that $X_N\coloneqq \frac{\mathcal{P}_N(X)}{\Tr[\mathcal{P}_N(X)]}\in\S_N\subset \S$ satisfies that
\begin{equation}
    \lim_{N\to\infty} \frac{\mathcal{P}_N(X)}{\Tr\left[\mathcal{P}_N(X)\right]} = \frac{X}{\Tr[X]} ,
\end{equation}
where the limit is with respect to the trace norm topology. Since $\S$ is closed, we deduce that $\frac{X}{\Tr[X]}\in \S$ and hence that $X\in \cone(\S)$, as claimed.

Having established \eqref{eq:cone_sep_projections}, let us see why this ensures that $\cone(\S)$ is weak*-closed. Note that the range of the map $\mathcal{P}_N$ is finite-dimensional and the entries of $\mathcal{P}_N(X)$ are weak*-continuous functions of $X$ for a fixed $N$, so that the map $\mathcal{P}_N$ is itself weak*-continuous. %This latter claim holds essentially because maps on $\T(\H)$ of the form $X\mapsto \braket{\psi|X|\phi} = \Tr[X\, \ketbra{\phi}{\psi}]$ are always weak*-continuous.
Since $\cone(\S_N)$ is closed, it follows that $\mathcal{P}_N^{-1} \left( \cone(\S_N) \right)$ is weak*-closed as well. Being an intersection of weak*-closed sets, $\cone(\S)$ is itself weak*-closed. Thus, an application of Thm.~\ref{thm:strong_duality_from_weakstar} concludes the proof.
\end{proof}

\subsubsection{Pure states}

Having established the equality between the robustness and its lower semicontinuous version for the case of entanglement, we turn to the problem of computing the resulting function for pure states. Recall that, for any pure state $\ket\psi$, there exist orthonormal bases $\{\ket{i}\}_{i=0}^\infty \in \H_A, \{\ket{i}\}_{i=0}^\infty \in \H_B$ such that $\ket\psi$ can be written as
\begin{equation} \label{eq:Schmidt} \begin{aligned}
    \ket\psi = \sum_{i=0}^{\infty} \mu_i \ket{ii}
\end{aligned}\end{equation}
where $\{\mu_i\}_{i=1}^\infty$ is a monotonically non-increasing, square-summable sequence in $\RR_+$ referred to as the Schmidt coefficients of $\ket\psi$. When at least one of $\H_A$ and $\H_B$ is finite-dimensional, there are at most $d = \min\{\dim \H_A, \dim \H_B\}$ non-zero Schmidt coefficients, and we have~\cite{rudolph_2001}
\begin{equation}\begin{aligned}
    \norm{\ket\psi}_\f = \sum_{i=0}^d \mu_i.
\end{aligned}\end{equation}
We will extend the above also to the infinite-dimensional case, leading to a computable expression for the robustness $\Rl_\S(\psi)$ of any pure state.

The case of entanglement theory is peculiar because for pure states this same expression yields also the standard robustness $\Rs_\S$. While this has been established by Vidal and Tarrach~\cite{vidal_1999} for the finite-dimensional case, we now extend their result to our general infinite-dimensional setting.
%When both of the constituent Hilbert spaces are infinite-dimensional, infinitely entangled states can exist~\cite{keyl_2003}, and in particular Arveson~\cite{arveson_2009} constructed an example of a state which satisfies $\norm{\ket\psi}_\f = \infty$. We will now generalise the finite-dimensional formula as well as Arveson's observation.

\begin{pro}\label{thm:rob_ent_pure}
Consider any pure state  $\ket\psi \in \H_A \otimes \H_B$ and let $\{\mu_i\}_{i=0}^\infty \in \ell^2(\RR)$ be the sequence of its Schmidt coefficients. Then
\begin{equation}\begin{aligned}
\Rsl_\S(\psi) = \Rl_\S(\psi) = \left(\sum_{i=0}^\infty \mu_i\right)^2.
\end{aligned}\end{equation}
In particular, a pure state has $\Rl_\S(\psi) < \infty$ if and only if the sum $\sum_i \mu_i$ converges.
\end{pro}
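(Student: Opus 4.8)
The plan is to establish the two-sided sandwich
$\left(\sum_i \mu_i\right)^2 \le \Rl_\S(\psi) \le \Rsl_\S(\psi) \le \Rs_\S(\psi) \le \left(\sum_i \mu_i\right)^2$,
in which the two middle inequalities are automatic: $\Rl_\S \le \Rsl_\S$ because the standard robustness optimizes its noise over the smaller set $\S$, and $\Rsl_\S \le \Rs_\S$ since the lower semicontinuous hull never exceeds the original function. It therefore suffices to prove (a) $\Rl_\S(\psi) = \left(\sum_i\mu_i\right)^2$ and (b) $\Rs_\S(\psi) \le \left(\sum_i\mu_i\right)^2$. The ``in particular'' claim is then immediate, since $\left(\sum_i\mu_i\right)^2$ is finite precisely when the nonnegative series $\sum_i\mu_i$ converges.

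For (a) I would invoke Proposition~\ref{thm:robustness_pure_norm}, which gives $\Rl_\S(\psi) = \norm{\ket\psi}_\f^2$, reducing the problem to computing the gauge $\norm{\ket\psi}_\f$ for the product-vector set $\f$. First I would identify $\H_A\otimes\H_B$ with the Hilbert--Schmidt operators via $\ket{ij}\mapsto\ket{i}\!\bra{j}$ (conjugating the $B$ factor to keep the map linear), so that $\ket\psi$ corresponds to an operator $\Psi$ whose singular values are exactly the Schmidt coefficients $\mu_i$. Under this identification the dual seminorm $\norm{\ket x}_\f^\circ=\sup_{\ket v\in\f}\cbraket{x|v}$ becomes the operator norm $\norm{X}_\infty$ of the corresponding operator $X$, since the supremum of $\cbraket{x|v}$ over unit product vectors is the largest singular value. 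Dualizing, $\norm{\ket\psi}_\f=\sup\{\abs{\Tr(\Psi^\dagger X)} : \norm{X}_\infty\le 1\}=\norm{\Psi}_1=\sum_i\mu_i$, the last step being the standard trace-norm/operator-norm duality on a separable Hilbert space (both sides possibly $+\infty$). This extends the finite-dimensional identity of~\cite{rudolph_2001} and yields $\Rl_\S(\psi)=\left(\sum_i\mu_i\right)^2$, covering the divergent case automatically.

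For (b) I may assume $S\coloneqq\sum_i\mu_i<\infty$, the claim being vacuous otherwise. Here I would produce an explicit standard-robustness decomposition adapting Vidal--Tarrach~\cite{vidal_1999}: set $\sigma_1\coloneqq\frac{1}{S^2-1}\sum_{i\ne j}\mu_i\mu_j\,\proj{ij}$, a countable convex combination of the product states $\proj{ij}$ (with $\ket{ij}=\ket{i}_A\otimes\ket{j}_B$), hence a free state in $\S$ since the partial sums converge in trace norm ($\sum_{i\ne j}\mu_i\mu_j\le S^2$). The goal is to show $\sigma_2\coloneqq\frac{1}{S^2}\bigl(\psi+(S^2-1)\sigma_1\bigr)$ is also separable; a direct computation gives $\psi+(S^2-1)\sigma_1=\int\proj{\phi_{\vec\theta}}\,d\vec\theta$, the phase-average of the product vectors $\ket{\phi_{\vec\theta}}=\bigl(\sum_i\sqrt{\mu_i}e^{i\theta_i}\ket{i}\bigr)_A\otimes\bigl(\sum_j\sqrt{\mu_j}e^{-i\theta_j}\ket{j}\bigr)_B$, where the conjugate phases on the two factors are exactly what annihilate the off-diagonal (entangled) cross terms and leave only $\psi$ plus the separable diagonal remainder. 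Granting this, $\sigma_2\in\S$, so $\tfrac{\psi+(S^2-1)\sigma_1}{S^2}\in\S$ exhibits $\lambda=S^2-1$ as feasible for $\Rs_\S(\psi)$, giving $\Rs_\S(\psi)\le S^2$.

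The main obstacle is making the phase-average rigorous in infinite dimensions, as it is an integral over the infinite torus $(S^1)^{\NN}$. I would sidestep the measure-theoretic issue by truncating to the first $N+1$ Schmidt levels — where the average is over the finite torus $[0,2\pi)^{N+1}$ and evaluates to $\proj{\psi_N}+\sum_{i\ne j\le N}\mu_i\mu_j\proj{ij}$ with $\ket{\psi_N}=\sum_{i\le N}\mu_i\ket{ii}$ — and then letting $N\to\infty$. Since $S<\infty$, these (normalized) separable truncations converge in trace norm to $\sigma_2$, and closedness of $\S$ forces $\sigma_2\in\S$. A secondary point needing care is confirming the trace-norm/operator-norm duality and the singular-value identification of step~(a) for a genuinely infinite-dimensional, merely Hilbert--Schmidt operator $\Psi$, which is exactly where the extension beyond~\cite{rudolph_2001} takes place. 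Combining (a) and (b) closes the sandwich and yields $\Rsl_\S(\psi)=\Rl_\S(\psi)=\left(\sum_i\mu_i\right)^2$.
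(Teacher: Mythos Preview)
Your argument is correct, but both halves differ from the paper's proof.

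For the lower bound, the paper does not invoke Proposition~\ref{thm:robustness_pure_norm}. Instead it writes down explicit rank-one witnesses $W_\xi=\proj{w_\xi}$ with $\ket{w_\xi}=\sum_i \xi^i\ket{ii}$, $\xi\in(0,1)$, checks directly that $\braket{v|W_\xi|v}\le 1$ for product $\ket v$ via Cauchy--Schwarz, and then sends $\xi\to 1^-$ using Abel's theorem to obtain $\Rl_\S(\psi)\ge(\sum_i\mu_i)^2$. Your route through the Hilbert--Schmidt isomorphism and the $\norm{\cdot}_1/\norm{\cdot}_\infty$ duality is more structural and explains \emph{why} the answer is the trace norm squared; the paper's route is more hands-on and avoids having to justify the duality when $\Psi$ is only Hilbert--Schmidt (though your truncation-by-finite-rank-projections argument for that point is fine).

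For the upper bound, the paper avoids both the infinite torus and the truncation-plus-closedness limit. It replaces the independent phases by the single-parameter family $\ket{\psi_\theta}=\mu^{-1/2}\sum_j e^{i\,2^j\theta}\mu_j^{1/2}\ket{j}$ and integrates $\psi_\theta\otimes\psi_\theta^\intercal$ over $\theta\in[0,2\pi]$; uniqueness of the binary representation of integers then kills exactly the right cross terms, producing the same separable $\omega$ you call $\sigma_2$ in one shot. Your truncation argument is the more direct generalisation of Vidal--Tarrach and leans on closedness of $\S$, while the paper's dyadic-phase trick gives an explicit one-dimensional separable decomposition valid directly in infinite dimensions.
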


\begin{proof}
Set $\mu\coloneqq \sum_{i=0}^\infty \mu_i$. Since the inequality $\R_\S(\psi)\leq \Rs_\S(\psi)$ holds by construction, we need to show that $\R_\S(\psi)\geq \mu^2$ and $\Rs_\S(\psi)\leq \mu^2$. 
To prove the latter inequality, define the family of vectors $\ket{w_\xi} \coloneqq \sum_{i=0}^{\infty} \xi^{i} \ket{ii}$, where $\xi \in (0,1)$. Note that $\{\xi^i\}_{i=0}^\infty \in \ell^1(\RR)$, and consider the positive bounded operator $W_\xi = \proj{w_\xi}$. The fact that $\norm{W_\xi}_\S^\circ \leq 1$ can be shown from a result by Shimony~\cite{shimony_1995}, but we will prove this explicitly for completeness. Since any product state can be written as $\ket{v} = \ket{v_A} \otimes \ket{v_B} \coloneqq (\sum_{k=0}^\infty a_k \ket{k}) \otimes (\sum_{l=0}^\infty b_l \ket{l})$ with $\norm{\ket{v_A}} = \norm{\ket{v_B}} = 1$ using the Schmidt bases of $\ket\psi$, for any $\ket{v} \in \f$ we have
\begin{equation}\begin{aligned}
   \< W_\xi, v \> &= \< v | W_\xi | v \>\\
    &= \sum_{i,j=0}^\infty \xi^{i} \xi^{j} a_i a_j^* b_i b_j^*\\
    &= \left(\sum_{i=0}^\infty \xi^{i} a_i b_i\right)\left(\sum_{i=0}^\infty \xi^{i} a_i b_i\right)^*\\
    &= \left|\< \Delta(W_\xi), \ketbra{v_B}{v_A^*} \>\right|^2\\
    &\leq \norm{\Delta(W_\xi)}^2_\infty \norm{\ket{v_A}}^2 \norm{\ket{v_B}}^2\\
    &\leq 1
\end{aligned}\end{equation}
where we used $\Delta(W_\xi)$ to denote the positive operator $\sum_{i=0}^\infty \xi^{i} \proj{i}$, and in the last line we used that $\xi^i \leq 1 \; \forall i$. This shows that $\norm{W_\xi}_\S^\circ \leq 1$ and hence we can take $W_\xi$ as a feasible solution for the robustness $\Rl_\S$, giving $\Rl_\S (\psi)  \geq \< W_\xi, \psi \>$ for any $\xi \in (0,1)$. This then gives
\begin{equation}\begin{aligned}
    \Rl_\S (\psi)  &\geq \lim_{\xi \to 1^-} \< W_\xi, \psi \>\\
    &= \left(\lim_{\xi \to 1^-} \textbf{}\sum_{i=0}^\infty \xi^i \mu_i\right)^2\\
    &= \left(\sum_{i=0}^\infty \mu_i\right)^2,
\end{aligned}\end{equation}
where in the third line we used Abel's theorem, which extends also to the case when the series of $\{\mu_i\}_{i=0}^\infty$ diverges to infinity. 
We conclude that $\Rl_\S(\psi)  \geq \mu^2$, which in particular implies that $\Rl_\S(\psi) = \infty$ when $\{\mu_i\}_{i=0}^\infty \notin \ell^1(\RR)$.

It remains to show that $\Rsl_\S(\psi) = \Rs_\S(\psi)\leq \mu^2$, where we shall now assume that $\mu < \infty$. To this end, for $\theta\in [0,2\pi]$, define
\begin{equation}
    \ket{\psi_\theta} \coloneqq \frac{1}{\mu^{1/2}} \sum_{j=0}^\infty e^{i\, 2^j \theta} \mu_j^{1/2} \ket{j}
\end{equation}
in either $\H_A$ or $\H_B$, where the basis employed here is the Schmidt basis of $\ket\psi$ (Eq.~\eqref{eq:Schmidt}). The state
\begin{equation}
    \omega\coloneqq \int_0^{2\pi} \frac{d\theta}{2\pi}\, \psi_\theta\otimes \psi_\theta^\intercal
\end{equation}
is separable by construction. For indices $j,k,\ell,m\in \NN$, we compute
\begin{equation}
% \begin{aligned}
%     \braket{jk|\omega|\ell m} &= \frac{1}{\mu} \int_0^{2\pi} \frac{d\theta}{2\pi} \sqrt{\mu_j \mu_k \mu_\ell \mu_m} p_j^{i\theta} p_k^{-i\theta} p_\ell^{-i\theta} p_m^{i\theta} \\
%     &= \frac{1}{\mu^2} \sqrt{\mu_j \mu_k \mu_\ell \mu_m} \int_0^{2\pi} \frac{d\theta}{2\pi} \left(\frac{p_j p_m}{p_k p_\ell}\right)^{i\theta} .
% \end{aligned}
\begin{aligned}
    &\braket{jk|\omega|\ell m} \\
    &\quad = \frac{1}{\mu^2} \int_0^{2\pi} \frac{d\theta}{2\pi} \sqrt{\mu_j \mu_k \mu_\ell \mu_m}\, e^{i\, 2^j\theta} e^{-i\, 2^k\theta} e^{-i\, 2^\ell\theta} e^{i\, 2^m\theta} \\
    &\quad= \frac{1}{\mu^2} \sqrt{\mu_j \mu_k \mu_\ell \mu_m} \int_0^{2\pi} \frac{d\theta}{2\pi} \,\exp\left( i \left[2^j + 2^m - 2^k - 2^\ell\right] \theta\right) .
\end{aligned}
\end{equation}
Since the binary representation of any integer is unique, $2^j + 2^m \neq 2^k + 2^\ell$ unless either $j=k$ and $\ell=m$ or $j=\ell$ and $k=m$. Therefore,
\begin{equation}
    \braket{jk|\omega|\ell m} = \frac{\mu_j \mu_m}{\mu^2} \left( \delta_{j,k} \delta_{\ell,m} + \delta_{j,\ell} \delta_{k,m} - \delta_{j,k,\ell,m} \right) ,
\end{equation}
where $\delta_{j,k,\ell,m}=1$ if $j=k=\ell=m$, and $\delta_{j,k,\ell,m}=0$ otherwise.
The last state we need to define is 
\begin{equation}
\sigma\coloneqq \frac{1}{\mu^2-1}\sum_{\substack{j,m=0,1,\ldots \\ j\neq m}} \mu_j \mu_m \ketbra{jm}{jm} ,
\end{equation}
which is also separable by construction. Now, consider that
\begin{equation} \begin{aligned}
\psi + \left(\mu^2\!-\!1\right) \sigma &= \sum_{j,m} \mu_j\mu_m \ketbra{jj}{mm} + \sum_{j\neq m} \mu_j \mu_m \ketbra{jm}{jm} \\
&= \sum_{j,k,l,m} \mu_j \mu_m \left( \delta_{j,k} \delta_{\ell,m} + \delta_{j,\ell} \delta_{k,m} - \delta_{j,k,\ell,m} \right)\\
&\hspace{4cm}\times\ketbra{jk}{lm} \\
&= \mu^2 \omega .
\end{aligned} \end{equation}
Therefore, thanks to \eqref{eq:robs_primal_nonlsc} we immediately see that $\Rs_\S(\psi)\leq \mu^2$, concluding the proof.
\end{proof}

We note that $\ell^1$ is dense in $\ell^2$; more generally, it can be shown that the set of states with $\Rl_\S(\rho) < \infty$ is dense in $\D(\H)$ with the trace norm topology.

\subsubsection{Comparison with standard robustness}

\txb{In finite-dimensional quantum mechanics, the two robustnesses $\R_\S$ and $\Rs_\S$ appear naturally in the quantification of the optimal rates for entanglement distillation and dilution (respectively) under asymptotically non-entangling operations~\cite{brandao_2008-1, brandao_2010, brandao_2010-1}. Since these two measures reflect complementary tasks, comparing them directly can yield useful insights into some peculiar features of infinite-dimensional entanglement theory. As we have seen, they turn out to coincide for all pure states. This} prompts the natural question of whether there exist states that satisfy $\R_\S(\rho)<\Rs_\S(\rho)$. We will now construct an extreme example of this behavior. To this end, let us start by recalling the definition of negativity, an entanglement measure given by~\cite{vidal_2002}
\begin{equation}
    N(\rho) \coloneqq  \frac12 \left( \left\|\rho^\Gamma \right\|_1 - 1\right) ,
\end{equation}
where $\Gamma$ denotes partial transposition with respect to either of the subsystems~\cite{peres_1996}. For a pure state $\ket{\psi}$ as in \eqref{eq:Schmidt}, one finds that $2N(\psi)+1=\R_\S(\psi)=\Rs_\S(\psi)$ coincides with the two robustnesses, and its value is given by Prop.~\ref{thm:rob_ent_pure}. In general, using the fact that separable states have a positive partial transpose, it is not difficult to show that
\begin{equation}
    \Rs_\S(\rho) \geq N(\rho)+1
\end{equation}
for all states $\rho\in \D(\H_A\otimes \H_B)$~\cite{vidal_2002}.

Our example is based on a famous operator first constructed and studied by Hilbert~\cite{WeylPhD}. Let $\ell^2(\CC)$ be the Hilbert space of square-summable complex sequences $(a_n)_{n\in \NN_+}$ (with index starting from $1$), and denote by $\{\ket{n}\}_{n\in \NN_+}$ its canonical basis. Construct the Hilbert operator $H_{-1}$ on $\ell^2(\CC)$ with matrix representation
\begin{equation}
(H_{-1})_{n,m} \coloneqq \left\{ \begin{array}{ll} 0 & \text{if $n=m$,} \\ \frac{1}{n-m} & \text{if $n\neq m$.} \end{array} \right.
\label{eq:Hilbert_operator}
\end{equation}
It has been shown by Hilbert himself in his lectures~\cite{WeylPhD} that $H_{-1}$ is bounded. Schur~\cite{Schur1911} later proved that in fact
\begin{equation}
\left\| H_{-1}\right\|_\infty = \pi .
\label{Hilbert-Schur}
\end{equation}
For an excellent account of this and related topics, see the book by Hardy, Littlewood and P\'{o}lya \cite[Sec.~8.12]{INEQUALITIES}, that by Steele~\cite[Ch.~10]{MASTERCLASS}, and the set of lecture notes by Jameson~\cite{Jameson-Hilbert}.

\begin{pro}\label{prop:rob_ent_infinite}
There exist a density operator $\omega$ over $\ell^2(\CC)$ such that the corresponding maximally correlated state
\begin{equation}
    \rho[\omega] \coloneqq \sum_{n,m=1}^\infty \omega_{n,m} \ketbra{n,n}{m,m}
    \label{nasty}
\end{equation}
on $\ell^2(\CC)^{\otimes 2}$ satisfies that
\begin{equation}
    \R_\S(\rho) \leq 2\, ,\qquad \Rs_\S(\rho) = N(\rho) = +\infty\, . 
\end{equation}
\end{pro}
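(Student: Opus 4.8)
The plan is to build $\omega$ directly out of the Hilbert operator, exploiting the fact that $H_{-1}$ is \emph{anti}-Hermitian together with the sharp norm bound \eqref{Hilbert-Schur}. Fix a sequence $c_n>0$ with $\sum_n c_n^2=1$, set $C\coloneqq\mathrm{diag}(c_n)$, and define
\[ \omega\coloneqq C\left(\id+\tfrac{i}{\pi}H_{-1}\right)C. \]
Since $H_{-1}^\dagger=-H_{-1}$, the operator $iH_{-1}$ is self-adjoint, and Schur's identity $\|H_{-1}\|_\infty=\pi$ forces its spectrum into $[-\pi,\pi]$; hence $\id\pm\tfrac{i}{\pi}H_{-1}\cgeq 0$, so both $\omega$ and its entrywise complex conjugate $\overline{\omega}=C(\id-\tfrac{i}{\pi}H_{-1})C$ are positive. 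They are trace class because $C$ is Hilbert--Schmidt, and $\Tr\omega=\sum_n c_n^2\,(\id+\tfrac{i}{\pi}H_{-1})_{nn}=\sum_n c_n^2=1$, using that $H_{-1}$ has vanishing diagonal. The off-diagonal entries $\omega_{n,m}=\tfrac{i}{\pi}\,c_nc_m/(n-m)$ are purely imaginary, which is the structural feature driving the whole argument.

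For the upper bound I would use the complex-conjugate state as noise. Because $\rho[\cdot]$ is linear and the off-diagonals of $\omega$ flip sign under conjugation while the (real) diagonal is unchanged,
\[ \tfrac12\big(\rho[\omega]+\rho[\overline{\omega}]\big)=\rho\!\left[\tfrac{\omega+\overline\omega}{2}\right]=\rho[C^2]=\sum_{n}c_n^2\,\ketbra{n,n}{n,n}, \]
which is diagonal and therefore separable. Taking $\tau\coloneqq\rho[\overline\omega]$, which is a legitimate state since $\overline\omega\cgeq0$ and $\Tr\overline\omega=1$, the choice $\lambda=1$ in \eqref{eq:rob_primal_nonlsc} shows $\R_\S(\rho[\omega])\le 2$. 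It is precisely the boundedness of $H_{-1}$ that guarantees $\overline\omega$ (and $\omega$) are bona fide density operators, so this is the step where \eqref{Hilbert-Schur} is essential.

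For the lower bound I would compute the negativity from the block structure of the partial transpose. Writing $\Gamma$ for transposition on the second factor, $\rho[\omega]^\Gamma=\sum_{n,m}\omega_{n,m}\ketbra{n,m}{m,n}$ decomposes orthogonally into the positive diagonal part on $\mathrm{span}\{\ket{n,n}\}$ and, for each pair $n<m$, a $2\times2$ block on $\mathrm{span}\{\ket{n,m},\ket{m,n}\}$ with eigenvalues $\pm|\omega_{n,m}|$. Hence $N(\rho[\omega])=\sum_{n<m}|\omega_{n,m}|=\tfrac1\pi\sum_{n<m}c_nc_m/(m-n)$. The remaining task is to choose $c_n$ so that this sum diverges while $\sum_n c_n^2=1$; the natural candidate is $c_n\propto 1/(\sqrt{n}\log n)$, for which $\sum_n c_n^2<\infty$, yet for each $n$ one has $\sum_{m>n}c_m/(m-n)\gtrsim \sum_{k\le n}c_{n+k}/k\gtrsim n^{-1/2}$, giving $N(\rho[\omega])\gtrsim\sum_n 1/(n\log n)=\infty$. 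Then $N(\rho[\omega])=+\infty$ and, by $\Rs_\S\ge N+1$, also $\Rs_\S(\rho[\omega])=+\infty$.

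The main obstacle is this last quantitative step: one must exhibit a single sequence that is square-summable (so that $\omega$ is a genuine state) yet makes the weighted sum $\sum_{n<m}c_nc_m/(m-n)$ diverge --- a borderline condition, since the slow $1/(n-m)$ decay of the Hilbert operator lies exactly at the threshold between convergence and divergence. A secondary point requiring care is the infinite-dimensional handling of the partial transpose and negativity: since $\rho[\omega]^\Gamma$ fails to be trace class, the divergent block decomposition above should be read as the statement $N=+\infty$, justified by the mutual orthogonality of the $2\times2$ blocks.
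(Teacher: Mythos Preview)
Your proposal is correct and follows essentially the same construction as the paper: the same Hilbert-operator ansatz $\omega=C(\id+\tfrac{i}{\pi}H_{-1})C$, the same conjugate-state trick for the bound $\R_\S\le 2$, and the same choice of weights $c_n\propto 1/(\sqrt{n}\log n)$ to make the negativity diverge (the paper cites Hardy--Littlewood--P\'olya for this divergence rather than sketching it). One trivial fix: with indices starting at $n=1$ your $c_1$ is undefined, so use $c_n\propto 1/(\sqrt{n}\log(n+1))$ as the paper does.
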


\begin{proof}
Note that $H_{-1}^\intercal = - H_{-1}$, so that $i H_{-1}$ is self-adjoint. We proceed by setting
\begin{align}
    \omega_\pm &\coloneqq \frac{1}{c}\, D\left(\id \pm \frac{i}{\pi} H_{-1} \right) D , \label{omega} \\
    D &\coloneqq \sum_{n=1}^\infty d_n\, \ketbra{n}{n} , \label{D} \\
    d_n &\coloneqq \frac{1}{\sqrt{n} \ln(n+1)} , \\
    c &\coloneqq \sum_{n=1}^\infty d_n^2 < \infty .
\end{align}
We now verify that the state $\rho[\omega_+]$ constructed as in \eqref{nasty} with $\omega_+$ given by \eqref{omega} satisfies all desired properties. First, thanks to \eqref{Hilbert-Schur} we have that $\omega_\pm \geq 0$. By definition of $c$, we also see that $\Tr \omega_\pm =1$, so that $\omega_\pm$ are both valid density operator on $\H$. The associated states $\rho_\pm \coloneqq \rho[\omega_\pm]$ then satisfy that
\begin{equation} \begin{aligned}
\Rs_\S(\rho_\pm) &\geq N(\rho_\pm)+1 \\
&= \frac12 \left( \left\|\rho_\pm^\Gamma\right\|_1 + 1 \right) \\
&= \frac12 \left( \left\|\omega_\pm\right\|_{\ell^1} + 1 \right) ,
\end{aligned} \end{equation}
where $\|X\|_{\ell^1} \coloneqq \sum_{n,m=1}^\infty |X_{n,m}|$. Note that
\begin{align*}
\left\|\omega_\pm\right\|_{\ell^1}  &= \frac{1}{c} \sum_{n,m=1}^\infty d_n d_m \left( \delta_{n,m} + \frac1\pi \left| (H_{-1})_{n,m} \right| \right)  \\
&= \frac{1}{\pi c} \sum_{n,m=1}^\infty d_n d_m \left| (H_{-1})_{n,m} \right| +1\\
&= +\infty\, ,
\end{align*}
where the last equality is proved in \cite[Sec.~8.12, p.~214]{INEQUALITIES}. Therefore,
\begin{equation}
\Rs_\S (\rho_\pm) = N(\rho_\pm) = +\infty\, .
\end{equation}
It remains to prove that $\R_\S(\rho_\pm)\leq 2$. This follows immediately by observing that
\begin{equation}
\rho_\pm \leq \rho_+ + \rho_- = \frac{2}{c} \sum_{n=1}^\infty d_n^2\, \ketbra{nn}{nn}\, ,
\end{equation}
and the latter operator is clearly separable. This concludes the proof.
\end{proof}

The above example shows that our measure $\Rl_\S$ and the standard robustness $\Rsl_\S$ behave in general very differently, with the latter sometimes assigning an infinite value to states that have a finite amount of entanglement according to the former. \txb{From an operational standpoint, this could hint at the fact that there exist infinite-dimensional states with finite distillable entanglement but infinite entanglement cost. This intuition relies on the already mentioned fact that the generalized robustness is intimately connected with a type of entanglement distillation, while the standard robustness appears to play an analogous role for entanglement dilution~\cite{brandao_2008-1, brandao_2010, brandao_2010-1}. To make this connection rigorous, however, one would need to generalize the argument in~\cite{brandao_2008-1, brandao_2010, brandao_2010-1}, which as stated works in finite-dimensional systems only, to infinite-dimensional ones. We leave this for future works.}

%%%%%%%%%%%%%%%%%%%%%%%%%%%%%%%%%%%%%%%%%%%%%%%%%%%%%%%%%%%%%%%%%%%%%%%%%%%%%%%%%%%%%%%%%%%%%%%%%%%%%%%%%
%%%%%%%%%%%%%%%%%%%%%%%%%%%%%%%%%%%%%%%%%%%%%%%%%%%%%%%%%%%%%%%%%%%%%%%%%%%%%%%%%%%%%%%%%%%%%%%%%%%%%%%%%

%%%%%%%%%%%%%%%%%%%%%%%%%%%%%%%%%%%%%%%%%%%%%%%%%%%%%%%%%%%%%%%%%%%%%%%%%%%%%%%%%%%%%%%%%%%%

\subsection{Non-Gaussianity}\label{sec:nongaussianity}
Non-Gaussianity is another type of quantum resource characterizing nonclassical features of quantum states defined in continuous-variable systems. 
In particular, it is known that non-Gaussian states are required for a number of quantum information processing tasks~\cite{giedke_2002,eisert_2002-1,niset_2009,lloyd_1999,lami_2018}, motivating its rigorous quantification in a resource-theoretic formulation that considers the set of Gaussian states free~\cite{Genoni2008quantifying,genoni_2010,Marian2013relative}.
On the other hand, another operational standpoint where free resources should be what can be easily accessible in experiments leads to a resource theory of genuine non-Gaussianity~\cite{takagi_2018,albarelli_2018}, which includes classical postprocessing and feedforwarded operations in its free operations and entails the closed convex hull of the set of Gaussian states as its free states. (See Ref.~\cite{Yamasaki2020GKP} for a recent application of this framework to non-Gaussian state transformations.)
Genuine non-Gaussian states provide advantages in protocols such as continuous-variable universal quantum computation~\cite{Mari2012positive, takagi_2018}, secure quantum communication~\cite{Lee2019secure}, and channel discrimination~\cite{takagi_2019-2}. 
The robustness measure established here can then be used to provide a quantitative account of the operational power of genuine non-Gaussianity.
Here, we provide formulas for the robustness of genuine non-Gaussianity for Fock states and single-photon-added/subtracted states. 

Let $\ket{\alpha,\xi}:=\mathcal{D}_\alpha S(\xi)\ket{0}$
with $\alpha=|\alpha|e^{i\phi}$, $\xi=|\xi|e^{i\theta}$ denote a single-mode Gaussian state where $S(\xi)= \exp\left[ \frac{1}{2}\left(\xi(a^\dag)^2 - \xi^*a^2 \right) \right]$ is a (rotated) squeezing operator, which reduces to \eqref{squeezing} when $\phi=0$ and $|\xi|=r$. 
We take the set of single-mode states without genuine non-Gaussianity as our set of free states, i.e., $\F=\G:=\cl \conv \lset \proj{\alpha,\xi} \sbar \alpha,\xi\in\mathbb{C} \rset$, as in \cite{takagi_2018}.

Although it is not clear whether the two expressions of robustness, $\Rl_\G$ and $\R_\G$, coincide for the theory of genuine non-Gaussianity in general, we show in the following that they indeed coincide for the Fock states, and the inequality \eqref{rob_pure_lower_bound} is achieved.  

\begin{pro}
\bal
 \Rl_\G(\proj{n}) &= \R_\G(\proj{n}) = \frac{1}{\sup_{\alpha,\xi\in\mathbb{C}}|\braket{n|\alpha,\xi}|^2} \\
 &= \left[\sup_{|\alpha|,r\geq 0, \theta\in[0,\pi)} P_n(|\alpha|,r,\theta)\right]^{-1}
 \label{eq:robustness nonGaussianity fock}
\eal
where 
\bal
 P_n(|\alpha|,r,\theta) = \frac{\left(\frac{\tanh{r}}{2}\right)^n}{n!\cosh{r}}\exp\left[-|\alpha|^2(1+\cos\theta\tanh{r})\right]\\
 \times \left|H_n\left[|\alpha|\left(\cosh{r}+e^{i\theta}\sinh{r}\right)\left(e^{i\theta}\sinh(2r)\right)^{-1/2}\right] \right|^2
 \label{eq:squeezed coherent photon statistics}
\eal
with $H_n(x)$ being the (physicists') Hermite polynomials satisfying the recursion relation $H_{n+1}(x)=2xH_n(x)-2nH_{n-1}(x)$. 

As a result, the min-relative entropy, relative entropy, max-relative entropy measures all collapse to $\log\Rl_\G(\proj{n})$ (see Cor.~\ref{cor:relative_entropy}).
\end{pro}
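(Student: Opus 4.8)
The plan is to sandwich $\Rl_\G(\proj{n})$ and $\R_\G(\proj{n})$ between a lower bound obtained from Lemma~\ref{lem:robustness_lowerbound} and a matching upper bound obtained from Lemma~\ref{lem:robustness_pure}, exploiting the phase-rotation symmetry of the Fock states to close the gap. For the lower bound I would take $\omega=\proj{n}$ in Lemma~\ref{lem:robustness_lowerbound}, which gives $\Rl_\G(\proj{n})\geq 1/\sup_{\sigma\in\G}\braket{n|\sigma|n}$. Since $\braket{n|\cdot|n}$ is a continuous linear functional and $\G$ is the closed convex hull of the pure Gaussian states $\{\proj{\alpha,\xi}\}$, its supremum over $\G$ is attained on the extreme points, so that $\sup_{\sigma\in\G}\braket{n|\sigma|n}=\sup_{\alpha,\xi}|\braket{n|\alpha,\xi}|^2$ and the lower bound already takes the form claimed in the statement.

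For the matching upper bound I would fix a (near-)maximizing pair $(\alpha^\ast,\xi^\ast)$ and phase-randomize the corresponding pure Gaussian state, setting $\sigma\coloneqq\frac{1}{2\pi}\int_0^{2\pi}R(\theta)\proj{\alpha^\ast,\xi^\ast}R(\theta)^\dagger\,d\theta$ with $R(\theta)=e^{ia^\dagger a\theta}$. Because $R(\theta)$ is a passive Gaussian unitary, each integrand is a pure Gaussian state and $\sigma\in\G$ by the integral representation of the free set. As $R(\theta)$ merely multiplies the $k$-th Fock amplitude by $e^{ik\theta}$, the phase average annihilates all off-diagonal matrix elements, leaving $\sigma=\sum_k\lambda_k\proj{k}$ diagonal in the Fock basis with $\lambda_k=|\braket{k|\alpha^\ast,\xi^\ast}|^2$. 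In particular $\lambda_n>0$, so $\ket{n}\in\ran(\sigma^{1/2})$ and Lemma~\ref{lem:robustness_pure} yields $\R_\G(\proj{n})\leq\braket{n|\sigma^{-1}|n}=\lambda_n^{-1}=|\braket{n|\alpha^\ast,\xi^\ast}|^{-2}$. Optimizing $(\alpha^\ast,\xi^\ast)$ — or passing to the infimum along a maximizing sequence if the supremum is not attained — gives $\R_\G(\proj{n})\leq[\sup_{\alpha,\xi}|\braket{n|\alpha,\xi}|^2]^{-1}$.

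Combining the two one-sided bounds with the always-valid inequality $\Rl_\G\leq\R_\G$ forces all four quantities to agree, establishing the first line of the statement and in particular $\Rl_\G(\proj{n})=\R_\G(\proj{n})$. To obtain the explicit expression I would substitute the standard amplitude of a Fock state in a squeezed coherent state — a Gaussian prefactor times a Hermite polynomial $H_n$ of the appropriate complex argument — use the phase-rotation invariance of $\ket{n}$ to gauge away the overall displacement phase and reduce to the three parameters $|\alpha|,r,\theta$, and take the modulus squared to recover $P_n(|\alpha|,r,\theta)$. Finally, the stated collapse of the min-, Umegaki, and max-relative entropies to $\log\Rl_\G(\proj{n})$ follows directly from Cor.~\ref{cor:relative_entropy}, whose hypothesis is precisely the identity $\R_\G(\proj{n})=1/\sup_{\alpha,\xi}|\braket{n|\alpha,\xi}|^2$ just proved.

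The main obstacle is computational rather than conceptual: deriving the closed Hermite-polynomial form of $P_n$ while carefully tracking the conventions of the displacement and squeezing operators, so that the relative phase genuinely reduces to $\theta\in[0,\pi)$ with a single overall phase eliminated. A secondary technical point is the possible non-attainment of the supremum over $(\alpha,\xi)$; I would avoid invoking compactness of the parameter space by running the upper-bound construction along a maximizing sequence and passing to the limit in $\lambda_n^{-1}$, which keeps the argument fully rigorous even when no optimal Gaussian state exists.
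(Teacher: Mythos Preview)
Your proposal is correct and follows essentially the same route as the paper: both arguments sandwich $\Rl_\G(\proj{n})\leq \R_\G(\proj{n})$ between the lower bound from Lemma~\ref{lem:robustness_lowerbound} with $\omega=\proj{n}$ and the upper bound from Lemma~\ref{lem:robustness_pure} applied to the phase-randomized state $\sigma_{\alpha,\xi}=\frac{1}{2\pi}\int_0^{2\pi}R(\theta)\proj{\alpha,\xi}R(\theta)^\dagger\,d\theta$, using that $R(\theta)$ is Gaussian and that $\sigma_{\alpha,\xi}$ is Fock-diagonal with $\braket{n|\sigma_{\alpha,\xi}^{-1}|n}=|\braket{n|\alpha,\xi}|^{-2}$. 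The paper likewise handles possible non-attainment via a maximizing sequence, and reduces the parameter range by invoking the Wigner-function symmetry of Fock states under phase rotation (your ``gauge away the displacement phase'' argument) together with the reflection symmetry $P_n(|\alpha|,r,\theta)=P_n(|\alpha|,r,2\pi-\theta)$ to restrict $\theta$ to $[0,\pi)$.
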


\begin{proof}
Take an arbitrary single-mode Gaussian state $\ket{\alpha,\xi}=\sum_{n=0}^\infty c_n \ket{n}$ where $c_n:=\braket{n|\alpha,\xi}$ is a coefficient for the number basis representation.  
Application of a random phase shift to this state gives the phase randomized state $\sigma_{\alpha,\xi}$ written as 
\bal
 \sigma_{\alpha,\xi} &= \frac{1}{2\pi}\int_0^{2\pi} d\theta e^{ia^\dagger a \theta} \dm{\alpha,\xi} e^{-ia^\dagger a \theta}\\
 &= \sum_{n=0}^\infty |c_n|^2 \dm{n},
\eal
which in particular implies
\bal
 \frac{1}{|\braket{n|\alpha,\xi}|^2} = \bra{n}\sigma_{\alpha,\xi}^{-1}\ket{n},\ \forall \alpha,\xi\in\mathbb{C}.
 \label{eq:overlap phase randomized}
\eal
Since $e^{ia^\dagger a \theta}$ is a Gaussian unitary, which maps Gaussian states to Gaussian states, we always have $\sigma_{\alpha,\xi}\in\F$, noting the equivalence between \eqref{eq:free states closure} and \eqref{eq:free states integral}.
Thus, considering a sequence of Gaussian states $\{\ket{\alpha_l,\xi_l}\}_l$ such that $\lim_{l\to\infty}|\braket{n|\alpha_l,\xi_l}|^2=\sup_{\sigma\in\F} \braket{n|\sigma|n}$, we have
\bal
 \lim_{l\to\infty}\frac{1}{|\braket{n|\alpha_l,\xi_l}|^2} \leq \Rl_\G(\proj{n})
 &\leq \R_\G(\proj{n})\\
 &\leq \lim_{l\to\infty}\bra{n}\sigma_{\alpha_l,\xi_l}^{-1}\ket{n}.
\label{eq:nonGaussianity fock inequalities}
\eal
where the first inequality is due to \eqref{rob_pure_lower_bound} and the third inequality is due to \eqref{rob_pure_alt}.
Then, the first two equalities in \eqref{eq:robustness nonGaussianity fock} follow by combining \eqref{eq:overlap phase randomized} and \eqref{eq:nonGaussianity fock inequalities}. 
The third equality in \eqref{eq:robustness nonGaussianity fock} is obtained by an explicit expression for the photon statistics of $\ket{\alpha,\xi}$ as a function of $\alpha=|\alpha|e^{i\phi}$ and $\xi=re^{i\theta}$~\cite{gerry2005introductory}, as well as by observing that the fidelity between two single-mode pure states can be computed by the overlap of their Wigner functions~\cite{takagi_2018} and that the Wigner functions of number states are symmetric under phase rotations, allowing us to fix $\phi=0$ and just optimize over $\theta$.
Moreover, $\theta$ only needs to be optimized over $[0,\pi)$ because of the symmetry $P_n(|\alpha|,r,\theta)=P_n(|\alpha|,r,2\pi-\theta)$, which can also be seen by the symmetry of the Wigner functions. 
\end{proof}
Although it appears difficult to further simplify the expression in \eqref{eq:robustness nonGaussianity fock}, numerical investigations suggest that the supremum is achieved at $\theta=0$, as well as $|\alpha|$ and $r$ satisfying $\bra{\alpha,\xi}a^\dagger a\ket{\alpha,\xi}|=|\alpha|^2+\sinh^2(r)=n$; this has been checked for $n\leq 10$, allowing us to obtain explicit values in this range (see Fig.~\ref{fig:robustness_NC_NG}).
\begin{figure}[t]
\centering
\includegraphics[width=0.4\textwidth]{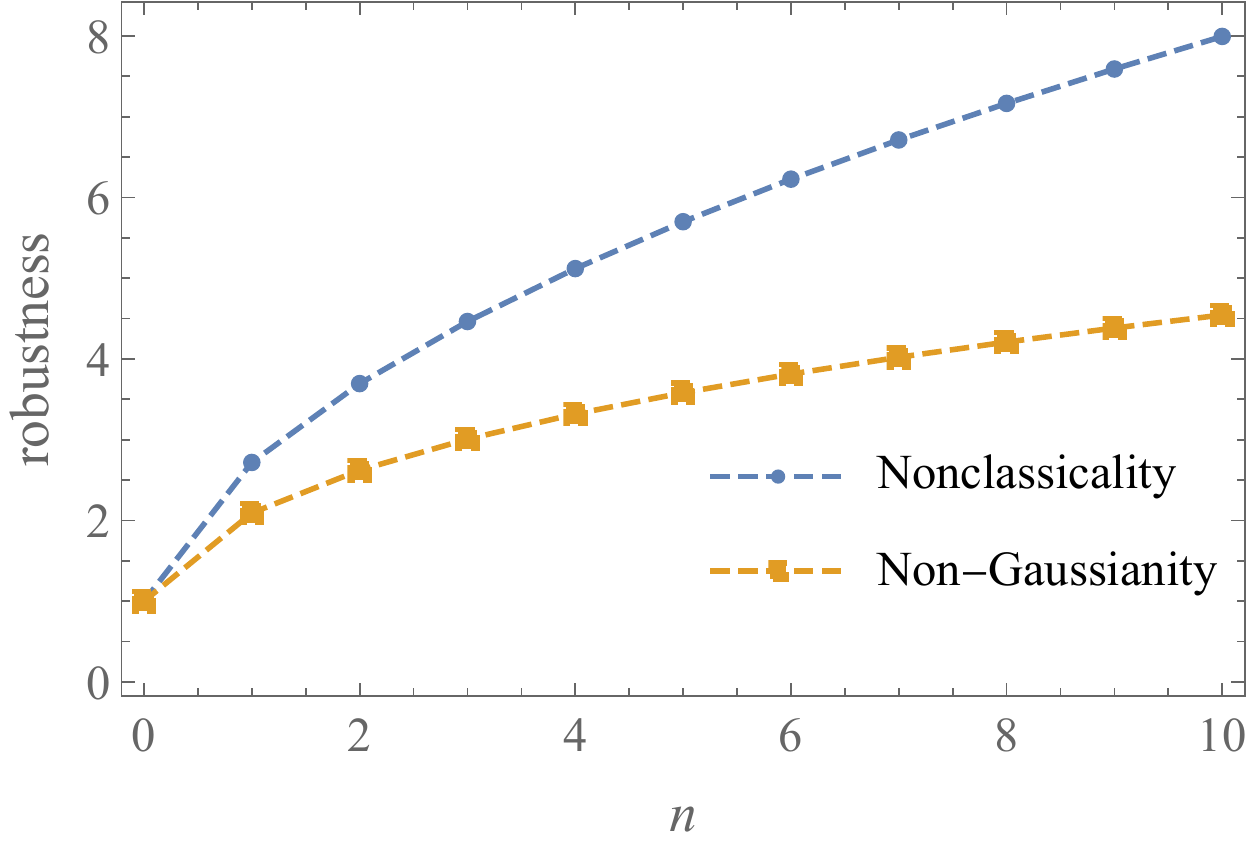}
\caption{Robustness of nonclasicality [Eq.~\eqref{eq:robustness Fock nonclassicality}] and non-Gaussianity [Eq.~\eqref{eq:robustness nonGaussianity fock}] for the Fock states $\ket{n}$.}
\label{fig:robustness_NC_NG}
\end{figure}

For the case of single-photon state, we obtain the following analytical solution.
\begin{pro}
 \bal
 \Rl_\G(\proj{1}) = \R_\G(\proj{1}) = \frac{4e}{3\sqrt{3}}
\eal
\end{pro}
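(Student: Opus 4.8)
The plan is to invoke the preceding Proposition, which already reduces the computation to the single optimization $\Rl_\G(\proj{1}) = \left[\sup_{|\alpha|,r,\theta} P_1(|\alpha|,r,\theta)\right]^{-1}$, together with the equality $\Rl_\G = \R_\G$ on Fock states established there. Since the first Hermite polynomial is simply $H_1(x) = 2x$, substituting $n=1$ into the formula for $P_n$ makes the Hermite factor completely explicit, so that the whole problem collapses to an elementary (if delicate) maximization of a function of three real variables.

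The first step is to reduce $P_1$ to a tractable form. I would introduce the variables $x \coloneqq |\alpha|^2 \geq 0$, $t \coloneqq \tanh r \in [0,1)$, and $u \coloneqq \cos\theta \in (-1,1]$. The key algebraic identity is $\left|\cosh r + e^{i\theta}\sinh r\right|^2 = \cosh(2r) + \sinh(2r)\cos\theta$, which combined with $\cosh(2r) = (1+t^2)/(1-t^2)$, $\sinh(2r) = 2t/(1-t^2)$, and $\cosh^{-3} r = (1-t^2)^{3/2}$ should make the $(1-t^2)$ factors cancel and leave
\begin{equation*}
 P_1 = x\,\sqrt{1-t^2}\,\left(1+t^2+2tu\right)\,e^{-x(1+ut)}.
\end{equation*}

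The remaining task is to maximize this sequentially. Optimizing over $x$ first, the map $x\mapsto x\,e^{-x(1+ut)}$ peaks at $x = (1+ut)^{-1}$ with value $(e(1+ut))^{-1}$, contributing the factor $1/e$ and leaving $\frac{1}{e}\sqrt{1-t^2}\,\frac{1+t^2+2tu}{1+ut}$. Next I would maximize over $u$: a short computation gives $f'(u) = t(1-t^2)/(1+ut)^2 \geq 0$ for $f(u) \coloneqq (1+t^2+2tu)/(1+ut)$, so $f$ is nondecreasing and the supremum over $u\in(-1,1]$ is attained at $u=1$ (that is, $\theta=0$), where $f(1) = (1+t)^2/(1+t) = 1+t$. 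Finally, maximizing $\frac{1}{e}(1+t)^{3/2}(1-t)^{1/2}$ over $t\in[0,1)$ produces a critical point at $t=1/2$ (equivalently $r = \operatorname{arctanh}(1/2)$) with value $\frac{1}{e}\cdot\frac{3\sqrt 3}{4}$. Taking the reciprocal yields $\Rl_\G(\proj{1}) = \R_\G(\proj{1}) = \frac{4e}{3\sqrt 3}$.

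Since everything is routine calculus, there is no deep obstacle; the only place demanding care is the algebraic simplification of $P_1$ — in particular confirming the cancellation of the $(1-t^2)$ factors and the collapse $f(1)=1+t$ — because an error there would propagate through all three optimizations. A secondary point worth checking is that each supremum is genuinely attained (at $u=1$, $t=1/2$, and the finite optimal $x$), so that the stated value is an exact equality rather than a mere bound.
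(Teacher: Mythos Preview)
Your proposal is correct and follows essentially the same route as the paper: both proofs optimize $P_1$ sequentially over $|\alpha|^2$, then $\cos\theta$ (via the nonnegative derivative), then the squeezing parameter. The only cosmetic difference is your reparametrization $t=\tanh r$, which yields the optimum at $t=1/2$ where the paper finds the equivalent $r=\ln\sqrt{3}$.
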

\begin{proof}
For $n=1$, \eqref{eq:squeezed coherent photon statistics} becomes
\bal
 P_1(|\alpha|,r,\theta) = \frac{|\alpha|^2\left|\cosh(r)+e^{i\theta}\sinh(r)\right|^2}{\cosh^3(r)}\\
 \times\exp\left[-|\alpha|^2(1+\cos\theta\tanh(r))\right] 
\eal
By differentiating this with respect to $|\alpha|^2$, we get that $P_1(|\alpha|,r,\theta)$ achieves its maximum at $|\tilde\alpha|^2=(1+\cos\theta\tanh(r))^{-1}$ for any $r\geq 0$, $\theta\in[0,\pi)$ with
\bal
 P_1(|\tilde\alpha|,r,\theta) &= \frac{\left|\cosh(r)+e^{i\theta}\sinh(r)\right|^2}{(1+\cos\theta\tanh(r))\cosh^3(r)}e^{-1}\\
 &= \frac{\cosh(2r)+\sinh(2r)\cos\theta}{(1+\cos\theta\tanh(r))\cosh^3(r)}e^{-1}.
\eal
Then, we get 
\bal
 \frac{\partial}{\partial \cos\theta} P_1(|\tilde\alpha|,r,\theta) = \frac{\tanh(r)e^{-1}}{(1+\cos\theta\tanh(r))^2\cosh^3(r)}\geq 0,
\eal
implying that the maximum is achieved at $\theta = 0$. 
We are left to maximize $P_1(|\tilde\alpha|,r,\theta=0)$ over $r\geq 0$, which is found to give the result with the choice $r=\ln\sqrt{3}$.
\end{proof}

We can further obtain the robustness for photon-added/subtracted states defined in \eqref{eq:photon added number} and \eqref{eq:photon subtracted number}.
\begin{pro}
For single-mode photon-added/subtracted squeezed vacuum states $\ket{\zeta_r}_\pm$, we have for any $r\geq 0$,
\bal
\Rl_\G({\zeta_r}_\pm) = \R_\G({\zeta_r}_\pm)= \frac{4e}{3\sqrt{3}}.
\eal
\end{pro}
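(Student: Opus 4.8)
The plan is to reduce the problem to the single-photon Fock state, whose robustness of genuine non-Gaussianity was already computed to be $\tfrac{4e}{3\sqrt 3}$ in the preceding proposition. Since we have already shown in the excerpt that $\ket{\zeta_r}_+=\ket{\zeta_r}_-$, it suffices to treat the photon-added case. The key structural observation I would establish first is that $\ket{\zeta_r}_\pm$ is nothing but the Fock state $\ket{1}$ dressed by a squeezing unitary, i.e.\ $\ket{\zeta_r}_\pm = S(r)\ket{1}$. Using the Bogoliubov relation $S(r)^\dagger a^\dagger S(r) = \cosh(r)\,a^\dagger + \sinh(r)\,a$ (which, for the convention $S(r)=\exp[\tfrac r2((a^\dagger)^2-a^2)]$, follows from $[a^\dagger,\tfrac12((a^\dagger)^2-a^2)]=a$) together with $a\ket{0}=0$, one computes
\[
a^\dagger \ket{\zeta_r} = S(r)\,S(r)^\dagger a^\dagger S(r)\ket{0} = S(r)\big(\cosh(r)\,a^\dagger+\sinh(r)\,a\big)\ket{0}=\cosh(r)\,S(r)\ket{1}.
\]
Dividing by the normalization $\cosh(r)$, exactly as in the definition of $\ket{\zeta_r}_+$, yields $\ket{\zeta_r}_+=S(r)\ket{1}$.

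Next I would invoke invariance of the robustness under Gaussian unitaries. Because $S(r)$ maps every pure Gaussian state $\ket{\alpha,\xi}$ to another pure Gaussian state, it maps the free set $\G = \cl\conv\{\proj{\alpha,\xi}\}$ bijectively onto itself. Hence both the channel $\rho\mapsto S(r)\rho S(r)^\dagger$ and its inverse $\rho\mapsto S(r)^\dagger\rho S(r)$ are free operations in the sense of Sec.~\ref{sec:resource_theories_intro}. Applying the monotonicity of $\Rl_\G$ (Thm.~\ref{thm:rob_lsc_measure_properties}) and of $\R_\G$ (Lemma~\ref{thm:rob_measure_properties}) in both directions gives the two-sided bound $\Rl_\G(S(r)\rho S(r)^\dagger)=\Rl_\G(\rho)$, and identically for $\R_\G$.

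Combining the two observations with the value established for the single-photon state, $\Rl_\G(\proj{1})=\R_\G(\proj{1})=\tfrac{4e}{3\sqrt 3}$, immediately yields
$\Rl_\G({\zeta_r}_\pm)=\Rl_\G\!\big(S(r)\proj{1}S(r)^\dagger\big)=\tfrac{4e}{3\sqrt 3}$ for all $r\geq 0$, and the same for $\R_\G$; in particular the equality $\Rl_\G=\R_\G$ is inherited from the Fock-state case.

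There is no genuine obstacle here beyond correctly carrying out the Bogoliubov algebra of the first step; the crux of the argument is the conceptual recognition that squeezing is a \emph{free} Gaussian unitary under which the robustness is invariant, so that $\ket{\zeta_r}_\pm$ carries precisely the same amount of genuine non-Gaussianity as $\ket{1}$. As a cross-check, one could instead proceed directly via Lemmas~\ref{lem:robustness_pure} and~\ref{lem:robustness_lowerbound}, evaluating $\sup_{\alpha,\xi}|{}_\pm\!\braket{\zeta_r|\alpha,\xi}|^2$ against a matching phase-randomized free-state ansatz; this would reproduce the same value but requires the lengthier explicit overlap computations that the invariance argument neatly circumvents.
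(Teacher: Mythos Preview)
Your proposal is correct and follows essentially the same route as the paper: both establish $\ket{\zeta_r}_+=S(r)\ket{1}$ and then invoke invariance of the robustness under the Gaussian unitary $S(r)$ to reduce to the already-computed value $\tfrac{4e}{3\sqrt3}$ for $\ket{1}$. The only cosmetic difference is in the derivation of $\ket{\zeta_r}_+=S(r)\ket{1}$: the paper argues via $\big((a^\dagger)^2-a^2\big)^k a^\dagger\ket{0}\propto a^\dagger\big((a^\dagger)^2-a^2\big)^k\ket{0}$ term by term, whereas you use the Bogoliubov relation directly --- your version is arguably cleaner, and your explicit justification of invariance via two-sided monotonicity is more detailed than the paper's one-line assertion.
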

\begin{proof}
 Recall that $\ket{\zeta_r}_+=\ket{\zeta_r}_-$ so we only consider $\ket{\zeta_r}_+$.
 We first note that $\ket{\zeta_r}_+ = S(r)\ket{1}$. To see this, observe that 
 \bal
  \left((a^\dagger)^2-a^2\right)^k a^\dagger\ket{0} \propto a^\dagger \left((a^\dagger)^2-a^2\right)^k 
 \ket{0} 
 \eal
 for any $k\in \mathbb{Z}_+$, implying $S(r)a^\dagger\ket{0}\propto a^\dagger S(r)\ket{0}$ and thus $\ket{\zeta_r}_+=S(r)\ket{1}$ taking into account the normalization. 
Since $S(r)$ is a Gaussian unitary which does not affect the degree of non-Gaussianity, we get $R_\G({\zeta_r}_+)=R_\G(\proj{1})=4e/(3\sqrt{3})$.
\end{proof}

We further note that, since all coherent states $\ket\alpha$ are Gaussian, the robustness of non-Gaussianity is always upper bounded by the robustness of nonclassicality $\Rl_\C$. We can thus employ our results obtained in Sec.~\ref{sec:nonclassicality} as useful bounds. For instance, we immediately have from Prop.~\ref{prop:cat_states} that the robustness of non-Gaussianity of a cat state $\ket{\alpha_\pm}$ is bounded above by $2$ as $\alpha \to \infty$. Since $\log \R_\G(\alpha_\pm)$ upper bounds the minimal relative entropy distance from the set of genuine Gaussian states, we then have that the relative entropy distance is bounded as
\begin{equation}\begin{aligned}
	\inf_{\sigma \in \G} D(\alpha_\pm \| \sigma) \leq 1
\end{aligned}\end{equation}
in the limit of large $\alpha$. This constitutes a significant qualitative and quantitative difference from the relative entropy quantifier of non-Gaussianity considered in Ref.~\cite{albarelli_2018}, which grows to infinity as $\alpha \to \infty$.

%%%%%%%%%%%%%%%%%%%%%%%%%%%%%%%%%%%%%%%%%%%%%%%%%%%%%%%%%%%%%%%%%%%%%%%%%%%%%%%%%%%%%%%%%%%

\subsection{Quantum coherence}\label{sec:coherence}

The operational aspects of the principle of superposition have recently been formalized in a resource-theoretic framework as the resource theory of coherence~\cite{baumgratz_2014,streltsov_2017}, with an extension to the infinite-dimensional case considered explicitly in~\cite{zhang_2016}. Due to superposition being a basis-dependent concept, the study of coherence begins by identifying a countable orthonormal basis for the Hilbert space $\H$, denoted by $\{\ket{i}\}_{i=1}^\infty$, as the set of free pure states. We stress here the difference from the theory of nonclassicality (Sec.~\ref{sec:nonclassicality}), where the classical states $\{\ket{\alpha}\}_{\alpha \in \CC}$ do not form a mutually orthogonal set. The mixed free states --- called incoherent states --- are then all states diagonal in the given basis, $\F = \I \coloneqq \cl \conv \{ \proj{i} \}_{i=1}^\infty$. For any operator with matrix representation $X = \sum_{i,j=1}^\infty X_{i,j} \ketbra{i}{j}$, the $\ell^1$ norm is defined as $\norm{X}_{\ell^1} = \sum_{i,j=1}^\infty |X_{i,j}|$, which can be understood as the norm $\norm{\cdot}_\I$ for this resource theory (see Sec.~\ref{sec:seminorms}). It is already known that the standard robustness $\Rs_\I$ does not provide a meaningful quantifier of this resource, but the generalized robustness $\R_\I$ has been successfully employed in finite dimensions~\cite{napoli_2016}.

The characterization of the robustness of coherence in infinite dimensions is very similar to our previous findings in the resource theories of entanglement and nonclassicality.
We start by noticing that also in this case the cone of free states is closed with respect to the weak* topology, which makes it possible to apply Thm.~\ref{thm:strong_duality_from_weakstar} and conclude that the lower semicontinuous robustness coincides with its simplified version.

\begin{lem}
The cone $\cone(\I)$ is closed in the weak* topology. Thus, for all states $\rho\in \D(\H)$ it holds that $\Rl_\I(\rho)=\R_\I(\rho)$.
\end{lem}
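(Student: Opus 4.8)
The plan is to verify the single hypothesis of Thm.~\ref{thm:strong_duality_from_weakstar}, namely that $\cone(\I)$ is closed in the weak* topology; the asserted equality $\Rl_\I(\rho)=\R_\I(\rho)$ then follows immediately. The crucial feature to exploit is that, in contrast with separability, incoherence is a \emph{linear} constraint. Recalling that $\I=\cl\conv\{\proj{i}\}$ coincides with the set of density operators diagonal in the fixed basis $\{\ket{i}\}$, its conic hull is exactly the set of positive trace-class operators with vanishing off-diagonal entries:
\begin{equation}
\cone(\I) = \lset X \in \C \sbar \<X, \ketbra{j}{i}\> = 0 \;\;\forall\, i \neq j \rset .
\end{equation}
First I would confirm this identity in both directions: any diagonal density operator $\sigma=\sum_i p_i \proj{i}\in\I$ scales to a positive diagonal operator, and conversely any positive diagonal trace-class $X$ equals $\Tr(X)\cdot\bigl(X/\Tr(X)\bigr)\in\cone(\I)$ when $\Tr(X)>0$ (and is $0$ otherwise). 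The only point meriting a line of justification is that a bounded operator all of whose off-diagonal matrix elements vanish in $\{\ket i\}$ is genuinely diagonal, i.e.\ that $X\ket{j}=\bra{j}X\ket{j}\ket{j}$; this is elementary. I would also note that the trace-norm closure in the definition of $\I$ yields precisely the set of all (possibly infinitely supported) diagonal density operators.

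Then I would deduce weak*-closedness directly from this description. The positive cone $\C=\cone(\D(\H))$ is weak*-closed, as already used in the proof of Thm.~\ref{thm:strong_duality_from_weakstar}. For each ordered pair $i\neq j$, the operator $\ketbra{j}{i}$ is finite-rank, hence compact, and the weak* topology on $\T(\H)$ is by definition the one induced by pairing against $\K(\H)$; therefore $X\mapsto\<X,\ketbra{j}{i}\>$ is weak*-continuous, and each slab $\{X:\<X,\ketbra{j}{i}\>=0\}$ is the preimage of the closed set $\{0\}$ and thus weak*-closed. Being an intersection of a family of weak*-closed sets, $\cone(\I)$ is weak*-closed. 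Applying Thm.~\ref{thm:strong_duality_from_weakstar} then yields $\R_\I(\rho)=\Rl_\I(\rho)$ for all $\rho\in\D(\H)$.

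I expect no genuine obstacle in this argument: because incoherence is carved out by positivity together with countably many weak*-continuous linear constraints, the proof is markedly simpler than its entanglement analogue, where the nonlinear separability condition forced the use of the finite-dimensional truncations $\mathcal{P}_N$ and the gentle-measurement lemma. The one step requiring a modicum of care is the set-theoretic identity for $\cone(\I)$ above, and in particular the verification that closing the convex hull of $\{\proj{i}\}$ produces exactly the diagonal density operators and nothing more.
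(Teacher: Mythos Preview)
Your proposal is correct and follows essentially the same approach as the paper: write $\cone(\I)=\C\cap\bigcap_{i\neq j}\{X:\braket{i|X|j}=0\}$, observe that each off-diagonal functional is weak*-continuous because $\ketbra{j}{i}$ is compact, and conclude weak*-closedness as an intersection of weak*-closed sets before invoking Thm.~\ref{thm:strong_duality_from_weakstar}. The only difference is that you spell out the verification of the set identity for $\cone(\I)$ in slightly more detail than the paper does.
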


\begin{proof}
For $i,j\in \NN$, define the functionals $\varphi_{i,j}:\T(\H)\to \CC$ given by $\varphi_{i,j}(X)\coloneqq \braket{i|X|j}$. Note that $\varphi_{i,j}$ is weak*-continuous, essentially because the operator $\ketbra{j}{i}$ is of finite rank and thus compact. Clearly, denoting by $\C$ the cone of positive semidefinite operators on $\H$, it holds that
\begin{equation}
    \cone(\I) = \C\cap \bigcap_{i\neq j} \varphi_{i,j}^{-1}(0).
\end{equation}
Since $\C$ is weak*-closed, and so are the sets $\varphi_{i,j}^{-1}(0)$ because the functionals $\varphi_{i,j}$ are weak*-continuous, also $\cone(\I)$ must be weak*-closed. An application of Thm.~\ref{thm:strong_duality_from_weakstar} completes the proof.
\end{proof}

We can then obtain a number of results in a very similar way to the preceding sections.

\begin{cor}
Any pure state $\ket\psi = \sum_{i=1}^\infty \psi_i \ket{i}$ satisfies
\begin{equation}\begin{aligned}
    \Rl_\I (\psi) = \norm{\psi}_{\ell^1} = \left( \sum_{i=1}^\infty |\psi_i| \right)^2.
\end{aligned}\end{equation}
In particular, a pure state has finite robustness of coherence if and only if the sum $\sum_{i=1}^\infty |\psi_i|$ converges.
\end{cor}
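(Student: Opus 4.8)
The plan is to invoke Prop.~\ref{thm:robustness_pure_norm}, which for any rank-one state already gives $\Rl_\I(\psi) = \norm{\ket\psi}_\f^2$, where $\f$ denotes the (balanced) set of free pure states --- here the canonical orthonormal basis $\{\lambda \ket{i}\}$ with $|\lambda| = 1$. The whole problem thus reduces to evaluating the gauge-type seminorm $\norm{\cdot}_\f$ on the coefficients of $\ket\psi$, and then identifying its square with the operator $\ell^1$ norm $\norm{\proj\psi}_{\ell^1} = \sum_{i,j}|\psi_i||\psi_j| = \left(\sum_i |\psi_i|\right)^2$, which is exactly the quantity $\norm{\psi}_{\ell^1}$ appearing in the statement.

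To evaluate $\norm{\ket\psi}_\f$ I would first compute the polar seminorm. Because $\f$ consists of unit-modulus multiples of the basis vectors, $\norm{\ket x}_\f^\circ = \sup_{\ket v \in \f} |\braket{x|v}| = \sup_i |x_i|$ is simply the $\ell^\infty$ norm of the coefficient sequence. Dualizing, $\norm{\ket\psi}_\f = \sup\{ |\braket{\psi|x}| : \sup_i |x_i| \leq 1,\ \ket x \in \H \}$. The upper bound $\norm{\ket\psi}_\f \leq \sum_i |\psi_i|$ follows from $|\braket{\psi|x}| \leq \sum_i |\psi_i| |x_i| \leq (\sup_i |x_i|) \sum_i |\psi_i|$; the matching lower bound comes from feeding in, for each $N$, the finite-support vector with coefficients $x_i = \psi_i^* / |\psi_i|$ for $i \leq N$ and $0$ otherwise, which is a legitimate element of $\H$ with $\norm{\ket x}_\f^\circ \leq 1$ and yields $\braket{\psi|x} = \sum_{i \leq N} |\psi_i|$. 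Letting $N \to \infty$ gives $\norm{\ket\psi}_\f = \sum_i |\psi_i|$, whence $\Rl_\I(\psi) = \left(\sum_i |\psi_i|\right)^2$; the ``in particular'' statement on finiteness is then immediate.

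The main point to handle carefully is the infinite-dimensional bookkeeping, though it is actually milder than in the entanglement case of Prop.~\ref{thm:rob_ent_pure}: because the free pure states are mutually orthogonal, no Abel-summation or phase-matching trick is needed, and the truncation argument goes through verbatim whether $\sum_i |\psi_i|$ converges or diverges --- in the divergent case both bounds return $+\infty$. The only subtlety is that the supremum defining $\norm{\cdot}_\f$ ranges over genuine vectors $\ket x \in \H$ rather than arbitrary bounded sequences, but since the lower bound uses only finitely supported, hence square-summable, vectors this restriction causes no loss. I would also note, as for the other resources, that the closure in the definition of $\I$ is harmless here since $\norm{\cdot}_\f$ is by construction the gauge of $\cl\conv\f$, so that Prop.~\ref{thm:robustness_pure_norm} applies directly.
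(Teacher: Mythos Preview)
Your approach is essentially identical to the paper's: invoke Prop.~\ref{thm:robustness_pure_norm} to get $\Rl_\I(\psi)=\norm{\ket\psi}_\f^2$ and then identify $\norm{\ket\psi}_\f$ with the $\ell^1$ norm of the coefficient sequence --- the paper leaves this computation implicit, while you spell it out. The only slip is the phase of your test vector: to obtain $\braket{\psi|x}=\sum_{i\leq N}|\psi_i|$ you want $x_i=\psi_i/|\psi_i|$ rather than $\psi_i^*/|\psi_i|$, but this is cosmetic and does not affect the argument.
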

\begin{proof}
The fact that $\Rl_\I(\psi) = \norm{\psi}_{\ell^1}$ is an immediate consequence of Prop.~\ref{thm:robustness_pure_norm}. Alternatively, it can be explicitly shown using a similar approach to our proof of Prop.~\ref{thm:rob_ent_pure}, generalizing a construction from~\cite[Thm. 4]{piani_2016}.
\end{proof}

The pure-state formula generalizes the finite-dimensional expression~\cite{piani_2016}.

\begin{cor}
For the states $\omega_\pm$ defined in Prop.~\ref{prop:rob_ent_infinite}, we have that
\begin{equation}\begin{aligned}
    \Rl_\I(\omega_\pm) \leq 2,\qquad  \norm{\omega_\pm}_{\ell^1} = \infty.
\end{aligned}\end{equation}
\end{cor}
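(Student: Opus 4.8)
The plan is to mirror the construction from the proof of Prop.~\ref{prop:rob_ent_infinite}, exploiting the fact that symmetrizing $\omega_+$ and $\omega_-$ produces an operator that is diagonal in the incoherent basis $\{\ket{n}\}$.

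First I would establish the upper bound $\Rl_\I(\omega_\pm)\leq 2$. Recalling the conic primal form $\R_\I(\omega)=\inf\lset \<U,\sigma'\> \sbar \omega\leq\sigma',\,\sigma'\in\cone(\I)\rset$ from Eq.~\eqref{eq:robustness_repeat}, it suffices to exhibit a single incoherent operator dominating $\omega_\pm$. The key observation is that
\[
\omega_+ + \omega_- = \frac{1}{c}D\!\left(\id+\frac{i}{\pi}H_{-1}\right)\!D + \frac{1}{c}D\!\left(\id-\frac{i}{\pi}H_{-1}\right)\!D = \frac{2}{c}\,D^2 = \frac{2}{c}\sum_{n=1}^\infty d_n^2\,\ketbra{n}{n},
\]
which is diagonal in $\{\ket{n}\}$ and therefore lies in $\cone(\I)$. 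Since $\omega_\pm\geq 0$ (as verified in the proof of Prop.~\ref{prop:rob_ent_infinite} using $\norm{H_{-1}}_\infty=\pi$), we have $\omega_\pm\leq\omega_++\omega_-=\frac{2}{c}D^2$, so $\sigma'\coloneqq\frac{2}{c}D^2$ is a feasible point for $\R_\I(\omega_\pm)$. Its objective value is $\Tr\sigma'=\frac{2}{c}\sum_n d_n^2=2$, whence $\Rl_\I(\omega_\pm)\leq\R_\I(\omega_\pm)\leq 2$, using only the generic inequality $\Rl_\I\leq\R_\I$.

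The second claim, $\norm{\omega_\pm}_{\ell^1}=\infty$, is already contained in the proof of Prop.~\ref{prop:rob_ent_infinite}: expanding the $\ell^1$ norm gives $\norm{\omega_\pm}_{\ell^1}=1+\frac{1}{\pi c}\sum_{n,m}d_n d_m\,\abs{(H_{-1})_{n,m}}$, and the divergence of this double series is the classical statement established by Hardy, Littlewood and P\'{o}lya \cite[Sec.~8.12]{INEQUALITIES}. I would simply cite that computation rather than repeat it, since the quantity is literally identical.

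The main subtlety to flag is that the pure-state upper bound $\Rl_\I(\rho)\leq\norm{\rho}_{\ell^1}$ from Prop.~\ref{thm:robustness_pure_norm} is \emph{useless} here, because its right-hand side is precisely the infinite quantity appearing in the second claim; the finiteness of $\Rl_\I$ must instead be read off directly from a dominating incoherent operator. The crux is therefore recognizing that the antisymmetric, purely off-diagonal Hilbert contribution $\pm\frac{i}{\pi}DH_{-1}D$ cancels upon averaging $\omega_+$ and $\omega_-$, leaving a diagonal (hence incoherent) operator of finite trace $2$ --- exactly the coherence-theoretic analog of the separable dominating operator used in Prop.~\ref{prop:rob_ent_infinite}. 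No further technical input is needed, so I expect the only real work to be this structural observation.
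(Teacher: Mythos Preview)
Your proposal is correct and is exactly the argument the paper intends: the corollary is stated without proof precisely because both claims are already contained verbatim in the proof of Prop.~\ref{prop:rob_ent_infinite} --- the divergence of $\norm{\omega_\pm}_{\ell^1}$ is computed there explicitly, and the dominating diagonal operator $\omega_++\omega_-=\frac{2}{c}\sum_n d_n^2\ketbra{n}{n}$ you use is the single-system analogue of the separable operator $\frac{2}{c}\sum_n d_n^2\ketbra{nn}{nn}$ used there. Your commentary on why the bound from Prop.~\ref{thm:robustness_pure_norm} is unhelpful here is also apt.
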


The result shows that coherence constitutes another example of a theory where the robustness $\Rl_\I$ can be a more well-behaved quantifier than some of the commonly used measures --- in this case, the $\ell^1$ norm of coherence --- whose value diverges for states which are not necessarily infinitely resourceful.

\begin{cor}
Let $\H_1, \ldots \H_m$ be separable Hilbert spaces, each $\H_k$ with an incoherent orthonormal basis $\{\ket{i^{(k)}}\}_{i=1}^\infty$. Taking $\left\{ \bigotimes_k \ket{i_k^{(k)}} \right\}_{i_1, \ldots, i_m=1}^\infty$ as the incoherent basis of $\bigotimes_k \H_k$, we have that the robustness of coherence is multiplicative, in the sense that for any collection of states $\{\rho_k\}_{k=1}^m$ with $\rho_k \in \H_k$ it holds that
\begin{equation}\begin{aligned}
\Rl_\I\left(\bigotimes_k \rho_k\right) = \prod_k \Rl_\I(\rho_k).
\end{aligned}\end{equation}
\end{cor}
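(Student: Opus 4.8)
The plan is to prove multiplicativity of $\Rl_\I$ for tensor products by establishing the two inequalities separately, following the blueprint already laid out for the multiplicativity of nonclassicality in Proposition~\ref{prop:multiplicativity_noncl}. The structure of the coherence theory is in fact more favorable than the nonclassicality case, because the incoherent basis states $\bigotimes_k \ket{i_k^{(k)}}$ are genuinely orthonormal and, crucially, every product of incoherent states is incoherent in the product basis: if $\sigma_k = \sum_i p^{(k)}_i \proj{i^{(k)}}$ is diagonal in each $\H_k$, then $\bigotimes_k \sigma_k$ is diagonal in the product basis, hence free. This closure of the free set under tensor products is exactly the ingredient that drove the submultiplicativity half of Proposition~\ref{prop:multiplicativity_noncl}.

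For the upper bound $\Rl_\I\!\left(\bigotimes_k \rho_k\right) \leq \prod_k \Rl_\I(\rho_k)$, I would take feasible primal solutions $\sigma_k \in \I_k$ with $\rho_k \cleq \lambda_k \sigma_k$ for each mode, and use the same positivity-of-tensor-products argument as in Proposition~\ref{prop:multiplicativity_noncl} to conclude $\bigotimes_k \rho_k \cleq \left(\prod_k \lambda_k\right) \bigotimes_k \sigma_k$. The key point is that $\bigotimes_k \sigma_k \in \I$ by the observation above, so this is a genuine feasible solution for $\Rl_\I\!\left(\bigotimes_k \rho_k\right)$; taking the infimum over the $\lambda_k$ gives the bound. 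For the lower bound I would dualize: take optimal dual witnesses $W_k \cgeq 0$ with $\sup_{\sigma \in \I_k}\<W_k,\sigma\> \leq 1$, noting that by Corollary~\ref{cor:rob_lsc_equal} such optimizers exist whenever the robustness is finite. Then $\bigotimes_k W_k \cgeq 0$, and the normalization constraint follows because
\begin{equation}\begin{aligned}
\sup_{\sigma \in \I} \<\bigotimes_k W_k, \sigma\> = \sup_{i_1,\ldots,i_m} \prod_k \braket{i_k^{(k)}|W_k|i_k^{(k)}} \leq \prod_k \sup_i \braket{i^{(k)}|W_k|i^{(k)}} \leq 1,
\end{aligned}\end{equation}
where I have used that the supremum of a continuous linear functional over $\I$ equals its supremum over the extreme points $\{\bigotimes_k \proj{i_k^{(k)}}\}$, together with the product structure of the incoherent basis. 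This makes $\bigotimes_k W_k$ a feasible dual witness, so $\Rl_\I\!\left(\bigotimes_k \rho_k\right) \geq \prod_k \<W_k,\rho_k\>$, and optimizing over the $W_k$ yields $\prod_k \Rl_\I(\rho_k)$.

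The only genuine subtlety — and the step I expect to require care — is the treatment of infinite values: if some $\Rl_\I(\rho_k) = \infty$ the primal argument has no finite $\lambda_k$ to work with, so the upper bound is vacuously true, while the lower bound needs the dual witnesses to be chosen so that $\prod_k \<W_k,\rho_k\>$ can be driven to $\infty$; this is handled by taking sequences of near-feasible witnesses approaching the supremum in Corollary~\ref{cor:rob_lsc_equal} rather than a single optimizer. Beyond this, the argument is a direct transcription of Proposition~\ref{prop:multiplicativity_noncl}, with the product-state structure of coherent states replaced by the (simpler) product structure of the incoherent basis. I would therefore present the proof concisely by explicitly invoking the parallel with Proposition~\ref{prop:multiplicativity_noncl} and only spelling out the normalization computation for the dual witness, which is the one place where the orthogonality of the incoherent basis (as opposed to coherent states) slightly simplifies the estimate.
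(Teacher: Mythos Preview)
Your proposal is correct and follows essentially the same approach as the paper, which simply states that the result ``follows in the same way as the proof of Prop.~\ref{prop:multiplicativity_noncl}.'' Your writeup spells out the transcription in more detail than the paper does, correctly identifying that the product structure of the incoherent basis plays the role that the product structure of multi-mode coherent states played there, and your handling of the infinite-value case via witness sequences is a reasonable extra care (the paper implicitly relies on strong duality $\Rl_\I = \R_\I$, established just before this corollary, to sidestep that subtlety on the primal side).
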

\begin{proof}
Follows in the same way as the proof of Prop.~\ref{prop:multiplicativity_noncl}.
\end{proof}
This fact also generalizes a property known from finite dimensions~\cite{zhu_2017}.

%%%%%%%%%%%%%%%%%%%%%%%%%%%%%%%%%%%%%%%%%%%%%%%%%%%%%%%%%%%%%%%%%%%%%%%%%%%%%%%%%%%%%%%%%%%%%%%%%

\section{Discussion}\label{sec:discussion}

We introduced a general method of quantifying convex resources in infinite-dimensional probabilistic theories through the robustness measure $\Rl_\F$, which provides a non-trivial extension of a finite-dimensional quantifier. We showed that such a measure not only satisfies the properties desired from a bona fide resource monotone --- faithfulness, strong monotonicity, lower semicontinuity --- but it also admits a direct operational interpretation as a figure of merit in a class of channel discrimination tasks. By studying the conic optimization problems underlying the robustness, we showed that the measure enjoys a useful dual formulation and can always be computed by measuring a single, suitably chosen effect.

We investigated the robustness further in the case of continuous-variable quantum mechanics, establishing a number of results which include more robust strong duality relations as well as lower and upper bounds which aid the quantification of quantum resources. We used our results to compute the robustness exactly for a variety of states in the resource theories of nonclassicality, entanglement, coherence, and genuine non-Gaussianity. By comparing the robustness $\Rl_\F$ with the related standard robustness $\Rsl_\F$, we showed that the former can remain a useful resource quantifier even in cases when the latter diverges to infinity.

Our contribution here is twofold. First, we established a unified view of the quantification of infinite-dimensional resources and their practical applications in discrimination tasks, laying the foundations for a systematic operational investigation of general resource theories in infinite-dimensional GPTs. Second, together with Ref.~\cite{our_main}, we provided readily applicable methods for benchmarking important continuous-variable quantum resources which underlie practical technological applications. We expect the robustness to find use as a meaningful and accessible tool in the study of resources in quantum mechanics and beyond.

Interesting follow-up developments would be to further improve on the characterization of the robustness in settings of interest, in particular through comparison with other common resource measures as well as evaluation of the robustness for larger classes of states. Another question is to consider whether the quantification of the robustness simplifies in restricted settings, such as for Gaussian states or energy-constrained sets of states. Open questions also remain in the characterization of strong duality for the robustness in infinite-dimensional spaces.

%%%%%%%%%%%%%%%%%%%%%%%%%%%%%%%%%%%%%%%%%%%%%%%%%%%%%%%%%%%%%%%%%%%%%%%%%%%%%%%%%%%%%%%%%%%%%%%%%
%%%%%%%%%%%%%%%%%%%%%%%%%%%%%%%%%%%%%%%%%%%%%%%%%%%%%%%%%%%%%%%%%%%%%%%%%%%%%%%%%%%%%%%%%%%%%%%%%
%%%%%%%%%%%%%%%%%%%%%%%%%%%%%%%%%%%%%%%%%%%%%%%%%%%%%%%%%%%%%%%%%%%%%%%%%%%%%%%%%%%%%%%%%%%%%%%%%

\begin{acknowledgments}

L.L.\ is supported by the ERC Synergy Grant BIOQ (grant no. 319130) \txb{and by the Alexander von Humboldt Foundation}. B.R.\ is supported by the Presidential Postdoctoral Fellowship from Nanyang Technological University, Singapore. R.T.\ acknowledges the support of NSF, ARO, IARPA, AFOSR, the Takenaka Scholarship Foundation, and the National Research Foundation (NRF) Singapore, under its NRFF Fellow programme (Award No. NRF-NRFF2016-02) and the Singapore Ministry of Education Tier 1 Grant 2019-T1-002-015. Any opinions, findings and conclusions or recommendations expressed in this material are those of the author(s) and do not reflect the views of National Research Foundation, Singapore. G.F.\ acknowledges the support received from the EU through the ERASMUS+ Traineeship program and from the Scuola Galileiana di Studi Superiori.

\end{acknowledgments}

\onecolumngrid
\appendix

\section{Proof of results in Sec.~\ref{sec:optimization_banach}}\label{app:duality}

\begingroup
\renewcommand{\thethm}{\ref{prop:banach_states}}
\begin{pro}
For any $\omega \in \Omega$ it holds that
\begin{equation}\begin{aligned}
    P'(\omega) = D(\omega).
\end{aligned}\end{equation}
Furthermore, the primal problem is subfeasible if and only if there exists an optimal dual solution $W$.
\end{pro}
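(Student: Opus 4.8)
The plan is to lift everything into the product space $\V\times\RR$ through the convex cone
\[
\mathcal{S} \coloneqq \lset (\sigma-\tau,\, \<U,\sigma\>) \sbar \sigma\in\K_1,\ \tau\in\K_0 \rset ,
\]
which is readily seen to be a convex cone contained in $\V\times\RR_{\geq 0}$ (since $U$ is positive and $\K_1\subseteq\C$) and upward-closed in its second coordinate (one raises the height at will by adding the same element of $\K_1\subseteq\K_0$ to $\sigma$ and $\tau$, using strict positivity of $U$). By the very definition of subfeasibility, a value $\lambda$ is subfeasible exactly when $(\omega,\lambda)\in\cl(\mathcal S)$; hence $P'(\omega)$ is the bottom of the fibre of $\cl(\mathcal S)$ over $\omega$, and whenever $P'(\omega)<\infty$ the point $(\omega,P'(\omega))$ itself belongs to $\cl(\mathcal S)$ (the set of subfeasible values is closed and bounded below, so attains its infimum).

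First I would dispatch weak duality $D(\omega)\le P'(\omega)$ by a direct estimate: for any dual-feasible $W$ (that is, $W\in\K_0\*$ and $U-W\in\K_1\*$) and any subfeasible data $\sigma_n\in\K_1,\ \tau_n\in\K_0$ with $\sigma_n-\tau_n\to\omega$ and $\<U,\sigma_n\>\to\lambda$, one has $\<W,\sigma_n-\tau_n\>=\<W,\sigma_n\>-\<W,\tau_n\>\le\<W,\sigma_n\>\le\<U,\sigma_n\>$, using $\<W,\tau_n\>\ge 0$ and $\<U-W,\sigma_n\>\ge 0$. Passing to the limit gives $\<W,\omega\>\le\lambda$, and optimizing over $W$ and subfeasible $\lambda$ yields $D(\omega)\le P'(\omega)$.

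The reverse inequality is the core and I would obtain it by Hahn--Banach separation. For any $\mu<P'(\omega)$ the point $(\omega,\mu)$ lies outside the closed convex cone $\cl(\mathcal S)$, so there is a continuous functional $(\Phi,c)\in\V\*\times\RR$ strictly separating them; homogeneity of the cone lets me take the separating constant to be $0$, giving $\<\Phi,x\>+ct\ge 0$ on $\cl(\mathcal S)$ and $\<\Phi,\omega\>+c\mu<0$. Setting $W\coloneqq-\Phi$ and testing against $\sigma=0$ and $\tau=0$ separately shows $W\in\K_0\*$, $cU-W\in\K_1\*$, and $c\ge 0$. If $c>0$, the rescaled $W/c$ is dual-feasible with value $>\mu$; if $c=0$, then $W$ vanishes on $\K_1$ while $\<W,\omega\>>0$, so adding positive multiples of $W$ to any dual-feasible point drives the dual value to $+\infty$. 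In either case $D(\omega)>\mu$, and letting $\mu\uparrow P'(\omega)$ gives $D(\omega)\ge P'(\omega)$, hence equality. (Note the $c=0$ branch can occur only when $P'(\omega)=\infty$, since when $P'(\omega)<\infty$ membership of $(\omega,P'(\omega))\in\cl(\mathcal S)$ together with the separation at $(\omega,\mu)$ forces $c(P'(\omega)-\mu)>0$, i.e.\ $c>0$.)

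For the stated equivalence, the easy direction is immediate from $P'=D$: an optimal dual $W$ has finite value $\<W,\omega\>=D(\omega)=P'(\omega)<\infty$, so $P$ is subfeasible. The converse --- subfeasibility forcing dual \emph{attainment} --- is where I expect the main obstacle. Exact attainment is equivalent to the fibre-bottom $(\omega,P'(\omega))$ being a \emph{support point} of $\cl(\mathcal S)$ with a supporting functional of nonzero vertical component $c$: such a functional decodes (after dividing by $c>0$) to a dual-feasible $W$ with $\<W,\omega\>$ \emph{exactly} $P'(\omega)$, rather than merely approaching it. The difficulty is genuinely infinite-dimensional: boundary points of closed convex sets need not be support points, and the dual-feasible set here is norm-unbounded (no single free state dominates all states), so a naive Banach--Alaoglu argument on a maximizing sequence of the near-optimal duals produced above fails. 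The route I would pursue therefore exploits the closedness and cone structure of $\cl(\mathcal S)$ together with the fact that $(\omega,P'(\omega))$ is the minimum of its vertical fibre --- the very feature that already forces $c>0$ in every separation from below --- to produce a supporting hyperplane exactly at the fibre-bottom; making this final support-point step rigorous is the crux of the whole argument.
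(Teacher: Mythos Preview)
Your route coincides with the paper's almost verbatim. The paper lifts to $\V\times\RR$ via the same cone (denoted $\Q$ there), identifies subfeasibility of value $\lambda$ with $(\omega,\lambda)\in\cl\Q$, and separates $(\omega,\lambda-\ve)$ from $\cl\Q$ by Hahn--Banach. It then argues $q<0$ (your $c>0$) precisely from $(\omega,\lambda)\in\cl\Q$, decodes $W=-Z/q$ as dual-feasible with $\<W,\omega\>>\lambda-\ve$, and rules out any $W'$ with $\<W',\omega\>>\lambda$ by the same weak-duality estimate you record. The $c=0$ branch you isolate is handled identically in the paper's $\lambda=\infty$ case.

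Your hesitation about dual \emph{attainment} is exactly right, and in fact the paper's own proof shares the same gap. Read carefully, the paper establishes only $D(\omega)=\lambda$: for each $\ve>0$ it produces a dual-feasible $W_\ve$ with $\<W_\ve,\omega\>>\lambda-\ve$, and the phrase ``verify that this solution is optimal'' refers to $\lambda$ being the optimal dual \emph{value}, not to any specific $W$ realizing it. No compactness or support-point argument for a $W$ with $\<W,\omega\>=\lambda$ is supplied, and --- as you correctly observe --- the dual-feasible set is not norm-bounded (there is no interior point of $\C$), so a weak* cluster point of the $W_\ve$ is not automatic. Your proposal therefore goes exactly as far as the paper's; the support-point step you flag as the crux is precisely the one the paper's self-contained argument does not carry out either.
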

\endgroup
\begin{proof}
Letting $\lambda = P'(\omega)$, we consider two cases.

(Case $\lambda < \infty$). Defining the set
\begin{equation}\begin{aligned}
    \Q = \lset \big(\sigma - \tau, \, \< U, \sigma \> \big) \sbar \sigma \in \K_1, \tau \in \K_0 \rset \in \V \times \RR,
\end{aligned}\end{equation}
subfeasibility with optimal value $\lambda$ is then equivalent to $(\omega, \lambda) \in \cl \Q$, where we consider the product topology on $\V \times \RR$, and in particular $\< (A, B), (c,d) \> = \< A, c\> + \<B, d\>$. By the Hahn--Banach theorem, we can therefore strictly separate $(\omega, \lambda - \ve)$ from $\cl \Q$ for any $\ve > 0$, which means that there exist a choice of $Z \in \V\*$ and $q \in \RR$ such that
\begin{equation}\begin{aligned}\label{eq:hahn_banach}
    \<Z, \omega\> + q (\lambda - \ve) > \<Z, \sigma - \tau \> + q \< U, \sigma \> \quad \forall \sigma \in \K_1, \tau \in \K_0.
\end{aligned}\end{equation}
First, notice that $q < 0$: were this not the case, we would have
\begin{equation}\begin{aligned}
    \< Z, \omega \> + q \lambda \geq  \<Z, \omega\> + q (\lambda - \ve) > \<Z, \sigma - \tau \> + q \< U, \sigma \>
\end{aligned}\end{equation}
which would imply that $(\omega, \lambda)$ can be strictly separated from $\cl \Q$, a contradiction. Assume now that $\<Z, \tau \> < 0$ for some $\tau \in \K_0$. We can then take $\mu \in \RR_+$ large enough so that
\begin{equation}\begin{aligned}
    - \< Z, \mu \tau \> >  \<Z, \omega \> + q (\lambda - \ve),
\end{aligned}\end{equation}
which contradicts Eq.~\eqref{eq:hahn_banach} since $\mu \tau \in \K_0$ and $0 \in \K_1$. Therefore, we must have $Z \in \K_0\*$. Similarly, assume that $\<Z + q U, \sigma \> > 0$ for some $\sigma \in \K_1$. Taking $\mu \in \RR_+$ sufficiently large, we get
\begin{equation}\begin{aligned}
    \<Z + q U, \mu \sigma \> > \< Z, \omega \> + q (\lambda - \ve)
\end{aligned}\end{equation}
which contradicts Eq.~\eqref{eq:hahn_banach} since $\mu \sigma \in \K_1$ and $0 \in \K_0$, so it must hold that $Z + q U \in -\K_1\*$.

Defining $\displaystyle W = - Z/q$ we then see that $W \in \K_0\*$ and $U - W \in \K_1\*$, so that $W$ is a feasible solution to the dual problem. Choosing $\sigma = \tau = 0$ in Eq.~\eqref{eq:hahn_banach}, we obtain
\begin{equation}\begin{aligned}
    D(\omega) \geq \<W, \omega \> > \lambda - \ve,
\end{aligned}\end{equation}
and since $\ve > 0$ was arbitrary, we deduce that $D(\omega) \geq \lambda$. We will now verify that this solution is optimal. Suppose that there exists a choice of feasible $W' \in \K_0\*$ with $U-W' \in \K_1\*$ such that $\< W', \omega \> > \lambda$. It then holds that
\begin{equation}\begin{aligned}
    \< W', \omega \> - \lambda > 0 \geq \< W' - U, \sigma \>  - \< W', \tau \> \quad \forall \sigma \in \K_1\*, \tau \in \K_0\*,
\end{aligned}\end{equation}
so that the hyperplane $(W', -1)$ strictly separates $(\omega, \lambda)$ from $\cl \Q$, a contradiction. We thus have that $\lambda$ is the optimal value of the dual problem $D$.

(Case $\lambda = \infty$). This case occurs when there are no subfeasible solutions, which means that $(\omega, \mu)$ can be strictly separated from $\cl \Q$ for any $\mu \in \RR$. Taking $\mu > 0$ arbitrarily large, by the Hahn-Banach theorem we have a $Z \in \V\*$ and a $q \in \RR$ such that
\begin{equation}\begin{aligned}\label{eq:subfeas_unbounded}
    \< Z, \omega \> + q \mu > \< Z, \sigma - \tau \> + q \< U, \sigma \> \quad \forall \sigma \in \K_1, \tau \in \K_0.
\end{aligned}\end{equation}

If $q \geq 0$, a reasoning analogous to the above shows that $Z \in \K_0\*$ and $Z + q U \in -\K_1\*$. But since $U \in \C\* \subseteq \K_1\*$, we have
\begin{equation}\begin{aligned}
    \< Z , \sigma \> \leq \< Z + q U, \sigma \> \leq 0 \quad \forall \sigma \in \K_1
\end{aligned}\end{equation}
which shows that $Z \in -\K_1\*$. Now, as $\K_1 \subseteq \K_0$, the inclusion $Z \in \K_0\* \cap (-\K_1\*)$ can only hold when $\<Z, \sigma \> = 0 \; \forall \sigma \in \K_1$. But we also have that $Z + q U \in \K_0\* \cap (-\K_1\*)$, which implies that we must have $q=0$ since  $\<U, \rho\> > 0 \; \forall \rho \in \C\setminus\{0\}$. We then have that $\eta Z$ is a feasible dual solution for any $\eta \in \RR_+$, and since Eq.~\eqref{eq:subfeas_unbounded} implies that $\< Z, \omega \> > 0$, we can make the dual value arbitrarily large.

If, on the other hand, $q<0$ in Eq.~\eqref{eq:subfeas_unbounded}, then we simply follow the steps that we considered in the case $\lambda < \infty$ to obtain a feasible dual solution $W$ such that $\<W, \omega \> > \mu$. Taking $\mu \to \infty$, we have that the dual problem is unbounded with $D(\omega) = \infty$, and so no dual optimal solution can exist. 
\end{proof}

\begingroup
\renewcommand{\thethm}{\ref{lem:rob_based}}
\begin{lem}
For any $\omega \in \Omega$ it holds that
\begin{align}
    P(\omega) &= \inf \lset \lambda \sbar \omega = \lambda \sigma - (\lambda - 1) \tau, \; \tau \in \B_0,\; \sigma \in  \B_1\rset\\
    P'(\omega) &= \inf \Big\{ \lambda \;\Big|\; \exists \{\xi_n\}_n \to \omega \colon \xi_n = \lambda \sigma_n - (\lambda - 1) \tau_n, \nonumber\\
     & \hphantom{\inf \Big\{ \lambda \;\Big|\;} \tau_n \in \B_0,\; \sigma_n \in \B_1\Big\}. \label{eq:rob_subfeas_base}
\end{align}
In particular, it suffices to consider sequences of normalized elements $\xi_n \in \V$ such that $\<U, \xi_n\> = 1$ when considering subfeasibility.

Alternatively, we can write
\begin{equation}\begin{aligned}\label{eq:rob_subnormalized}
    P(\omega) &= \inf \lset \lambda \sbar \omega \in \lambda (\B_1 - \K_0) \rset\\
   %&= \inf \lset \lambda \sbar \omega \in \lambda (\B_{1\leq} - \K_0) \rset\\
    P'(\omega) &= \inf \lset \lambda \sbar \omega \in \lambda \cl (\B_1- \K_0) \rset.
   % &= \inf \lset \lambda \sbar \omega \in \lambda \cl (\B_{1\leq} - \K_0) \rset.
\end{aligned}\end{equation}
\end{lem}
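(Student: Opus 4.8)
The plan is to prove the four identities by direct substitution using the base decompositions of $\K_0$ and $\K_1$, the only genuinely non-trivial point being the passage from the index-dependent normalizations in the definition of subfeasibility to the fixed coefficients $\lambda$ and $\lambda-1$ appearing in the base form. This is precisely the step where I expect to need the hypothesis that the bases $\B_0$ and $\B_1$ are \emph{bounded}.

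First I would establish \eqref{eq:rob_primal_base}. In the definition \eqref{eq:opt_primal} any feasible $\sigma$ must satisfy $\sigma \neq 0$, since otherwise $\sigma - \omega = -\omega \in \K_0 \subseteq \C$ would force $\omega = 0$ by pointedness, contradicting $\<U,\omega\> = 1$. Writing $\lambda \coloneqq \<U,\sigma\>$ and using that $\B_1$ is a base gives $\sigma = \lambda \sigma'$ with $\sigma' \in \B_1$; setting $\tau \coloneqq \sigma - \omega \in \K_0$ and using $\<U,\omega\>=1$ yields $\<U,\tau\> = \lambda - 1 \geq 0$, so either $\tau = 0$ (whence $\lambda=1$) or $\tau = (\lambda-1)\tau'$ with $\tau' \in \B_0$; in both cases $\omega = \lambda\sigma' - (\lambda-1)\tau'$, and the converse follows by reversing the computation. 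The first line of \eqref{eq:rob_subnormalized} is then immediate: $\lambda\B_1 - (\lambda-1)\B_0 \subseteq \lambda(\B_1-\K_0)$ using $\lambda\K_0 = \K_0$, while conversely any $\omega = \lambda\sigma' - \kappa$ with $\sigma'\in\B_1,\ \kappa\in\K_0$ automatically has $\<U,\kappa\> = \lambda-1$, so $\kappa \in (\lambda-1)\B_0$ or $\kappa = 0$.

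Next I would treat the subfeasible value, writing $Q'(\omega)$ for the right-hand side of \eqref{eq:rob_subfeas_base}. The bound $P'(\omega) \leq Q'(\omega)$ is trivial: a base-form sequence $\xi_n = \lambda\sigma_n - (\lambda-1)\tau_n \to \omega$ has rescalings $\lambda\sigma_n \in \K_1$, $(\lambda-1)\tau_n \in \K_0$ witnessing subfeasibility with value $\<U,\lambda\sigma_n\> = \lambda$. For the reverse, I would take a subfeasible $\lambda$ with witnesses $\sigma_n \in \K_1,\ \tau_n \in \K_0$, $\sigma_n - \tau_n \to \omega$, $\<U,\sigma_n\> \to \lambda$. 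Decomposing $\sigma_n = \lambda_n\sigma_n'$ and $\tau_n = \mu_n\tau_n'$ along the bases (with $\lambda_n \to \lambda$ and $\mu_n = \lambda_n - \<U,\sigma_n-\tau_n\> \to \lambda-1$ by continuity of $\<U,\cdot\>$), I would compare the fixed-coefficient combination to the true one through the identity
\[
\lambda\sigma_n' - (\lambda-1)\tau_n' = (\sigma_n - \tau_n) + (\lambda-\lambda_n)\sigma_n' - \big((\lambda-1)-\mu_n\big)\tau_n'.
\]
Since $\sigma_n'\in\B_1$ and $\tau_n'\in\B_0$ are norm-bounded while $\lambda-\lambda_n \to 0$ and $(\lambda-1)-\mu_n \to 0$, both correction terms vanish in norm, so the left-hand side converges to $\omega$; this exhibits $\lambda$ as feasible for $Q'$, giving $Q'(\omega) \leq P'(\omega)$. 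The remark that normalized $\xi_n$ suffice is then immediate, since $\<U,\,\lambda\sigma_n' - (\lambda-1)\tau_n'\> = \lambda - (\lambda-1) = 1$ for every $n$.

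Finally I would derive the second line of \eqref{eq:rob_subnormalized} in the same spirit, writing $R'(\omega) \coloneqq \inf\{\lambda : \omega \in \lambda\cl(\B_1-\K_0)\}$. The inclusion $R'(\omega) \leq Q'(\omega)$ holds because $\lambda\sigma_n' - (\lambda-1)\tau_n' \in \lambda(\B_1-\K_0)$ and scaling by $\lambda>0$ commutes with closure. For the opposite inequality I would unfold $\omega \in \lambda\cl(\B_1-\K_0)$ as $\lambda(\sigma_n' - \kappa_n) \to \omega$ with $\sigma_n'\in\B_1,\ \kappa_n\in\K_0$, observe $\<U,\lambda\kappa_n\> \to \lambda-1$, and for $\lambda>1$ rewrite $\lambda\kappa_n = \nu_n\tau_n'$ with $\tau_n'\in\B_0$, $\nu_n \to \lambda-1$, absorbing the discrepancy $(\nu_n-(\lambda-1))\tau_n'$ via boundedness of $\B_0$ exactly as above. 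The boundary case $\lambda=1$ I would handle separately: there the coefficient of $\tau_n'$ vanishes, and since $\<U,\kappa_n\> \to 0$ forces $\norm{\kappa_n}_\Omega = \<U,\kappa_n\> \to 0$ for $\kappa_n\in\K_0\subseteq\C$, the sequence $\sigma_n' = (\sigma_n'-\kappa_n) + \kappa_n$ already converges to $\omega$. I expect this uniform absorption of the index-dependent normalizations — controllable only through the boundedness of $\B_0$ and $\B_1$ — to be the one step demanding care; the remainder is bookkeeping with the base decompositions together with the elementary identity $\norm{\cdot}_\Omega = \<U,\cdot\>$ on $\C$.
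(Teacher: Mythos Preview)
Your proof is correct and follows essentially the same route as the paper's: both arguments normalize the subfeasible witnesses $\sigma_n,\tau_n$ to base elements $\sigma_n',\tau_n'$ and then show that the fixed-coefficient combination $\lambda\sigma_n' - (\lambda-1)\tau_n'$ still converges to $\omega$ by bounding the correction terms via $\norm{\cdot}_\Omega = \<U,\cdot\>$ on $\C$ (equivalently, boundedness of the bases). Your treatment is in fact slightly more complete than the paper's, which dispatches the second line of \eqref{eq:rob_subnormalized} as ``immediate'' and does not separate out the boundary case $\lambda=1$; the only minor omission is that, like the paper, you should note that the decomposition $\tau_n = \mu_n\tau_n'$ tacitly assumes $\tau_n \neq 0$, with the degenerate case handled trivially.
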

\endgroup
\begin{proof}
The expression for $P(\omega)$ is obtained in the same way as for $\R_\F$ in Eqs.~\eqref{eq:rob_primal_nonlsc} and \eqref{eq:rob_primal_conic_expression}. 

For $P'(\omega)$, consider any subfeasible sequence $\{ \sigma_n - \tau_n \}_n \in \K_1 - \K_0$ converging to $\omega$ with $\< U, \sigma_n \> \to \lambda$. Notice that the fact that $\<U,\omega\> = 1$ requires that $\<U, \tau_n\> \to \lambda - 1$. Assuming that $\tau_n \neq 0$ (with the case of $\tau_n = 0$ proceeding similarly), we define
\begin{equation}\begin{aligned}
    \xi_n \coloneqq \lambda \frac{\sigma_n}{\< U, \sigma_n \>} - (\lambda - 1) \frac{\tau_n}{\< U, \tau_n \>}.
\end{aligned}\end{equation}
This gives
\begin{equation}\begin{aligned}
    \norm{\xi_n - \omega}_\Omega &= \norm{\xi_n - (\sigma_n - \tau_n) + (\sigma_n - \tau_n) - \omega}_\Omega\\
    &\leq \norm{\xi_n - (\sigma_n - \tau_n)}_\Omega + \norm{(\sigma_n - \tau_n) - \omega}_\Omega\\
    &\leq \norm{\left(\frac{\lambda}{\< U, \sigma_n \>} - 1\right) \sigma_n}_\Omega + \norm{\left(1 - \frac{\lambda - 1}{\<U, \tau_n\>}\right) \tau_n}_\Omega + \norm{(\sigma_n - \tau_n) - \omega}_\Omega\\
    &= \left|\lambda - \<U,\sigma_n\>\right| + \left|\<U,\tau_n\> - (\lambda-1)\right| + \norm{(\sigma_n - \tau_n) - \omega}_\Omega\\
    &\to 0,
\end{aligned}\end{equation}
where in the fourth line we used that $\norm{\xi}_\Omega = \<U, \xi\>$ for any $\xi \in \C$~\cite{hartkamper_1974}, and in the last line we used that $\{\sigma_n - \tau_n\}_n$ is subfeasible with subfeasible value $\lambda$, so each of the terms must converge to $0$. We have thus shown the existence of a sequence of the form in \eqref{eq:rob_subfeas_base}; since any such sequence is also a valid subfeasible sequence, the two statements of the problem are equivalent.

The characterization in Eq.~\eqref{eq:rob_subnormalized} is immediate by writing any $\sigma' \in \K_1$ as $\lambda \sigma$ with $\sigma \in \B_1$ and $\lambda = \<U, \sigma'\>$.
\end{proof}
\begin{remark}
We note an alternative formulation which can be easier to characterize. The set $\B_1 - \K_0$ is in fact equivalent to $\B_{1\leq} - \K_0$ where $\B_{1\leq}$ denotes subnormalized states in the cone $\K_1$, i.e., $\B_{1\leq} \coloneqq \lset  \sigma \in \K_1 \sbar \<U, \sigma \> \leq 1 \rset$. Indeed, the inclusion $\subseteq$ is obvious; for $\supseteq$, consider that any $x \in \B_{1\leq} - \K_0$ can be written as $x = \mu \sigma - \tau$ for some $\mu \in [0,1]$, $\sigma \in \B_1$, $\tau \in \K_0$. Since $\sigma \in \K_1$ and $\K_1 \subseteq \K_0$, we can then define $\tau' = \tau + (1-\mu) \sigma \in \K_0$ and write $x = \sigma - \tau' \in \B_1 - \K_0$.
\end{remark}

%%%%%%%%%%%%%%%%%%%%%%%%%%%%%%%%%%%%%%%%%%%%%%%%%%%%%%%%%%%%%%%%%%%%%%%%%%%%%%%%%%%%%%%%%%%%%%%%%

\section{Details of computations in Sec.~\ref{sec:nonclassicality}}\label{app:nonclassicality}

\begingroup
\renewcommand{\thethm}{\ref{prop:photon_added}}
\begin{pro}
Let $\ket{\zeta_r}_\pm$ be single-photon-added/subtracted squeezed vacuum states. Then,
\begin{equation}\label{eq:photon_added_lower2}
    e^{-r+1}\cosh^2(r) \leq \Rl_\C({\zeta_r}_\pm) \leq \frac{4e^{2r}}{3\sqrt{3}\sinh(r)}
\end{equation}
for all $r\geq 0$.

A tighter lower bound can be obtained for $r \leq \ln\sqrt{2} \approx 0.35$ as \begin{equation}
\Rl_\C({\zeta_r}_\pm) \geq e \cosh(r)^{-3},
\end{equation}
and for $r \geq \ln\sqrt{2}$ as
\begin{equation}
\Rl_\C({\zeta_r}_\pm) \geq \frac{4  e^{1+2r} }{27 \sinh(r)}.
\end{equation}
\end{pro}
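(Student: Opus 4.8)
The plan is to reduce every overlap to a matrix element of a single squeezing operator by exploiting the identity $\ket{\zeta_r}_+ = S(r)\ket{1}$, which holds because $a^\dagger S(r)\ket0 \propto S(r)a^\dagger\ket0 = S(r)\ket1$ (cf.\ Sec.~\ref{sec:nongaussianity}); combined with the already-established equality $\ket{\zeta_r}_+ = \ket{\zeta_r}_-$, this lets me treat only $\ket{\zeta_r}_+$ and evaluate all quantities through the Bogoliubov action $S^\dagger(t)\,a\,S(t) = a\cosh t + a^\dagger\sinh t$ and the number representation \eqref{eq:photon added number}.

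For the lower bounds I would apply Lemma~\ref{lem:robustness_lowerbound} with the probe $\omega = \ket{\zeta_q}_+\!\bra{\zeta_q}$, which gives
\begin{equation}
\Rl_\C({\zeta_r}_+) \geq \frac{\left|{}_+\braket{\zeta_q|\zeta_r}_+\right|^2}{\sup_{\alpha\in\CC}\left|\braket{\alpha|\zeta_q}_+\right|^2}.
\end{equation}
The numerator I would compute as $\left|{}_+\braket{\zeta_q|\zeta_r}_+\right|^2 = \left|\braket{1|S(r-q)|1}\right|^2 = \cosh^{-3}(r-q)$, by applying the Bogoliubov relation to $S(r-q)\ket1$ and using $\cosh^2-\sinh^2=1$. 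The denominator follows from $\braket{\alpha|\zeta_q}_+ = \alpha^*\braket{\alpha|\zeta_q}/\cosh q$ together with the Gaussian overlap $|\braket{\alpha|\zeta_q}|^2 = \cosh^{-1}(q)\,e^{-|\alpha|^2+\tanh(q)\Re(\alpha^2)}$; maximizing the resulting real Gaussian yields $\sup_\alpha|\braket{\alpha|\zeta_q}_+|^2 = e^{q-1}\cosh^{-2}(q)$. The ratio is thus $e^{1-q}\cosh^2(q)\cosh^{-3}(r-q)$, which I would optimize over $q\geq0$: the stationarity condition $-1+2\tanh q+3\tanh(r-q)=0$ is solved by $q^\star = \tfrac12\ln(2e^{2r}-3)$, admissible ($q^\star\geq0$) exactly when $r\geq\ln\sqrt2$. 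Substituting $q^\star$ gives $\tfrac{4}{27}e^{1+2r}\sinh^{-1}(r)$; for $r\leq\ln\sqrt2$ the constrained optimum sits at $q=0$, giving $e\cosh^{-3}(r)$ (the bound from the Fock state $\ket1$); and the choice $q=r$, valid for all $r$, gives the uniform bound $e^{1-r}\cosh^2(r)$.

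For the upper bound I would feed the classical squeezed-thermal ansatz $\sigma_s = S(s)\tau_{N(s)}S^\dagger(s)$ of \eqref{eq:thermal ansatz} into Lemma~\ref{lem:robustness_pure}. Using $\ket{\zeta_r}_+ = S(r)\ket1$ and $S^\dagger(r)S(s)=S(s-r)$, the quantity $g_{\rm PA}(r,s) = {}_+\braket{\zeta_r|\sigma_s^{-1}|\zeta_r}_+$ reduces to $\braket{1|S^\dagger(r-s)\,\tau_{N(s)}^{-1}\,S(r-s)|1}$, which I would expand in the number basis via \eqref{eq:photon added number}. The resulting series resums with the $(2n+1)$-weighted generating-function identity $\sum_{n}(2n+1)\binom{2n}{n}(w/4)^n = (1-w)^{-3/2}$, and after applying the hyperbolic identity $\sinh^2 s\cosh^2(r-s)-\cosh^2 s\sinh^2(r-s) = \sinh r\,\sinh(2s-r)$ it collapses to the closed form quoted in the statement. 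Minimizing over $s$ subject to $2s>r$ gives the stationarity condition $1+2\tanh s+2\coth s-3\coth(2s-r)=0$, solved by $s^\star = \tfrac14\ln(4e^{2r}-3)$; substituting this (the factors $\cosh^3(r-s)$ and the powers of $e^{2s}$ cancelling) yields $\tfrac{4e^{2r}}{3\sqrt3\,\sinh r}$.

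The obstacles here are computational rather than conceptual. The essential simplification is $\ket{\zeta_r}_+ = S(r)\ket1$, which turns every overlap into a Fock-space matrix element of one squeezing operator and makes both the clean value $\cosh^{-3}(r-q)$ and the upper-bound resummation tractable. The genuinely delicate steps are (i) verifying that the two transcendental stationarity conditions admit the closed-form roots $q^\star$ and $s^\star$ — each reduces, after substituting $e^{2q}$ or $e^{4s}$, to a rational identity in $e^{2r}$ that vanishes identically — and (ii) confirming that the threshold $r=\ln\sqrt2$ is precisely where $q^\star$ crosses zero, so the two lower-bound regimes meet without gap. The remaining arithmetic is routine and is carried out in the rest of this appendix.
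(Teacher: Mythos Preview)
Your proposal is correct and takes essentially the same approach as the paper: the same lemmas (Lemma~\ref{lem:robustness_lowerbound} with probe $\ket{\zeta_q}_+$ for the lower bounds, Lemma~\ref{lem:robustness_pure} with the squeezed-thermal ansatz $\sigma_s$ of~\eqref{eq:thermal ansatz} for the upper bound), the same closed forms for the ratio and for $g_{\rm PA}(r,s)$, and the same optimizers $q^\star=\tfrac12\ln(2e^{2r}-3)$ and $s^\star=\tfrac14\ln(4e^{2r}-3)$. Your systematic use of $\ket{\zeta_r}_+ = S(r)\ket{1}$ is a mild streamlining of the overlap computations; the paper's Appendix~\ref{app:nonclassicality} instead works directly from $\ket{\zeta_r}_+ \propto a^\dagger\ket{\zeta_r}$ and reaches the same intermediate expression ${}_+\braket{\zeta_{r-s}|\tau_{N(s)}^{-1}|\zeta_{r-s}}_+$ via the Bogoliubov relation applied to $S^\dagger(s)a^\dagger S(s)$ together with the equality $\ket{\zeta_{r-s}}_+ = \ket{\zeta_{r-s}}_-$.
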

\endgroup
\begin{proof}
The lower bounds all follow from \eqref{rob_lower_bound} with suitable choices of $\omega = \ket{\zeta_q}_+\!\bra{\zeta_q}$. Note that
\bal
    \sup_{\alpha\in \CC} \left|\braket{\alpha|\zeta_q}_+\right|^2 &= \sup_{\alpha\in \CC} \frac{|\braket{\alpha|a^\dagger|\zeta_q}|^2}{\cosh^2(q)} \\
    &= \sup_{\alpha\in \CC} \frac{1}{\cosh^3(q)}\, |\alpha|^2 e^{-|\alpha|^2+ \tanh(q) \Re(\alpha^2)}\\
    &= \sup_{|\alpha|\in [0,\infty)} \frac{1}{\cosh^3(q)}\, |\alpha|^2 e^{-|\alpha|^2+ \tanh(q) |\alpha|^2}\\
    &= \frac{e^{q-1}}{\cosh^2(q)} 
\eal
where in the last equation we used that the supremum is achieved at $|\alpha|^2 = (1-\tanh(q))^{-1}$ and
noting that $\cosh^3(q)(1-\tanh(q))=\cosh^2(q)e^{-q}$.
We also have 
\bal
 \left|{}_+\braket{\zeta_q|\zeta_r}_+\right|^2 = \frac{\left|\bra{\zeta_q}aa^\dagger\ket{\zeta_r}\right|^2}{\cosh^2(q)\cosh^2(r)}=\frac{1}{\cosh^3(r-q)}
\eal
where we used 
\bal
 \left|\bra{\zeta_q}aa^\dagger\ket{\zeta_r}\right|^2 &= \frac{1}{\cosh(r)\cosh(q)}\left(\sum_{n=0}^\infty \frac{2n+1}{4^n}\binom{2n}{n}\left(\tanh(r)\tanh(q)\right)^n\right)^2\\
 &= \frac{1}{\cosh(r)\cosh(q)(1-\tanh(r)\tanh(q))^3} = \frac{\cosh^2(r)\cosh^2(q)}{\cosh^3(r-q)}.
\eal
This gives
\bal
 \frac{\left|{}_+\braket{\zeta_q|\zeta_r}_+\right|^2}{\sup_{\alpha\in \CC} \left|\braket{\alpha|\zeta_q}_+\right|^2} = \frac{e^{-q+1}\cosh^2(q)}{\cosh^3(r-q)}
\eal
The general lower bound in Eq.~\eqref{eq:photon_added_lower2} follows by choosing $q=r$. For $r \leq \ln\sqrt{2}$, we get a tighter bound with the choice of $q = 0$, and for $r \geq \ln\sqrt{2}$ we can choose $\displaystyle q = \frac{1}{2} \ln\left(2e^{2r}-3\right)$.

To obtain an upper bound, let us consider the ansatz in \eqref{eq:thermal ansatz}.
\bal
    g_{\rm PA}(r,s) \coloneqq&\ {}_+\braket{\zeta_r | \sigma_s^{-1} | \zeta_r}_+ \\
    =&\ \frac{1}{\cosh^2(r)}\braket{0 | S^\dag (r) a S(s) \tau_{N(s)}^{-1} S^\dag (s) a^\dagger S(r) | 0} \\
    =&\ \frac{1}{\cosh^2(r)}\braket{\zeta_{r-s} | S^\dag (s) a S(s) \tau_{N(s)}^{-1} S^\dag (s) a^\dagger S(s) | \zeta_{r-s}} \\
    =&\ \frac{1}{\cosh^2(r)}\left(\cosh(s){}_+\bra{\tilde\zeta_{r-s}} +\sinh(s) {}_-\bra{\tilde\zeta_{r-s}}\right) \tau_{N(s)}^{-1} \left(\cosh(s)\ket{\tilde\zeta_{r-s}}_+ +\sinh(s) \ket{\tilde\zeta_{r-s}}_-\right),
\eal
where $\ket{\tilde\zeta_{r-s}}_+ = a^\dagger\ket{\zeta_{r-s}}$, $\ket{\tilde\zeta_{r-s}}_- = a\ket{\zeta_{r-s}}$ are unnormalized single-photon added and subtracted states. 
Since $\ket{\zeta_{r-s}}_+=\ket{\zeta_{r-s}}_-$ and thus $\ket{\tilde\zeta_{r-s}}_-=\tanh(r-s)\ket{\tilde\zeta_{r-s}}_+$, we have
\bal
 g_{\rm PA}(r,s)&= \left(\frac{\cosh(s)+\sinh(s)\tanh(r-s)}{\cosh(r)}\right)^2{}_+\braket{\tilde\zeta_{r-s}|\tau_{N(s)}^{-1}|\tilde\zeta_{r-s}}_+\\
 &= \frac{1}{\cosh^2(r-s)}{}_+\braket{\tilde\zeta_{r-s}|\tau_{N(s)}^{-1}|\tilde\zeta_{r-s}}_+
\eal
Using $\tau_{N(s)}^{-1}=\sum_{n=0}^\infty\left(\frac{e^{-s}}{\cosh(s)}\tanh^n(s)\right)^{-1}\dm{n}$, we get 
\bal
 {}_+\bra{\tilde\zeta_{r-s}} \tau_{N(s)}^{-1} \ket{\tilde\zeta_{r-s}}_+  &= \frac{e^s\cosh(s)}{\cosh(r-s)\tanh(r-s)} \sum_{n=0}^\infty \frac{2n+1}{4^n} \binom{2n}{n} \left(\frac{\tanh(r-s)}{\tanh(s)}\right)^{2n+1}\\
 &= \frac{e^s\cosh(s)}{\cosh(r-s)\tanh(r-s)} \frac{\tanh(r-s)}{\tanh(s)}\frac{1}{\left(1-\left(\frac{\tanh(r-s)}{\tanh(s)}\right)^2\right)^{3/2}}\\
  &= \frac{e^s\cosh(s)}{\cosh(r-s)}\frac{\tanh^2(s)}{\left(\tanh^2(s)-\tanh^2(r-s)\right)^{3/2}}
 \eal
 where in the second equality we used $\sum_{n=0}^\infty\frac{2n+1}{4^n}\binom{2n}{n}x^{2n+1}=x\left(\sum_{n=0}^\infty\frac{1}{4^n}\binom{2n}{n}x^{2n+1}\right)'=x\left(x/\sqrt{1-x^2}\right)'=x/(1-x^2)^{3/2}$ for $|x|<1$.
Then, we get
\bal
g_{\rm PA}(r,s) &= \frac{e^s\cosh(s)}{\cosh^3(r-s)}\frac{\tanh^2(s)}{\left(\tanh^2(s)-\tanh^2(r-s)\right)^{3/2}}\\
&= \frac{\cosh(s)\sinh^2(s)}{(1-\tanh(s))\left(\sinh(r)\sinh(2s-r)\right)^{3/2}}
\label{eq:photon added upper bound}
\eal

$g_{\rm PA}(r,s)$ achieves its minimum at $s=s_0(r):=\frac{1}{4}\ln\left(4e^{2r}-3\right)$ and it can be checked (after tedious calculation) that $g_{\rm PA}(r,s_0(r))=\frac{4e^{2r}}{3\sqrt{3}\sinh(r)}$, which gives the upper bound.
\end{proof}

%%%%%%%%%%%%%%%%%%%%%%%%%%%%%%%%%%%%%%%%%%%%%%%%%%%%%%%%%%%%%%%%%%%%%%%%%%%%%%%%%%%%%%%%%%%%%%%%%
%%%%%%%%%%%%%%%%%%%%%%%%%%%%%%%%%%%%%%%%%%%%%%%%%%%%%%%%%%%%%%%%%%%%%%%%%%%%%%%%%%%%%%%%%%%%%%%%%
%%%%%%%%%%%%%%%%%%%%%%%%%%%%%%%%%%%%%%%%%%%%%%%%%%%%%%%%%%%%%%%%%%%%%%%%%%%%%%%%%%%%%%%%%%%%%%%%%

\twocolumngrid
\bibliographystyle{apsrev4-1a}
\bibliography{main}

\end{document}